\theoremstyle{plain}
\newtheorem{theorem}{Theorem}[section]
\theoremstyle{plain}
\newtheorem{proposition}[theorem]{Proposition}
\theoremstyle{plain}
\newtheorem{lemma}[theorem]{Lemma}
\theoremstyle{plain}
\newtheorem{corollary}[theorem]{Corollary}
\theoremstyle{plain}
\theoremstyle{plain}
\newtheorem{definition}[theorem]{Definition}
\theoremstyle{plain}
\theoremstyle{remark}
\newtheorem{remark}[theorem]{Remark}
\theoremstyle{remark}
\newtheorem{example}[theorem]{Example}
\theoremstyle{remark}
\title
[Analytic homogeneous QW on $\mathbb{Z}$]
{Space-homogeneous quantum walks on $\mathbb{Z}$\\
from the viewpoint of complex analysis}
\author{Hayato Saigo}
\address
{Nagahama Institute of Bio-Sciences and Technology, Tamura, Nagahama, 526-0829, Japan}
\email
{h\_saigoh@nagahama-i-bio.ac.jp}
\author{Hiroki Sako}
\address
{Faculty of Engineering, Niigata University, Nishi-ku, Niigata 950-2181, Japan}
\email
{sako@eng.niigata-u.ac.jp}
\subjclass[2010]{46L99, 60F05, 81Q99}
\begin{document}
\maketitle

\begin{abstract}
The subject of this paper is quantum walks, which are expected to simulate several kinds of quantum dynamical systems.
In this paper, we define {\it analyticity} for quantum walks on $\mathbb{Z}$. 
Almost all the quantum walks on $\mathbb{Z}$ which have been already studied are analytic.
In the framework of analytic quantum walks,
we can enlarge the theory of quantum walks.
We
obtain not only several generalizations of known results, but also new types of theorems.
It is proved that every analytic space-homogeneous quantum walk on $\mathbb{Z}$ is essentially a composite of shift operators and continuous-time analytic space-homogeneous quantum walks. 
We also prove existence of the weak limit distribution for analytic space-homogeneous quantum walks on $\mathbb{Z}$.
\end{abstract}

\section{Introduction}

In this paper, we study a kind of dynamical systems called {\it quantum walks}.
Many researchers have already studied the subject in several different frameworks (see \cite{Meyer}, \cite{ABNVW} for example). 
They commonly make use of the following items:
\begin{itemize}
\item
the Hilbert space  $\mathcal{H} = \ell_2 (X) \otimes \mathbb{C}^n$
defined on a (discrete) metric space $X$.
\item
a unitary operator $U$ on $\ell_2 (X) \otimes \mathbb{C}^n$, 
\item
and a unit vector $\xi$ in the Hilbert space.
\end{itemize}
The sequence (or $1$-parameter family) of unit vectors $\{U^t \xi\}_{t}$ defines a probability measure on the space $X$, which has attracted much attention.

In this paper, we focus on the case that the space $X$ is given by the set $\mathbb{Z}$ of integers and that the unitary operator $U$ is space-homogeneous.
The following are aims of this paper:
\begin{itemize}
\item
We will prove a structure theorem on such a walk $U$ (Theorem \ref{theorem: structure theorem}).
%The theorem means that $U$ is essentially a model quantum walk or a direct sum of several model quantum walks.
%\item
%We will construct {\it model quantum walks} in Subsection \ref{subsection: model quantum walk}
\item
The walk $U$ can be constructed from shift operators and continuous-time space-homogeneous quantum walks on $\mathbb{Z}$ (Theorem \ref{theorem: construction from CTQW}).
\item
We will prove that the eigenvalue functions introduced in Definition \ref{definition: system of eigenvalue functions} determines when $U$ is a restriction of a continuous-time space-homogeneous quantum walk (Theorem \ref{theorem: realization by CTQW}).
\item
The walk $U$ always has weak limit distribution for every initial unit vector $\xi$ which rapidly decreases. Its precise statement is given in Theorem \ref{theorem: limit distribution}.
\end{itemize}
Theorem \ref{theorem: structure theorem} means that every space-homogeneous quantum walk on $\mathbb{Z}$ is essentially a direct sum of {\it model quantum walks}.
The model quantum walk $U_{d, \lambda}$ is
introduced in subsection \ref{subsection: model quantum walk}.
They are labeled by a pair of a natural number $d$ and an analytic map $\lambda \colon \mathbb{T} \to \mathbb{T}$.
Our new framework of quantum walks is large enough to include all the model quantum walks.

To state above theorems in this paper, we need to clarify the definition of quantum walks.
Throughout this paper, we always require {\it analyticity} in the sense of Definition \ref{definition: analyticity} for quantum walks. 
This assumption is so weak that almost all known examples satisfy.
We need complex analysis on the Fourier dual of the quantum walk.
Requirement of analyticity on the walk enables us to use Riemann surfaces.
Many mathematicians and physicists have already used Fourier analysis on quantum walks.
Combining with complex analysis, we can extend the study further.

In Section \ref{section: convergence theorem}, we prove existence of the weak limit distribution for every analytic space-homogeneous quantum walk.
Grimmet, Janson, and Scudo stated this theorem in \cite{GJS} and 
our argument follows their excellent idea.
The paper \cite{GJS} does not explicitly define quantum walks.
This is one of the reasons why it is difficult for the readers to check 
the claims in \cite{GJS}.
They put explicit and implicit assumptions (see Remark
\ref{remark: our advantage}).
In Subsection \ref{subsection: modified Hadamard walk},
we construct an example 
for which an implicit assumption in \cite{GJS} does not hold.
The authors think that it is unnatural to exclude the example
from the class of quantum walks.

In Section \ref{section: algebraic definition}, 
we prove that every analytic space-homogeneous quantum walk on $\mathbb{Z}$ is a solution of algebraic equation, whose coefficients are elements of an operator algebra.
The authors expect that there might be more algebraic way of the definition of quantum walks, which enlarges the scope of our study further.

In Section \ref{section: examples}, we examine a new example of quantum walks, as well as known examples. If the readers want to start with concrete examples, the authors recommend them to see Section \ref{section: examples} first.

\section{Preliminary on vector-valued analytic maps}
This paper completely relies on complex analysis.
To study the inverse Fourier dual of quantum walk,
in Subsection \ref{subsection: analytic section}, we construct analytic sections of eigenvectors.
For the argument, we prepare a couple of lemmata.
Let $\mathbb{T}$ be the set of complex numbers whose absolute values are $1$.

\begin{lemma}\label{lemma: unit vector}
Let ${\bf x} \colon \mathbb{T} \to \mathbb{C}^n$ be an analytic map.
Suppose that the map $\bf x$ is not the constant map $\bf 0$.
Then there exists an analytic map
${\bf v} \colon  \mathbb{T} \to \mathbb{C}^n$ satisfying the following conditions:
\begin{itemize}
\item
for every $z \in \mathbb{T}$, $\|{\bf v}(z)\| = 1$,
\item
for every $z \in \mathbb{T}$, ${\bf x}(z) \in \mathbb{C} {\bf v}(z)$.
\end{itemize}
\end{lemma}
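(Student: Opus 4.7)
The plan is to first peel off a scalar factor that contains all zeros of ${\bf x}$ on $\mathbb{T}$, and then to normalize the resulting nowhere-vanishing analytic map by an analytic square root of its squared norm.

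First I would invoke analyticity to extend ${\bf x}$ to a holomorphic map on some open annulus $A \supset \mathbb{T}$. Since ${\bf x} \not\equiv {\bf 0}$, the identity principle forces the zero set of ${\bf x}$ on $\mathbb{T}$ to be finite, say $\{z_1,\dots,z_m\}$. Let $k_j$ be the order of vanishing of ${\bf x}$ at $z_j$ (the minimum among the orders of vanishing of the components). Setting $h(z):=\prod_{j=1}^m (z-z_j)^{k_j}$ and ${\bf y}(z):={\bf x}(z)/h(z)$, the quotient ${\bf y}$ extends to an analytic map on a neighborhood of $\mathbb{T}$ that is nonzero at every point of $\mathbb{T}$.

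Next I would build an analytic square root of $z \mapsto \|{\bf y}(z)\|^2$. Writing the components as Laurent series $y_i(z)=\sum_k a_{i,k} z^k$, the series $y_i^{\ast}(z):=\sum_k \overline{a_{i,k}}\, z^{-k}$ also converges on an annulus containing $\mathbb{T}$, and $f(z):=\sum_i y_i(z)\, y_i^{\ast}(z)$ is holomorphic on a neighborhood of $\mathbb{T}$ and agrees with $\|{\bf y}(z)\|^2$ there. Since ${\bf y}$ is nowhere zero on $\mathbb{T}$, $f$ is strictly positive on $\mathbb{T}$; in particular $f|_{\mathbb{T}}$ takes values in $(0,\infty)$, so its winding number about $0$ vanishes, and $\log f$ admits a single-valued holomorphic branch on a sufficiently thin annulus about $\mathbb{T}$. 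Setting $g:=\exp\!\bigl((\log f)/2\bigr)$ yields a nowhere-zero holomorphic function with $g^2=f$, whence $|g(z)|^2=|f(z)|=f(z)=\|{\bf y}(z)\|^2$ for every $z\in\mathbb{T}$. Finally I would define ${\bf v}(z):={\bf y}(z)/g(z)$; this is analytic on a neighborhood of $\mathbb{T}$, has $\|{\bf v}(z)\|=1$ on $\mathbb{T}$ by the identity above, and satisfies ${\bf x}(z)=h(z)g(z){\bf v}(z)\in\mathbb{C}{\bf v}(z)$ for all $z$.

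The main obstacle I anticipate is the construction of the analytic square root $g$: one must simultaneously produce a holomorphic extension of the a priori merely real-analytic function $\|{\bf y}(z)\|^2$ (via the Schwarz-reflection-type involution $a_{i,k} z^k \mapsto \overline{a_{i,k}} z^{-k}$) and verify the winding-number condition that permits a single-valued holomorphic logarithm. Once the square root is in hand, the factorization ${\bf x}=hg{\bf v}$ makes the two required properties of ${\bf v}$ a direct verification.
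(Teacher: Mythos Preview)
Your proof is correct and a bit different from the paper's. The paper does not factor out the zeros in advance; instead it works directly with $\sum_k x_k^*(z)x_k(z)$ (your $f$, but built from ${\bf x}$ rather than from a nowhere-vanishing ${\bf y}$), takes real-analytic square roots $\|{\bf x}\|_+$ and $\|{\bf x}\|_-$ on the two arcs $\mathbb{T}\setminus\{-1\}$ and $\mathbb{T}\setminus\{1\}$, observes that ${\bf x}/\|{\bf x}\|_\pm$ has only removable singularities at the zeros of ${\bf x}$, and then glues the two arcs together. This forces a case split: either the two local square roots match on both arcs of overlap, or they differ by a sign on one arc, in which case one must multiply by $\exp(i\theta/2)$ to kill the monodromy. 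Your route---first divide by the polynomial $h(z)=\prod_j (z-z_j)^{k_j}$ so that the squared norm becomes strictly positive on all of $\mathbb{T}$, then take a single global holomorphic square root via the winding-number/holomorphic-logarithm argument---avoids that case analysis entirely. The paper's approach makes the potential topological obstruction (a sign flip around the circle) visible, while yours absorbs it up front into the scalar factor $h$, which need not have trivial winding number.
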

We call $\bf v$ {\it a normalization} of $\bf x$.

\begin{proof}
For $k = 1, 2, \cdots, n$, let $x_k \colon \mathbb{T} \to \mathbb{C}$ be the $k$-th entry of
the analytic map ${\bf x} \colon \mathbb{T} \to \mathbb{C}^n$.
There exists an open neighborhood $\Omega$ of $\mathbb{T}$ such that
$\Omega$ is invariant under the reflexion $z \mapsto \dfrac{1}{\ \overline z \ }$ and
that $x_k \colon \mathbb{T} \to \mathbb{C}$ admits an analytic (or holomorphic) extension
\[x_k \colon \Omega \to \mathbb{C},\]
for every $k$.
Define $x_k^* \colon \Omega \to \mathbb{C}$ by the reflection $\overline {x_k \left( \dfrac{1}{\ \overline z\ } \right)}$. Note 
that $x_k^*$ is analytic on $\Omega$ and
that the equation $x_k^* = \overline{x_k}$ holds on $\mathbb{T}$.

Consider the analytic function
\[z \mapsto \sum_{k = 1}^n x_k^*(z) x_k(z) \]
defined on $\Omega$.
If the function has no zero on $\mathbb{T}$, then the map
\[{\bf v}(z) = \dfrac{{\bf x}(z)}{\sqrt{\sum_{k = 1}^n x_k^*(z) x_k(z)}},\]
satisfies the conditions in the lemma.

Consider the case that 
there exists a zero of analytic function
\[z \mapsto \sum_{k = 1}^n x_k^*(z) x_k(z) \]
on $\mathbb{T}$.
Note that the order of the zero on $\mathbb{T}$ is even. There exist analytic functions
\begin{eqnarray*}
\| {\bf x} \|_+ \colon \mathbb{T}\setminus \{-1\} &\to& \mathbb{R},\\
\| {\bf x} \|_- \colon \mathbb{T}\setminus \{1\} &\to& \mathbb{R}.
\end{eqnarray*}
satisfying that
\begin{eqnarray*}
\| {\bf x} \|_+(z)^2 &=& \sum_{k = 1}^n x_k^*(z) x_k(z), 
\quad z \in \mathbb{T}\setminus \{-1\},\\
\| {\bf x} \|_-(z)^2 &=& \sum_{k = 1}^n x_k^*(z) x_k(z),
\quad z \in \mathbb{T}\setminus \{1\}.
\end{eqnarray*}
and that $\| {\bf x} \|_+(z) = \| {\bf x} \|_-(z)$ on the intersection of the upper half plane and $\mathbb{T}$.
Note that on the intersection of the lower half plane and $\mathbb{T}$, $\| {\bf x} \|_-(z)$ is identical to $\| {\bf x} \|_+(z)$ or $-\| {\bf x} \|_+(z)$.

If $z \in \mathbb{T}$ is a zero of $\| {\bf x} \|_+$ or $\| {\bf x} \|_-$, then the order coincides with
\[\min_{1 \le k \le n} (\textrm{the\ order\ of\ zero\ of\ } x_k \textrm{\ at\ } z).\]
This means that the singular points of
\[z \mapsto \dfrac{{\bf x}(z)}{ \| {\bf x} \|_+(z)}, 
\quad z \mapsto \dfrac{{\bf x}(z)}{ \| {\bf x} \|_-(z)}, \]
on $\mathbb{T}$ are removable.
If $\| {\bf x} \|_-(z)$ is identical to $\| {\bf x} \|_+(z)$ on the intersection of the lower half plane and $\mathbb{T}$, define ${\bf v}(z)$ by 
\[\dfrac{{\bf x}(z)}{ \| {\bf x} \|_+(z)} = \dfrac{{\bf x}(z)}{ \| {\bf x} \|_-(z)}.\]
We finish the proof in such a case.

Consider the case that
\begin{eqnarray*}
\| {\bf x} \|_-(z) &=& \| {\bf x} \|_+(z), \quad z \in \mathbb{T}, \mathrm{Im}(z) > 0,\\
\| {\bf x} \|_-(z) &=& - \| {\bf x} \|_+(z), \quad z \in \mathbb{T}, \mathrm{Im}(z) < 0.
\end{eqnarray*}
Define $\bf v$ by 
\begin{eqnarray*}
{\bf v}(\exp(i\theta_1)) 
&=& \exp \left( \dfrac{i \theta_1}{2} \right) \dfrac{{\bf x}(\exp(i \theta_1))}{ \| {\bf x} \|_+(\exp(i \theta_1))},
\quad - \pi < \theta_1 < \pi,
\\
{\bf v}(\exp(i\theta_2)) 
&=& \exp \left( \dfrac{i \theta_2}{2} \right) \dfrac{{\bf x}(\exp(i \theta_2))}{ \| {\bf x} \|_-(\exp(i \theta_2))},
\quad 0 < \theta_2 < 2\pi.
\end{eqnarray*}
This defines a single-valued function, because for every $-\pi < \theta_1 < \pi$ and $\pi < \theta_2 < 2 \pi$, if $\theta_2 - \theta_1 = 2 \pi$, then $\exp(i \theta_1/2) = - \exp(i \theta_2/2)$.
\end{proof}

\begin{lemma}\label{lemma: orthonormal basis}
Let ${\bf x}^{(1)}, {\bf x}^{(2)}, \cdots, {\bf x}^{(d)} \colon \mathbb{T} \to \mathbb{C}^n$ be a collection of analytic maps.
Suppose that on a coset of a finite subset of $\mathbb{T}$, the vectors
\[\{ {\bf x}^{(1)}(z), {\bf x}^{(2)}(z), \cdots, {\bf x}^{(d)}(z) \}\] 
are linearly independent.
Then there exist analytic maps
\[{\bf v}^{(1)}, {\bf v}^{(2)}, \cdots, {\bf v}^{(d)} \colon \mathbb{T} \to \mathbb{C}^n\] satisfying the following conditions:
\begin{itemize}
\item
for every $z \in \mathbb{T}$, $\{ {\bf v}^{(1)}(z), {\bf v}^{(2)}(z), \cdots, {\bf v}^{(d)}(z) \}$ forms an orthonormal system,
\item
for every $z \in \mathbb{T}$, ${\bf x}^{(1)}(z), {\bf x}^{(2)}(z), \cdots, {\bf x}^{(d)}(z)$ are elements of  the linear span of $\{{\bf v}^{(1)}(z), {\bf v}^{(2)}(z), \cdots, {\bf v}^{(d)}(z)\}$.
\end{itemize}
\end{lemma}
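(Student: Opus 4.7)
The plan is to prove this by induction on $d$, using Gram--Schmidt orthogonalization together with Lemma \ref{lemma: unit vector} at each step to handle normalization. For $d = 1$, the hypothesis guarantees that ${\bf x}^{(1)}$ is not the constant map ${\bf 0}$, so Lemma \ref{lemma: unit vector} supplies ${\bf v}^{(1)}$ directly.

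For the inductive step, suppose analytic orthonormal maps ${\bf v}^{(1)}, \ldots, {\bf v}^{(d-1)}$ have been constructed so that $\mathrm{span}\{{\bf v}^{(1)}(z), \ldots, {\bf v}^{(d-1)}(z)\}$ contains each ${\bf x}^{(j)}(z)$ for $j < d$. I would set
\[{\bf y}(z) = {\bf x}^{(d)}(z) - \sum_{k=1}^{d-1} \langle {\bf x}^{(d)}(z), {\bf v}^{(k)}(z) \rangle \, {\bf v}^{(k)}(z).\]
The standard Gram--Schmidt computation gives $\langle {\bf y}(z), {\bf v}^{(k)}(z) \rangle = 0$ for every $k < d$ and every $z \in \mathbb{T}$, and ${\bf x}^{(d)}(z) - {\bf y}(z)$ lies in $\mathrm{span}\{{\bf v}^{(1)}(z), \ldots, {\bf v}^{(d-1)}(z)\}$. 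Applying Lemma \ref{lemma: unit vector} to ${\bf y}$ yields a normalization ${\bf v}^{(d)}$; since ${\bf v}^{(d)}(z) \in \mathbb{C} \cdot {\bf y}(z)$ wherever ${\bf y}(z) \neq 0$, orthogonality with the earlier ${\bf v}^{(k)}(z)$ is inherited by continuity on all of $\mathbb{T}$. The span condition for ${\bf x}^{(d)}(z)$ then holds at generic points through ${\bf v}^{(d)}$, and at the isolated exceptional points (where ${\bf y}(z) = 0$) through the identity ${\bf x}^{(d)}(z) = \sum_{k<d} \langle {\bf x}^{(d)}(z), {\bf v}^{(k)}(z)\rangle {\bf v}^{(k)}(z)$.

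Two points require verification. First, ${\bf y}$ must be analytic on $\mathbb{T}$ even though the Hermitian inner product involves complex conjugation of the entries of ${\bf v}^{(k)}$. I would handle this by the same reflection device used in the proof of Lemma \ref{lemma: unit vector}: writing $v^*(z) = \overline{v(1/\overline{z})}$ for each coordinate converts $\overline{v}|_{\mathbb{T}}$ into the restriction of an analytic function, so each inner product $\langle {\bf x}^{(d)}(z), {\bf v}^{(k)}(z)\rangle$, and hence ${\bf y}$, is analytic. Second, Lemma \ref{lemma: unit vector} requires ${\bf y}$ to not be identically ${\bf 0}$. This is exactly where the generic linear independence enters: at any $z$ outside the finite exceptional set, ${\bf x}^{(d)}(z) \notin \mathrm{span}\{{\bf x}^{(1)}(z), \ldots, {\bf x}^{(d-1)}(z)\} = \mathrm{span}\{{\bf v}^{(1)}(z), \ldots, {\bf v}^{(d-1)}(z)\}$, so ${\bf y}(z) \neq 0$.

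The main obstacle is not conceptual but is the analyticity bookkeeping for ${\bf y}$, because ordinary conjugation is antiholomorphic; once the reflection trick from the previous lemma is invoked, the Gram--Schmidt corrections are analytic in the paper's sense, and the induction closes routinely.
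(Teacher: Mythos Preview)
Your proposal is correct and follows essentially the same approach as the paper: Gram--Schmidt orthogonalization, with Lemma~\ref{lemma: unit vector} supplying an analytic normalization at each step, the reflection $v^*(z)=\overline{v(1/\overline z)}$ used to make the inner products analytic on $\mathbb{T}$, and continuity used to extend orthogonality from the cofinite set to all of $\mathbb{T}$. The paper presents the argument iteratively (constructing ${\bf v}^{(1)},{\bf v}^{(2)},{\bf v}^{(3)},\ldots$ and then saying ``repeating this procedure'') rather than as a formal induction, and at each stage it explicitly records that the spans of $\{{\bf x}^{(j)}\}$ and $\{{\bf v}^{(j)}\}$ coincide generically---a point you use implicitly in the dimension count---but otherwise the proofs are the same.
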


\begin{proof}
The Gram--Schmidt process works in our framework.

Because the map ${\bf x}^{(1)}$ is analytic and not the constant map $\bf 0$,
by Lemma \ref{lemma: unit vector},
there exists an analytic map ${\bf v}^{(1)}$ which is a normalization of ${\bf x}^{(1)}$.
Note that on a coset of a finite subset of $\mathbb{T}$, $\mathbb{C} {\bf x}^{(1)} (z) = \mathbb{C} {\bf v}^{(1)}(z)$.

On $\mathbb{T}$, define ${\bf y}^{(2)}(z)$ by 
\[{\bf y}^{(2)}(z) = {\bf x}^{(2)}(z) 
- \langle{\bf x}^{(2)} (z),  {\bf v}^{(1)}(z) \rangle {\bf v}^{(1)}(z). \]
For $k = 1, 2, \cdots, n$, the complex conjugate $\overline{v^{(1)}_k(z)}$ 
of the $k$-th entry of ${\bf v}^{(1)}(z)$ is an analytic function on $\mathbb{T}$, 
because it is identical to 
$\overline{v^{(1)}_k \left( \dfrac{1}{\ \overline z \ } \right)}$.
It follows that the map ${\bf y}^{(2)}(z)$ is also an analytic map on $\mathbb{T}$.
Because on a coset of finite subset of $\mathbb{T}$,
${\bf x}^{(2)}(z)$ is linearly independent of $\mathbb{C} {\bf v}^{(1)}(z)$,
${\bf y}^{(2)}$ is not the constant map $\bf 0$.
Again by Lemma \ref{lemma: unit vector}, 
there exists an analytic map ${\bf v}^{(2)}$ which is a normalization of ${\bf y}^{(2)}$.
Because ${\bf v}^{(2)}(z)$ is perpendicular to ${\bf v}^{(1)}(z)$
on a coset of a finite subset of $\mathbb{T}$, $\{{\bf v}^{(1)}(z), {\bf v}^{(2)}(z)\}$ forms an orthonormal system on the coset.
By the continuity, the vectors form an orthonormal system for every  $z \in \mathbb{T}$.
On $\mathbb{T}$, we have
\[{\bf x}^{(1)}(z) \in \mathbb{C}{\bf v}^{(1)}(z), \quad {\bf y}^{(2)}(z) = {\bf x}^{(2)}(z) 
- \langle {\bf x}^{(2)} (z), {\bf v}^{(1)}(z) \rangle {\bf v}^{(1)}(z) \in \mathbb{C}{\bf v}^{(2)}(z).\]
It follows that
\[\mathrm{span} \{{\bf x}^{(1)}(z), {\bf x}^{(2)}(z)\} \subset \mathrm{span}\{ {\bf v}^{(1)}(z), {\bf v}^{(2)}(z)\}.\]
On a coset of a finite subset of $\mathbb{T}$, the vectors ${\bf x}^{(1)}(z)$, 
${\bf x}^{(2)}(z)$ are linearly independent, and therefore the above two subspaces coincide.

On $\mathbb{T}$, define ${\bf y}^{(3)}(z)$ by 
\[{\bf y}^{(3)}(z) = {\bf x}^{(3)}(z) 
- \langle {\bf x}^{(3)} (z), {\bf v}^{(1)}(z) \rangle {\bf v}^{(1)}(z)
- \langle {\bf x}^{(3)} (z), {\bf v}^{(2)}(z) \rangle {\bf v}^{(2)}(z),
\quad z \in \mathbb{T}. \]
Since the complex conjugates of the entries of ${\bf v}^{(1)}$, ${\bf v}^{(2)}$ are analytic functions on $\mathbb{T}$, 
the map ${\bf y}^{(3)}$ is also analytic on $\mathbb{T}$.
Because on a coset of a finite subset of $\mathbb{T}$,
${\bf x}^{(3)}(z)$ is linearly independent of 
$ \mathrm{span}\{ {\bf v}^{(1)}(z), {\bf v}^{(2)}(z)\}
=
\mathrm{span} \{{\bf x}^{(1)}(z), {\bf x}^{(2)}(z)\}$,
${\bf y}^{(3)}$ is not the constant map $\bf 0$.
Again by Lemma \ref{lemma: unit vector},
there exists an analytic map ${\bf v}^{(3)}$ which is a normalization of ${\bf y}^{(3)}$.
On a coset of a finite subset of $\mathbb{T}$, $\{{\bf v}^{(1)}(z), {\bf v}^{(2)}(z), {\bf v}^{(3)}(z)\}$ forms an orthonormal system.
By the continuity, on $\mathbb{T}$, the system is orthonormal.
On $\mathbb{T}$, we have
${\bf y}^{(3)}(z) 
\in \mathbb{C}{\bf v}^{(3)}(z)$.
It follows that
\[\mathrm{span} \{{\bf x}^{(1)}(z), {\bf x}^{(2)}(z), {\bf x}^{(3)}(z)\} 
\subset 
\mathrm{span}\{ {\bf v}^{(1)}(z), {\bf v}^{(2)}(z), {\bf v}^{(3)}(z)\}.\]
On a coset of a finite subset of $\mathbb{T}$, the above two subspaces coincide.

Repeating this procedure, we obtain an orthonormal system in the lemma.
\end{proof}

We note that two subspaces
\[\mathrm{span} \{{\bf x}^{(1)}(z), \cdots, {\bf x}^{(d)}(z)\} \subset \mathrm{span}\{ {\bf v}^{(1)}(z), \cdots, {\bf v}^{(d)}(z)\}\]
coincide on a coset of a finite subset of $\mathbb{T}$.

\section{Quantum walks on $\ell_2(\mathbb{Z}) \otimes \mathbb{C}^n$ and Analyticity}
\label{section: analyticity}

The subject of this paper is a unitary operator satisfying the following conditions.

Let $n$ be a natural number.
Consider a bounded linear operator $X$ on $\ell_2(\mathbb{Z}) \otimes \mathbb{C}^n$.
The matrix expression
\[ [X((s, k), (t, l))]_{(s, k), (t, l) \in \mathbb{Z} \times \{1, 2, \cdots, n\}}\]
of $X$
is given by
\[X((s, k), (t, l)) = \langle X (\delta_t \otimes \delta_l), \delta_s \otimes \delta_k\rangle.\]
\begin{definition}\label{definition: analyticity}
\begin{enumerate}
\item\label{item of definition: C-infty}
The operator $X$ is said to be {\rm in the C$^\infty$-class}, if
for every natural number $N$, the set
\[ \{ (1 + |s - t|^2)^N X((s, k), (t, l)) \ |\ s, t \in \mathbb{Z}, k, l \in \{1, 2, \cdots, n\} \}\]
is bounded.
\item\label{item of definition: analytic}
The operator $X$ is said to be {\rm analytic}, if
there exist constants $0 < c$ and $1 < r$ satisfying that
for every $k, l, s, t$
\[|X((s, k), (t, l))| \le c r^{-|s - t|}\]
\item\label{item of definition: finite propagation}
The operator $X$ is said to have {\rm finite propagation}, if
there exists a constant $1 \le R$ satisfying that
for every $k, l, s, t$
\[X((s, k), (t, l)) = 0,\]
whenever $|s - t|$ is greater than $R$.
(see e.g. \cite[Definition 5.9.2]{NowakYu})
\item
The operator $X$ is said to be {\rm homogeneous} or {\rm space-homogeneous}, if
the matrix coefficient 
$X((s, k), (t, l))$
depends only on $k, l,$ and $s - t$.
\end{enumerate}
\end{definition}

Note that for an operator $X$ on $\ell_2(\mathbb{Z}) \otimes \mathbb{C}^n$.
\begin{itemize}
\item
The operator $X$ is analytic, if it has finite propagation.
\item
The operator $X$ is in the C$^\infty$-class, if it is analytic.
\item
The operator $X$ is bounded, if it is in the C$^\infty$-class.
\end{itemize}

We often identify a vector in $\ell_2(\mathbb{Z}) \otimes \mathbb{C}^n$ with a column vector
whose entries are $\ell_2$ functions on $\mathbb{Z}$.
Every bounded linear operator on $\ell_2(\mathbb{Z}) \otimes \mathbb{C}^n = \ell_{2}(\mathbb{Z})^n$ can be expressed by an $(n \times n)$-matrix whose entries are bounded linear operator acting on $\ell_2(\mathbb{Z})$.

For $s \in \mathbb{Z}$, let $S_s$ be the unitary operator on $\ell_2(\mathbb{Z})$ defined by the shift
\[\delta_t \mapsto \delta_{s + t}, \quad t \in \mathbb{Z}.\] 
A homogeneous operator $X = [X((s, k), (t, l))]$ can be expressed by infinite sums
\[ \left[ \sum_{s} X_{k, l}(s) S_s \right]_{1 \le k, l \le n}, \]
where the coefficient $X_{k, l}(s - t)$ is given by $X((s, k), (t, l))$.
In the case that $X$ is in the C$^\infty$-class, the infinite sums are given by operator norm convergence.

It is not hard to see that the following classes of unitary operators form groups:
\begin{itemize}
\item
unitary operators in the C$^\infty$-class,
\item
analytic unitary operators,
\item
unitary operators with finite propagation,
\item
homogeneous unitary operators.
\end{itemize}

%\begin{proof}
%Suppose that $X$ has finite propagation and take a constant $R$ like in Definition \ref{definition: analyticity}.(\ref{item of definition: finite propagation}).
%Find a constant $0 < c$ and $1 <r$ satisfying that
%\[R < c r^{-d}\]
%for every $d \in [0, R]$.
%Then we have
%\[|X((s, k), (t, l))| \le c r^{-|s - t|}\]
%for every $k, l, s, t$.
%It follows that $X$ is analytic.
%
%Suppose that $X$ is analytic and take constants $c$ and $r$ like in Definition \ref{definition: analyticity}.(\ref{item of definition: analytic}).
%For every natural number $N$, the sequence
%\[\left\{ (1 + d^2)^N c r^{-d} \right\}_{d = 1}^\infty\]
%is bounded.
%It implies that $X$ is in the C$^\infty$-class.
%\end{proof}

We give a definition of quantum walks on $\mathbb{Z}$.

\begin{definition}
Let $r \mapsto U^{(t)}$
be a group homomorphism from a closed subgroup $G \in \mathbb{R}$ to the group consisting of unitary operators acting on $\ell_2(\mathbb{Z}) \otimes \mathbb{C}^n$.
Note that the group $G$ is $\mathbb{R}$ or of the form $c \mathbb{Z}$.
\begin{itemize}
\item
The group homomorphism is called {\rm a continuous-time quantum walk on} $\mathbb{Z}$, if
the group $G$ is $\mathbb{R}$ and the homomorphism is continuous with respect to weak operator topology.
\item
The group homomorphism is called {\rm a discrete-time quantum walk} on $\mathbb{Z}$
or simply {\rm a quantum walk} on  $\mathbb{Z}$, 
if the group $G$ is of the form $c \mathbb{Z}$.
\item
The quantum walk is said to be in  {\rm the C$^\infty$-class},
if $U^{(t)}$ is a unitary operator in the ${\rm C}^\infty$-class for every $t \in G$.
\item
The quantum walk is said to be  {\rm analytic},
if $U^{(t)}$ is an analytic unitary operator for every $t \in G$.
\item
The quantum walk is said to have  {\rm finite propagation},
if $U^{(t)}$ has finite propagation for every $t \in G$.
\item
The quantum walk is said to be  {\rm space-homogeneous}, or more simply {\rm homogeneous}
if $U^{(t)}$ is a homogeneous unitary operator for every $t \in G$.
\end{itemize}
\end{definition}

\begin{example}
Almost all discrete-time space-homogeneous quantum walks on $\mathbb{Z}$ have finite propagation.
Therefore, they are analytic unitary operators.
For example, the Hadamard quantum walk have finite propagation.
See \cite{ABNVW} for examples of quantum walks on $\mathbb{Z}$.
\end{example}

\section{Fourier and complex analysis for quantum walks on $\mathbb{Z}$}

In this section, we define and discuss the inverse Fourier transform $\widehat{U}$ of a discrete-time analytic homogeneous quantum walk $U$ on $\mathbb{Z}$.
Such a transform defines a map $z \mapsto \widehat{U}(z)$ from $\mathbb{T}$ to unitary operators acting on $\mathbb{C}^n$.
The goal of this section is to construct analytic sections of eigenvectors of $\widehat{U}(z)$ which form an orthonormal basis of each fiber $\mathbb{C}^n$ .
The main ingredient is analyticity of the map $z \mapsto \widehat{U}(z)$.

\subsection{Inverse Fourier transform for homogeneous operators and quantum walks}

Our discussion will completely rely on Fourier analysis.
We express the Pontryagin dual of $\mathbb{Z}$ by
$\mathbb{T} = \{z \in \mathbb{C} \ |\ |z| = 1\}$.
The inverse Fourier transform $\mathcal{F}^{-1}$ is given by the unitary operator
\[{\mathcal F}^{-1} \colon \ell_2(\mathbb{Z}) \ni \delta_s \mapsto z^s \in L^2(\mathbb{T}).\]
For a bounded linear operator $X$ on $\ell_2(\mathbb{Z}) \otimes \mathbb{C}^n$,
we call the bounded linear operator $\widehat{X} = (\mathcal{F}^{-1} \otimes \mathrm{id}) X 
(\mathcal{F} \otimes \mathrm{id})$
acting on $L^2(\mathbb{T}) \otimes \mathbb{C}^n$
{\it the inverse Fourier transform} of $X$.

In the case that $X$ is a homogeneous operator, $X$ can be expressed as
\[ X = \left[ \sum_{s \in \mathbb{Z}} X_{k, l}(s) S_s \right]_{1 \le k, l \le n}, \]
and the $(k, l)$-entry of the inverse Fourier transform $\widehat{X}$ is the multiplication operator
by the function 
\[\sum_{s \in \mathbb{Z}} X_{k, l}(s) z^s \in L^\infty(\mathbb{T}) \]
on $\mathbb{T}$.
For a continuous function $\eta$ on $\mathbb{T}$, we often identify the function $\eta$ and the multiplication operator
\[L^2(\mathbb{T}) \ni \xi \mapsto \eta \xi \in L^2(\mathbb{T}).\]
Sometimes, to emphasize that $\eta$ gives a multiplication operator,
we denote by $M[\eta]$ the operator.

\begin{lemma}\label{lemma: Fourier expansions}
For the homogeneous operator $X$ on $\ell_2(Z) \otimes \mathbb{C}^n$, we have the following
\begin{enumerate}
\item
The operator $X$ is in the {\rm C}$^\infty$-class, if and only if every entry of $\widehat{X}$ is a smooth function. 
\item
The operator $X$ is analytic, if and only if every entry of $\widehat{X}$ is an analytic function defined on a neighborhood of $\mathbb{T}$.
\item
The operator $X$ has finite propagation, if and only if every entry of $\widehat{X}$ is 
a linear combination of $\{z^s \ |\ s \in \mathbb{Z}\} \subset C(\mathbb{T})$.
\end{enumerate}
\end{lemma}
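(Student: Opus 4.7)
The plan is to reduce each of the three equivalences to a classical statement about the Fourier coefficients of a function on $\mathbb{T}$. Since $X$ is space-homogeneous, write $X_{k,l}(s) = X((s+t,k),(t,l))$, which is independent of $t$, and note that the $(k,l)$-entry of $\widehat X$ is the multiplication operator by $f_{k,l}(z) = \sum_{s \in \mathbb{Z}} X_{k,l}(s) z^s$. Thus $\{X_{k,l}(s)\}_{s \in \mathbb{Z}}$ is exactly the sequence of Fourier coefficients of $f_{k,l}$. The three conditions in Definition \ref{definition: analyticity} translate, respectively, to rapid decay of $X_{k,l}(s)$ in $s$, geometric decay of $X_{k,l}(s)$ in $s$, and eventual vanishing of $X_{k,l}(s)$.

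I would treat (3) first as it is the shortest: if $X$ has finite propagation with propagation constant $R$, then $X_{k,l}(s) = 0$ for $|s| > R$, so $f_{k,l}(z) = \sum_{|s| \le R} X_{k,l}(s) z^s$ is a Laurent polynomial; conversely, a Laurent polynomial expansion forces the coefficients to vanish outside a finite range. Next, for (2), one direction is Abel's classical estimate for Laurent series: if $|X_{k,l}(s)| \le c r^{-|s|}$ with $r > 1$, then $\sum_s X_{k,l}(s) z^s$ converges absolutely and uniformly on any closed annulus strictly inside $r^{-1} < |z| < r$, producing a holomorphic extension of $f_{k,l}$ to a neighborhood of $\mathbb{T}$. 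The converse uses Cauchy's estimates: if $f_{k,l}$ extends holomorphically to an annulus $r^{-1} < |z| < r$, then its Laurent coefficients $X_{k,l}(s) = \frac{1}{2\pi i} \oint_{|z| = \rho} f_{k,l}(z) z^{-s-1} \, dz$ evaluated on suitable circles $|z| = \rho$ yield the desired geometric bound.

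For (1), I would invoke the standard Fourier-theoretic characterization of smoothness: a function on $\mathbb{T}$ is $C^\infty$ if and only if its Fourier coefficients are rapidly decreasing. Concretely, if $f_{k,l}$ is smooth, integration by parts shows $s^N X_{k,l}(s) = O(1)$ for every $N$, giving boundedness of $(1+|s|^2)^N X_{k,l}(s)$; conversely, rapid decrease of $X_{k,l}(s)$ implies that $\sum_s X_{k,l}(s) z^s$ together with every termwise derivative $\sum_s (is)^m X_{k,l}(s) z^s$ converges uniformly on $\mathbb{T}$, so $f_{k,l} \in C^\infty(\mathbb{T})$.

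The main conceptual content of the lemma is just the translation between matrix-coefficient decay for a homogeneous operator and coefficient decay of a Fourier series, which is completely standard; the only bookkeeping point worth being careful about is that in (1) the condition involves all $s,t$ jointly rather than just $s-t$, so one must note that $(1+|s-t|^2)^N X((s,k),(t,l))$ being bounded over $(s,t)$ is the same as $(1+|s|^2)^N X_{k,l}(s)$ being bounded over $s$ by space-homogeneity. I expect no genuine obstacle: the lemma is a dictionary translating the definitions of Section \ref{section: analyticity} into the Fourier picture that will be used throughout the rest of the paper.
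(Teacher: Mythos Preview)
Your proposal is correct and follows the same approach as the paper: both reduce the lemma to the standard Fourier-theoretic characterizations of smoothness, analyticity, and finite support in terms of coefficient decay. The paper's proof is in fact a single sentence asserting exactly this dictionary, so your version is a fully spelled-out elaboration of theirs.
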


\begin{proof}
Smoothness and analyticity of functions on $\mathbb{T}$ can be rephrased by how rapidly  the Fourier coefficients decreases.
\end{proof}

For the rest of this paper, we focus on a discrete-time analytic homogeneous quantum walk acting on $\ell_2(\mathbb{Z}) \otimes \mathbb{C}^n$.
We may assume that the group of time is $\mathbb{Z}$.
Then we simply denote the generator of the quantum walk $U^{(1)}$ by $U$.
The inverse Fourier transform 
$\widehat{U} 
= ({\mathcal F}^{-1} \otimes {\mathrm id}) U ({\mathcal F} \otimes {\mathrm id})$ 
of the generator of the quantum walk is an element of 
$M_n(C(\mathbb{T})) = C(\mathbb{T}) \otimes M_n(\mathbb{C})$.
We denote by $\widehat{U}(z ; k, l)$ the $(k, l)$-entry of $\widehat{U}$. 
By Lemma \ref{lemma: Fourier expansions},
the function $\widehat{U}(z ; k, l)$ has an analytic extension to a neighborhood of $\mathbb{T}$.
We also denote by $\widehat{U}(z ; k, l)$ the extension.
Note that for every $z \in \Omega$, $\left( \widehat{U}(z ; k, l) \right)_{k, l}$ is an $(n \times n)$ unitary matrix and 
that for every $z \in \mathbb{T}$, $\left( \widehat{U}(z ; k, l) \right)_{k, l}$ is a unitary matrix.

\subsection{Polynomials whose coefficients are analytic functions on $\mathbb{T}$}

For the rest of this section,
let $f(\lambda; z)$ be the characteristic polynomial 
\[f(\lambda; z) = \det\left( \left(\lambda \delta_{k, l} - \widehat{U}(z ; k, l) \right)_{k, l} \right).\]
The degree of polynomial with respect to $\lambda$ is $n$. The coefficients are analytic (or holomorphic) functions of $z$ defined on a domain containing $\mathbb{T}$.
To study such a polynomial, we need some preparation.

\begin{definition}
\begin{itemize}
\item
Let $Q_\mathbb{T}$ be the set of all the pairs $(\Omega, q)$ of a domain $\Omega$ containing  $\mathbb{T}$ and an analytic map to the Riemann sphere
\[q \colon \Omega \to \mathbb{C} \cup \{ \infty \} \]
which is not the constant map $\infty$.
\item
Two elements $(\Omega_1, q_1)$, $(\Omega_2, q_2)$ of $Q_\mathbb{T}$ are said to be equivalent, if 
there exists a domain $\mathbb{T} \subset \Omega_0 \subset \Omega_1 \cap \Omega_2$ on which $q_1$ and $q_2$ coincide. 
\item
Let ${\mathcal Q}_\mathbb{T}$ be the set of all the equivalence classes with respect to the above equivalence relation. 
\end{itemize}
\end{definition}

\begin{lemma}
The point-wise summation and multiplication of $Q_\mathbb{T}$ induce a field structure on 
the set ${\mathcal Q}_\mathbb{T}$.
\end{lemma}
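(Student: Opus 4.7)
The strategy is to reduce the claim to the classical fact that the meromorphic functions on a connected open subset of $\mathbb{C}$ form a field, while carefully checking that the given equivalence relation is compatible with the pointwise operations. The key topological fact that makes everything glue is connectedness of $\mathbb{T}$: given any two domains $\Omega_1, \Omega_2$ containing $\mathbb{T}$, the connected component of $\Omega_1 \cap \Omega_2$ that contains $\mathbb{T}$ is itself a domain containing $\mathbb{T}$, so we can always pass to a common connected open neighborhood.

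First I would show that $+$ and $\cdot$ send $Q_\mathbb{T}\times Q_\mathbb{T}$ into $Q_\mathbb{T}$. Given representatives $(\Omega_1,q_1)$ and $(\Omega_2,q_2)$, pass to the connected component $\Omega_0$ of $\Omega_1\cap\Omega_2$ through $\mathbb{T}$. On $\Omega_0$ both $q_1$ and $q_2$ are meromorphic (holomorphic into $\mathbb{C}\cup\{\infty\}$, not identically $\infty$), so their pole sets are discrete; hence $q_1+q_2$ and $q_1 q_2$ are defined and finite off a discrete set, and extend uniquely to meromorphic functions on $\Omega_0$ that are not identically $\infty$. Next I would check that these operations descend to $\mathcal{Q}_\mathbb{T}$: if $(\Omega_i,q_i)\sim(\Omega_i',q_i')$, the equivalences provide a common connected neighborhood of $\mathbb{T}$ on which $q_1+q_2$ and $q_1'+q_2'$ agree (respectively $q_1 q_2$ and $q_1' q_2'$), by the identity theorem.

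The field axioms other than existence of multiplicative inverses then follow formally: associativity, commutativity, and distributivity hold on the complement of the discrete set of poles, and hence everywhere by meromorphic extension. The zero element is the class of the constant $0$, the unit is the class of the constant $1$, and the additive inverse of $[(\Omega,q)]$ is $[(\Omega,-q)]$.

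The main nontrivial step, and the real content of the lemma, is the existence of multiplicative inverses, so that is where I would focus effort. Suppose $[(\Omega,q)]$ is nonzero. After restricting to the connected component of $\Omega$ containing $\mathbb{T}$, we have a domain on which $q$ is not identically zero, since otherwise $(\Omega,q)$ would be equivalent to $(\Omega,0)$. The identity theorem applied to this connected domain then forces the zero set of $q$ to be discrete, so $1/q$ is well-defined and meromorphic on that same connected neighborhood of $\mathbb{T}$; it is not identically $\infty$ because $q\not\equiv 0$. Hence $[(\Omega_0,1/q)]\in\mathcal{Q}_\mathbb{T}$ provides the multiplicative inverse, and well-definedness on equivalence classes follows once more from the identity theorem, since two representatives of the same nonzero class agree on a connected neighborhood of $\mathbb{T}$ and therefore so do their reciprocals. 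The chief obstacle is precisely this step: one needs the combination of connectedness of $\mathbb{T}$ and the identity theorem to rule out pathological behaviour on components of $\Omega$ disjoint from $\mathbb{T}$ and to guarantee that the zero set of a nonzero representative is discrete near $\mathbb{T}$.
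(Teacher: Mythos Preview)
Your argument is correct and is exactly the ``routine work'' the paper alludes to without writing out: the paper's proof consists of the single sentence ``The proof is a routine work,'' and your outline supplies precisely the standard verification via the identity theorem on a connected neighborhood of $\mathbb{T}$. There is no alternative approach to compare against.
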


The proof is a routine work.
We call ${\mathcal Q}_\mathbb{T}$ {\it the field of meromorphic functions on} $\mathbb{T}$.
We simply denote by $q \in {\mathcal Q}_\mathbb{T}$ the equivalence class containing $(\Omega, q) \in Q_\mathbb{T}$.
The characteristic polynomial $f(\lambda; z)$ of the matrix $\widehat{U}(z)$
defines a polynomial $f(\lambda)$ whose coefficients are elements of $\mathcal{Q}_\mathbb{T}$.
To decompose the characteristic polynomial $f(\lambda) \in \mathcal{Q}_\mathbb{T}[\lambda]$ of $\widehat{U}$, 
we will make use of the following lemmata and proposition.

\begin{lemma}\label{lemma: no multiple root}
Let $g(\lambda) \in \mathcal{Q}_\mathbb{T}[\lambda]$ be an irreducible polynomial.
Then there exists a finite subset $\mathbb{T}_0$ of $\mathbb{T}$ such that
for every $z \in \mathbb{T} \setminus \mathbb{T}_0$, the polynomial $g(\lambda; z) \in \mathbb{C}[\lambda]$ at $z$ has no multiple root.
\end{lemma}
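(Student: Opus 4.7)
The plan is to reduce the statement to the nonvanishing of the discriminant of $g$, regarded as a nonzero element of $\mathcal{Q}_\mathbb{T}$, and then invoke compactness of $\mathbb{T}$. The case $\deg g = 0$ is vacuous, so I assume $\deg g \geq 1$ throughout.

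Because $\mathcal{Q}_\mathbb{T}$ has characteristic zero, the formal derivative $g'(\lambda) \in \mathcal{Q}_\mathbb{T}[\lambda]$ is nonzero and has degree strictly less than $\deg g$. Any nontrivial common divisor of $g$ and $g'$ in $\mathcal{Q}_\mathbb{T}[\lambda]$ would, by irreducibility of $g$, have to be an associate of $g$, contradicting the degree inequality; hence $\gcd(g, g') = 1$ in $\mathcal{Q}_\mathbb{T}[\lambda]$. Consequently the Sylvester resultant $D := \mathrm{Res}(g, g') \in \mathcal{Q}_\mathbb{T}$ is nonzero.

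Next, I shrink to a common neighborhood $\Omega \supset \mathbb{T}$ on which all the coefficients of $g$ and the function $D$ are represented by honest meromorphic functions, and let $\mathbb{T}_0$ be the subset of $\mathbb{T}$ consisting of the points where at least one of the following occurs: some coefficient of $g$ has a pole, the leading coefficient of $g$ vanishes, or $D$ has a zero or a pole. Each of these exceptional sets is discrete in $\Omega$ by the identity theorem applied to a nonzero meromorphic function, and hence finite upon intersecting with the compact set $\mathbb{T}$; so $\mathbb{T}_0$ is finite. For $z \in \mathbb{T} \setminus \mathbb{T}_0$, the evaluation $g(\lambda; z) \in \mathbb{C}[\lambda]$ is a well-defined polynomial of degree $\deg g$ whose classical discriminant coincides with $D(z) \neq 0$, and therefore $g(\lambda; z)$ has no multiple root.

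I expect the only subtlety to be the compatibility between the formal resultant in $\mathcal{Q}_\mathbb{T}$ and the classical discriminant of the specialization $g(\lambda; z) \in \mathbb{C}[\lambda]$. This is the standard Sylvester-matrix identity, and it is precisely to preserve its validity that we exclude from $\mathbb{T}_0$ the points where a coefficient of $g$ blows up or where the leading coefficient vanishes. Apart from this bookkeeping, the argument is routine and does not seem to hide any analytic obstacle.
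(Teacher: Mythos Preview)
Your argument is correct and follows the same line as the paper's: both establish $\gcd(g,g')=1$ in $\mathcal{Q}_\mathbb{T}[\lambda]$ from irreducibility in characteristic zero, then specialize to all but finitely many $z\in\mathbb{T}$. The only cosmetic difference is that the paper specializes the B\'ezout identity $g\,q_1+g'\,q_2=1$ directly (excluding the poles of the cofactors $q_1,q_2$ on $\mathbb{T}$), whereas you package the same information in the nonvanishing of the resultant; note in passing that $D(z)=\mathrm{Res}(g,g')(z)$ agrees with the discriminant of $g(\lambda;z)$ only up to a power of the leading coefficient, but since you have excluded its zeros this is harmless.
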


\begin{proof}
Note that the polynomial ring $\mathcal{Q}_\mathbb{T}[\lambda]$ is a principal ideal domain.
Choose a polynomial $h(\lambda) \in \mathcal{Q}_\mathbb{T}[\lambda]$ satisfying that
\[g(\lambda) \mathcal{Q}_\mathbb{T}[\lambda] 
+ \dfrac{\partial g}{\partial \lambda}(\lambda) \mathcal{Q}_\mathbb{T}[\lambda] 
= h(\lambda) \mathcal{Q}_\mathbb{T}[\lambda].\]
Since $\dfrac{\partial g}{\partial \lambda}(\lambda) \in h(\lambda) \mathcal{Q}_\mathbb{T}[\lambda]$, the degree of $h$ is less than that of $g$.
Since $g(\lambda)$ is irreducible, $h(\lambda)$ is an element of $\mathcal{Q}_\mathbb{T}$.
It follows that there exist 
$q_1(\lambda), q_2(\lambda) \in \mathcal{Q}_\mathbb{T}[\lambda]$ satisfying that
\[g(\lambda) q_1(\lambda) 
+ \dfrac{\partial g}{\partial \lambda}(\lambda) q_2(\lambda) 
= 1 \in \mathcal{Q}_\mathbb{T}[\lambda].\]
Choose a finite subset $\mathbb{T}_0$ of $\mathbb{T}$ such that
for every $z \in \mathbb{T} \setminus \mathbb{T}_0$, all the coefficients of $q_1(\lambda), q_2(\lambda)$ are not $\infty$ at $z$.
For such $z$, we have
\[g(\lambda; z) q_1(\lambda; z) 
+ \dfrac{\partial g}{\partial \lambda}(\lambda; z) q_2(\lambda; z) 
= 1 \in \mathbb{C}[\lambda].\]
It follows that there exists no common root $\lambda$ of $g(\lambda; z)$ and $\dfrac{\partial g}{\partial \lambda}(\lambda; z) \in \mathbb{C}[\lambda]$.
\end{proof}

\begin{proposition}\label{proposition: irreducible polynomial}
Let $g(\lambda)$ be an irreducible polynomial in $\mathcal{Q}_\mathbb{T}[\lambda]$ with degree $d$.
Suppose that the coefficient of the highest degree is $1$.
Assume that for every $z \in \mathbb{T}$, all the roots $\lambda$ of $g(\lambda; z) \in \mathbb{C}[\lambda]$ are elements of $\mathbb{T}$.
Then we have the following:
\begin{enumerate}
\item
There exists an analytic function $\lambda(\cdot) \in \mathcal{Q}_\mathbb{T}$ satisfying that for every $z \in \mathbb{T}$,
\[g(\lambda; z) 
= \prod_{\zeta \colon \zeta^d = z} (\lambda - \lambda(\zeta))\]
\item\label{item: rotation}
Assume that
another analytic function $\widetilde{\lambda}(\cdot)$ defined around $\mathbb{T}$ satisfies
\[g(\lambda; z) 
= \prod_{\zeta \colon \zeta^d = z} 
\left( \lambda - \widetilde{\lambda}(\zeta) \right).\]
Then there exists a natural number $c$ satisfying that
\begin{eqnarray*}
c &\in& \{0, 1, 2, \cdots, d - 1 \},\\
\widetilde{\lambda}(\zeta) &=& \lambda( \exp(2 \pi i c / d) \zeta), 
\quad \zeta \in \mathbb{T}.
\end{eqnarray*}
\end{enumerate}
\end{proposition}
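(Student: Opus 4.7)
The plan is to identify the zero set of $g(\lambda;z)$ with the image of a single analytic function $\lambda$ via the $d$-fold cyclic covering $\zeta \mapsto z = \zeta^d$, thereby placing all $d$ roots on one sheet. I begin by setting up a convenient domain. Let $\Omega$ be an annular neighborhood of $\mathbb{T}$ on which every coefficient of $g$ is analytic, and let $D(z)$ be the discriminant of $g$ as a polynomial in $\lambda$, an analytic function on $\Omega$ that is not identically zero. Its zeros on $\mathbb{T}$ form the finite set $\mathbb{T}_0$ (Lemma \ref{lemma: no multiple root}); its zeros off $\mathbb{T}$ form a discrete set that cannot accumulate on $\mathbb{T}$, since otherwise $D$ would vanish identically. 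By shrinking $\Omega$, I may assume $D$ vanishes only on $\mathbb{T}_0$.

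The crux is to show that the hypothesis that all roots lie in $\mathbb{T}$ for $z \in \mathbb{T}$ rules out any genuine Puiseux branching at points of $\mathbb{T}_0$. Around $p \in \mathbb{T}_0$, a hypothetical Puiseux cycle of length $k \geq 2$ would produce $k$ branches $\lambda = \mu + \sum_{i \geq 1} c_i w^i$ indexed by the $k$ choices of $w = (z-p)^{1/k}$. Parametrising $z = pe^{is}$ with $s$ real, the values of $w$ lie on $k$ rays in the $w$-plane for $s > 0$ and on a different set of $k$ rays rotated by $\exp(i\pi/k)$ for $s < 0$. Imposing $|\lambda| = 1$ on each branch, order by order, forces $c_i = 0$ whenever $k \nmid i$, leaving $\lambda$ as an analytic function of $z - p$ and contradicting $k \geq 2$. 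Hence near each $p \in \mathbb{T}_0$ the roots of $g$ are $d$ locally analytic functions of $z$ that may coincide in value at $p$ but are not permuted by monodromy.

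With no nontrivial local monodromy anywhere in $\Omega$, the $d$ roots of $g(\lambda;z)$ are $d$ locally analytic functions on $\Omega$, permuted only by the monodromy $\sigma \in S_d$ of the core loop of the annulus. Irreducibility of $g$ in $\mathcal{Q}_\mathbb{T}[\lambda]$ rules out any proper $\sigma$-invariant subset of the roots, whose elementary symmetric functions would give a meromorphic factor of $g$; hence $\langle\sigma\rangle$ is transitive and $\sigma$ is a $d$-cycle. Forming the pullback $\widetilde g(\lambda;\zeta) := g(\lambda;\zeta^d)$ on an annular neighborhood of $\mathbb{T}$ in the $\zeta$-plane, the core-loop monodromy becomes $\sigma^d = e$, so the $d$ roots of $\widetilde g$ are single-valued analytic functions there; pick any one and call it $\lambda(\zeta)$. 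The deck transformation $\zeta \mapsto \omega\zeta$ with $\omega = \exp(2\pi i/d)$ cycles through the $d$ sheets, so $\{\lambda(\omega^c \zeta) : 0 \leq c < d\}$ is precisely the set of roots of $g(\,\cdot\,;z)$ for $z = \zeta^d$, yielding the product formula in (1). For (2), any other $\widetilde\lambda$ must satisfy $\widetilde\lambda(\zeta) \in \{\lambda(\omega^c \zeta)\}_c$ pointwise, and the index $c$ is constant by analytic continuity on the connected complement of the finite preimage of $\mathbb{T}_0$.

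The main obstacle is the Puiseux-elimination step. The geometric picture is clear, but rigorously verifying the order-by-order vanishing requires careful bookkeeping of the angular positions of the rays on the two sides of each $p \in \mathbb{T}_0$, together with the algebraic dependencies between Puiseux coefficients at higher orders. Everything else is standard Riemann-surface and covering-space machinery once this hurdle is passed.
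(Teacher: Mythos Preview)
Your overall strategy is sound and your conclusion is correct, but your route to the key step---showing that the Riemann surface of $g$ has no branch points over $\mathbb{T}$---differs substantially from the paper's. You attack this via Puiseux expansions: a hypothetical cycle of length $k\ge 2$ at $p\in\mathbb{T}_0$ yields a series $\lambda=\mu+\sum_{i\ge1}c_i w^i$ with $w^k=z-p$, and you argue that the constraint $|\lambda|=1$ along the $2k$ tangent directions in the $w$-plane forces $c_i=0$ whenever $k\nmid i$. This argument is essentially correct (after straightening the curved preimages of $\mathbb{T}$ to genuine rays by a local biholomorphism $u\mapsto w$ with $u^k=\phi(z-p)$, one reduces to showing that $\mathrm{Re}\bigl(b_n e^{in\theta_j}\bigr)=0$ on $2k$ equally spaced angles $\theta_j$ forces $b_n=0$ unless $k\mid n$), but as you yourself note, the bookkeeping is not written out.

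The paper bypasses the Puiseux computation entirely with a Schwarz reflection argument. Near $z_0\in\mathbb{T}$ one takes a small circle $\partial D$ meeting $\mathbb{T}$ at $z_1,z_2$, and compares the analytic continuations $\lambda_{\mathrm{in}},\lambda_{\mathrm{out}}$ of a germ at $z_1$ along the inner and outer arcs. The unimodularity hypothesis gives $\lambda(1/\overline z)=1/\overline{\lambda(z)}$ on $\mathbb{T}$, so the reflection principle yields $\lambda_{\mathrm{out}}(1/\overline z)=1/\overline{\lambda_{\mathrm{in}}(z)}$; evaluating on $\mathbb{T}$ near $z_2$ and using $|\lambda_{\mathrm{in}}|=1$ there gives $\lambda_{\mathrm{out}}=\lambda_{\mathrm{in}}$, hence trivial local monodromy. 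This is shorter and avoids any order-by-order analysis. Once both approaches establish that only the core-loop monodromy remains, the rest of your argument (transitivity from irreducibility, hence a $d$-cycle, then pullback along $\zeta\mapsto\zeta^d$) matches the paper's, and your treatment of part~(2) is the same as the paper's. The reflection argument is worth knowing: it turns the modulus constraint into a functional equation rather than an infinite family of coefficient constraints.
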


\begin{proof}
Since the roots $\lambda$ of $g(\lambda; z)$ uniformly bounded on $z \in \mathbb{T}$,
there exists a domain $\Omega$ containing $\mathbb{T}$ on which the coefficients of $g(\lambda)$ have no poles.
By the same argument as the book \cite[Chapter 8, Section 2]{Ahlfors} by Ahlfors, the set of germs whose graphs are included in
\[\mathcal{G} = \{(z, \lambda) \in \Omega \times \mathbb{C} \ |\ g(\lambda; z) = 0 \} \]
gives a Riemann surface.
In the book by Ahlfors, it is proved that an irreducible polynomial defines a (compact) Riemann surface.
In our argument, the coefficients of $\lambda^d, \lambda^{d-1}, \cdots, 1$ are not necessarily polynomials of $z$ but analytic functions of $z$.
However, the argument by Ahlfors works in our framework and shows that $\mathcal{G}$ gives a (not necessarily compact) Riemann surface.

We prove that the Riemann surface has no branch point on $\mathbb{T}$.

Let $z_0$ be an arbitrary element of $\mathbb{T}$. 
Let $D \subset \Omega$ be a tiny open disc including $z_0$ such that there exists no branch point in $\overline{D} \setminus \{z_0\}$ and that the circle $\partial D$ intersects with $\mathbb{T}$ at right angles.
Let $\{z_1, z_2\}$ be the intersection of $\mathbb{T}$ and the boundary $\partial D$. 
Pick up an analytic germ 
$z \mapsto \lambda(z)$ defined around $z_1$ whose graph is included in
$\mathcal{G}$.
Denote by $C_\mathrm{out}$ the path such that the starting point is $z_1$ and the terminal point is $z_2$ and that $C_\mathrm{out}$ goes outside of $\mathbb{T}$. 
Denote by $C_\mathrm{in}$ the path such that the staring point is $z_1$ and the terminal point is $z_2$ and that $C_\mathrm{in}$ goes inside of $\mathbb{T}$. 
Let $\lambda_\mathrm{out}(z)$ be the germ defined around $z_2$ given by the analytic continuation of $\lambda(z)$ along $C_\mathrm{out}$.
Let $\lambda_\mathrm{in}(z)$ be the germ defined around $z_2$ given by the analytic continuation of $\lambda(z)$ along $C_\mathrm{in}$.

Because the germ $\lambda$ satisfies $g(\lambda(z), z) = 0$, by the identity theorem,
we have $g(\lambda_\mathrm{in}(z), z) = 0$.
By assumption, for $z \in \mathbb{T}$, the absolute values of the roots of $g(\lambda, z)$ are $1$.
It follows that 
if $z$ is in $\mathbb{T}$ and close to $z_2$, then
\begin{eqnarray*}
|\lambda_\mathrm{in}(z)| &=& 1,\\
\dfrac{1}{\ \overline{\lambda_\mathrm{in}(z)}\ } &=& \lambda_\textrm{in} (z).
\end{eqnarray*}

For $z \in \mathbb{T}$ close to $z_1$, we have $|\lambda(z)| = 1$. For such $z$, we have
\begin{eqnarray*}
\dfrac{1}{\ \overline{z}\ } &=& z,\\
\lambda\left(\dfrac{1}{\ \overline{z}\ }\right) &=& \lambda(z) = \dfrac{1}{\ \overline{\lambda(z)}\ }.
\end{eqnarray*}
Let us move $z$ from $z_1$ along $C_{\rm in}$.
Then $\dfrac{1}{\ \overline z\ }$ moves from $z_1$ along $C_{\rm out}$.
By Schwarz reflection principle, 
for $z \in \Omega$ around $z_2$,
we have
\[\lambda_\mathrm{out}\left(\frac{1}{\ \overline{z}\ }\right) = \frac{1}{\ \overline{\lambda_\mathrm{in}(z)}\ }.\]
If $z$ is in $\mathbb{T}$ and close to $z_2$, then we have
\[\lambda_\mathrm{out}(z) = \dfrac{1}{\ \overline{\lambda_\mathrm{in}(z)}\ }.\]
It follows that $\lambda_\mathrm{out}(z) = \lambda_\mathrm{in}(z)$.
By the identity theorem, this equality holds on a neighborhood of $z_2$.
It means that two germs given by $\lambda_\mathrm{out}$ and $\lambda_\mathrm{in}$ at $z_2$ are identical.
We conclude that the analytic continuation of $\lambda(\cdot)$ on the tiny circle $\partial D$ is unique and that $z_0$ is not a branch point.
It follows that there exists no branch point on $\mathbb{T}$.

Let us take an analytic germ $\lambda_0(z)$ at $1$ whose graph is included in $\mathcal{G}$.
Since the Riemann surface has no branch point on $\mathbb{T}$, there exists an analytic continuation 
\[\mathbb{R} \ni \theta \mapsto \lambda_\theta,\]
where $\lambda_\theta$ is a germ at $e^{i \theta}$.
By the finiteness of the roots, there exists a natural number $c$ such that $2 \pi c$ is the period of the analytic continuation.
Define an analytic function $\lambda(\cdot)$ on $\mathbb{T}$ by
\[\lambda(\zeta) = \lambda_{c \arg \zeta}(\zeta^c).\]
Note that for every $c$-th root $\zeta$ of $z$, we have
\[g(\lambda(\zeta), z) = 0.\]
For every $1 \le b \le c$, the function
\[z \mapsto b \textrm{-th\ elementary\ symmetric\ polynomial\ of\ } 
\{\lambda(\zeta) \ |\ \zeta^c = z\}. \]
defines an analytic function of $z$.
It follows that
\[g_1(\lambda; z) := \prod_{\zeta \colon \zeta^c = z} (\lambda - \lambda(\zeta))\]
gives an element $g_1(\lambda)$ of $\mathcal{Q}_\mathbb{T}[\lambda]$.

We next prove that $g_1(\lambda)$ is identical to $g(\lambda)$.
By the definition of $\lambda(\zeta)$,
for every $z \in \mathbb{T}$, every root of 
$g_1(\lambda; z)$
is that of $g(\lambda; z)$.
By Lemma \ref{lemma: no multiple root},
for almost every $z \in \mathbb{T}$, $g(\lambda; z) \in \mathbb{C}[\lambda]$ has mutually different $d$ roots.
It follows that for such $z$, the complex numbers $\{\lambda(\zeta) \ |\ \zeta^c = z\}$ are mutually different.
Consider the remainder $r(\lambda)$ obtained by the polynomial long division 
\[g(\lambda) = q(\lambda) g_1(\lambda) + r(\lambda) \in \mathcal{Q}_\mathbb{T}[\lambda].\]
Since $g_1(\lambda)$ is monic, all the coefficients of $q(\lambda)$ and $r(\lambda)$ are realized by complex-valued analytic functions.
Substituting $z$, we obtain the identity
\[g(\lambda; z) = q(\lambda; z) g_1(\lambda; z) + r(\lambda; z) \in \mathbb{C}[\lambda].\]
For almost every $z \in \mathbb{T}$, $r(\lambda; z) \in \mathbb{C}[\lambda]$ has mutually different $c$ roots $\{\lambda(\zeta) \ |\ \zeta^c = z\}$.
Since the degree of $r(\lambda; z) \in \mathbb{C}[\lambda]$ is less than $c$, we have $r(\lambda; z) = 0$ for such $z$.
By continuity, for every $z \in \mathbb{T}$, $r(\lambda; z) = 0$.
Therefore we have
\[g(\lambda) = q(\lambda) g_1(\lambda) \in \mathcal{Q}_\mathbb{T}[\lambda].\]
Since $g(\lambda)$ is irreducible, two polynomials $g$ and $g_1$ are identical.

For the second item of the theorem, assume that there exists an analytic function 
$\widetilde{\lambda}$ on $\mathbb{T}$ such that
\[g(\lambda; z) 
= \prod_{\zeta \colon \zeta^d = z} (\lambda - \lambda(\zeta))
= \prod_{\zeta \colon \zeta^d = z} 
\left( \lambda - \widetilde{\lambda}(\zeta) \right).\]
By Lemma \ref{lemma: no multiple root}, there exists $z_0 \in \mathbb{T}$ such that if $z \in \mathbb{T}$ is close to $z_0$, then 
$\{\lambda(\zeta) \ |\ \zeta^d = z\}$
consists of $d$ elements. 
Choose a $d$-th root $\zeta_0$ of $z_0$.
Then there exists $c \in \{0, 1, \cdots, d - 1 \}$ such that $\widetilde{\lambda}(\zeta_0) =  \lambda( \exp(2 \pi i c / d) \zeta_0)$.
By continuity of $\lambda$ and of $\widetilde{\lambda}$, if $\zeta \in \mathbb{T}$ is close to $\zeta_0$, then
$\widetilde{\lambda}(\zeta) =  \lambda( \exp(2 \pi i c / d) \zeta)$.
By the identity theorem, the equality holds for every $\zeta \in \mathbb{T}$.
\end{proof}

\begin{remark}
In the case that the quantum walk $U$ has finite propagation,
the characteristic polynomial $f(\lambda; z)$ of $\widehat{U}(z)$ is a polynomial of $\lambda$,
 $z$, $z^{-1}$.
Let $g(\lambda) \in \mathbb{C}[z, z^{-1}] [\lambda]$ be an irreducible factor of 
$f(\lambda) \in \mathbb{C}[z, z^{-1}] [\lambda]$.
The Riemann surface given by $g$ is a {\it compact} Riemann surface.
This is not used for the rest of this paper, but this is interesting.
\end{remark}

\begin{lemma}\label{lemma: iff reducible}
Let $\lambda(\cdot)$ be an analytic function defined on $\mathbb{T}$.
Define $g(\lambda) \in \mathcal{Q}_\mathbb{T}[\lambda]$ by
\[g(\lambda; z) 
= \prod_{\zeta \colon \zeta^d = z} (\lambda - \lambda(\zeta)).\]
The following two conditions are equivalent:
\begin{enumerate}
\item
The polynomial $g(\lambda) \in \mathcal{Q}_\mathbb{T}[\lambda]$ is reducible.
\item
There exists a natural number $c \in \{1, 2, \cdots, d - 1\}$ satisfying that
\begin{eqnarray*}
\lambda \left( \exp\left( \dfrac{2 \pi i c}{d} \right) \zeta \right) &=& \lambda(\zeta),
\end{eqnarray*}
for every $\zeta \in \mathbb{T}$.
\end{enumerate}
If the above conditions hold ture,
then there exist natural numbers $b, c$ and an analytic map $\widetilde{\lambda} \colon \mathbb{T} \to \mathbb{T}$ satisfying that $b c = d$ and that
\begin{eqnarray*}
\widetilde{\lambda}(\zeta^b) &=& \lambda(\zeta), \quad \zeta \in \mathbb{T},\\
g(\lambda; z) 
&=& \left( \prod_{\eta \colon \eta^c = z} 
\left( \lambda - \widetilde\lambda(\eta) \right) \right)^b.
\end{eqnarray*}

\end{lemma}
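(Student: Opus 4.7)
The plan is to identify condition (2) with the nontriviality of the cyclic subgroup
\[H = \{c \in \mathbb{Z}/d\mathbb{Z} \mid \lambda \circ \sigma^c = \lambda \text{ on } \mathbb{T}\}, \quad \sigma(\zeta) = \exp(2\pi i/d)\zeta,\]
to extract the stated factorization from the descent of $\lambda$ through the quotient of $\mathbb{T}$ by $H$, and to prove the converse by following monodromy around the loop $\mathbb{T}$.

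For (2) $\Rightarrow$ (1) together with the factorization, I would set $b = |H|$; since $H$ is a subgroup of the cyclic group $\mathbb{Z}/d\mathbb{Z}$, it is generated by $c := d/b$, and so $bc = d$. The invariance $\lambda \circ \sigma^c = \lambda$ says exactly that $\lambda(\zeta)$ depends only on $\zeta^b$, so there is a unique analytic $\widetilde\lambda \colon \mathbb{T} \to \mathbb{T}$ with $\widetilde\lambda(\zeta^b) = \lambda(\zeta)$ (analyticity follows because $\zeta \mapsto \zeta^b$ is a local biholomorphism at every point of $\mathbb{T}$). As $\zeta \mapsto \zeta^b$ sends the $d$-th roots of $z$ onto the $c$-th roots of $z$ with each fiber of size $b$,
\[g(\lambda; z) \;=\; \prod_{\zeta^d = z}\bigl(\lambda - \widetilde\lambda(\zeta^b)\bigr) \;=\; \Bigl(\prod_{\eta^c = z}\bigl(\lambda - \widetilde\lambda(\eta)\bigr)\Bigr)^{b}.\]
The inner factor belongs to $\mathcal{Q}_\mathbb{T}[\lambda]$ because its coefficients are elementary symmetric functions of $\{\widetilde\lambda(\eta) : \eta^c = z\}$ and hence depend analytically on $z$; since $b \geq 2$ under (2), $g$ is reducible.

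For (1) $\Rightarrow$ (2) I would argue the contrapositive: assume $H = \{0\}$ and show every monic irreducible factor $g_1 \in \mathcal{Q}_\mathbb{T}[\lambda]$ of $g$ has degree $d$. Let $d' = \deg g_1$. Since every root of $g_1(\lambda;z)$ is a root of $g(\lambda;z)$ and hence lies in $\mathbb{T}$, Proposition \ref{proposition: irreducible polynomial} yields an analytic $\mu \colon \mathbb{T} \to \mathbb{T}$ with $g_1(\lambda;z) = \prod_{\eta^{d'} = z}(\lambda - \mu(\eta))$. Pick a $z_0 \in \mathbb{T}$ at which $g(\lambda;z_0)$ has $d$ distinct roots (available by Lemma \ref{lemma: no multiple root} applied to the irreducible factors of $g$), and on a small connected neighborhood $V$ of $z_0$ fix local analytic branches $\eta_0(z)$ of $z^{1/d'}$ and $\zeta_0(z)$ of $z^{1/d}$. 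After replacing $\zeta_0$ by $\sigma^k \zeta_0$ for an appropriate $k$, arrange that $\mu(\eta_0(z_0)) = \lambda(\zeta_0(z_0))$. The $d$ analytic functions $\lambda \circ \sigma^k \circ \zeta_0$ on $V$ are distinct at $z_0$ and hence distinct throughout $V$, while $\mu \circ \eta_0$ is analytic and takes values only in this discrete family; being locally constant into this family, it must agree with $\lambda \circ \zeta_0$ throughout $V$.

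Finally, I would analytically continue both sides along the loop $\mathbb{T}$ starting at $z_0$: one traversal sends $\zeta_0 \mapsto \sigma\zeta_0$ and $\eta_0 \mapsto \exp(2\pi i/d')\eta_0$, so after $d'$ traversals $\eta_0$ returns to itself while $\zeta_0$ becomes $\sigma^{d'}\zeta_0$. The continued identity reads $\mu(\eta_0(z)) = \lambda(\sigma^{d'}\zeta_0(z))$ on $V$, whence $\lambda(\zeta_0(z)) = \lambda(\sigma^{d'}\zeta_0(z))$ there; as $\zeta_0(z)$ sweeps out an open arc of $\mathbb{T}$, the identity theorem promotes this to $\lambda \circ \sigma^{d'} = \lambda$ on all of $\mathbb{T}$. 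Thus $d' \in H = \{0\}$, and combined with $1 \leq d' \leq d$ this forces $d' = d$, so $g_1 = g$. The main technical obstacle is running the monodromy cleanly: promoting the pointwise coincidence at $z_0$ to an analytic identity on $V$, and then tracking how the two radical branches transform under a full loop around $\mathbb{T}$; everything else is essentially bookkeeping.
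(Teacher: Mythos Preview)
Your proof is correct and follows the same strategy as the paper: invoke Proposition~\ref{proposition: irreducible polynomial} on an irreducible factor and then track analytic continuation around $\mathbb{T}$ to produce the rotation symmetry; your contrapositive with $d'$ loops in $z$ is equivalent to the paper's direct argument with a single loop in $\eta$ (since $z=\eta^{d'}$). One minor fix: the existence of $z_0$ with $d$ distinct roots does not follow from Lemma~\ref{lemma: no multiple root} applied to the irreducible factors (that does not exclude repeated irreducible factors of $g$), but it follows directly from your hypothesis $H=\{0\}$, since for each $c\in\{1,\dots,d-1\}$ the analytic function $\lambda\circ\sigma^{c}-\lambda$ is not identically zero and hence has only finitely many zeros on $\mathbb{T}$.
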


\begin{proof}
Suppose that there exists $c \in \{1, 2, \cdots, d - 1\}$ satisfying that
\begin{eqnarray*}
\lambda( \exp(2 \pi i c / d) \zeta) &=& \lambda(\zeta).
\end{eqnarray*}
Choose minimum value of $c$ satisfying the above property. Such a natural number $c$ divides $d$.
Define $b$ by $d / c$.
Define an analytic function $\widetilde{\lambda} \colon \mathbb{T} \to \mathbb{T}$
by
\[\widetilde{\lambda}(\eta) = \lambda(b\textrm{-th\ root\ of\ }\eta).\]
Since $\lambda( \exp(2 \pi i / b) \zeta) = \lambda(\zeta)$,
$\widetilde{\lambda}$ is well-defined.
The polynomial
\[
\left(
\prod_{\eta \colon \eta^c = z} \left( \lambda - \widetilde{\lambda}(\eta) \right) 
\right)^b \in \mathbb{C}[\lambda]\]
is identical to $g(\lambda; z)$.
It follows that $g(\lambda)$ is not irreducible.

Suppose that $g(\lambda)$ is not irreducible.
Take an irreducible monic polynomial 
$g_1(\lambda) \in \mathcal{Q}_\mathbb{T}[\lambda]$ which divides $g(\lambda)$.
Let $c$ be the degree of $g_1(\lambda)$.
By Proposition \ref{proposition: irreducible polynomial},
there exist a natural number $c$ and an analytic function $\widetilde{\lambda}$
such that
\[g_1(\lambda) = 
\prod_{\eta \colon \eta^c = z} \left( \lambda - \widetilde{\lambda}(\eta) \right).\]
Because
$\prod_{\eta \colon \eta^c = z} (\lambda - \widetilde{\lambda}(\eta))$
divides
$\prod_{\zeta \colon \zeta^d = z} (\lambda - \lambda(\zeta))$,
the germ of $\widetilde{\lambda}$ around $1$ is realized by a germ of $\lambda$ around some $d$-th root $\zeta_0$ of $1$.
More precisely,
if $\eta$ is close to $1$, if $\zeta$ is close to $\zeta_0$, and if $\eta^c = \zeta^d$, 
then
$\widetilde{\lambda}(\eta) = \lambda(\zeta)$.
Let us move $\eta$ on $\mathbb{T}$ in the anticlockwise direction. 
Under the condition that $\eta^c = \zeta^d$, as $\arg \eta$ moves from $0$ to $2 \pi$, $\arg \zeta$ moves from $\arg \zeta_0$ to $\arg \zeta_0 + 2 \pi c / d$.
We have $\widetilde{\lambda} (\exp(2 \pi i) \eta) = \lambda (\exp(2 \pi i c/d) \zeta)$.
Since $\widetilde{\lambda} (\exp(2 \pi i) \eta) = \widetilde{\lambda} (\eta)$, we have 
\[\lambda( \exp(2 \pi i c / d) \zeta) = \lambda(\zeta),\]
for $\zeta \in \mathbb{T}$ close to $1$.
By the identity theorem, for every $\zeta \in \mathbb{T}$, the equation holds.
\end{proof}

\subsection{Eigenvalue function for a quantum walk}
\label{subsection: eigenvalue functions}
The characteristic polynomial 
\[f(\lambda; z) 
= \det \left( \left( \lambda \delta_{k, l} - \widehat{U}(z ; k, l) \right)_{k, l} \right)\] 
of the inverse Fourier transform $\widehat{U}(z)$
induces a polynomial $f(\lambda) \in \mathcal{Q}_\mathbb{T}[\lambda]$. The coefficients of $f(\lambda)$ are analytic functions defined on $\mathbb{T}$. The polynomial admits a decomposition into irreducible polynomials, and each irreducible factor admits such an expression as in Proposition \ref{proposition: irreducible polynomial}.
Thus we have the following proposition:

\begin{proposition}\label{proposition: eigenvalue functions}
There exist 
\begin{itemize}
\item
a sequence of natural numbers
$d(1), d(2), \cdots, d(m) $
whose sum is $n$,
\item
analytic functions $\lambda_1, \cdots, \lambda_m \colon \mathbb{T} \to \mathbb{T}$,
\end{itemize}
satisfying that the characteristic polynomial $f(\lambda; z)$ of $\widehat{U}(z)$ is given by
\begin{eqnarray*}
f(\lambda; z) =
\prod_{j = 1}^m \prod_{\zeta \colon \zeta^{d(j)} = z} \left( \lambda - \lambda_j(\zeta) \right).
\end{eqnarray*}
\end{proposition}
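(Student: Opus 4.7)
The plan is to factor the characteristic polynomial $f(\lambda) \in \mathcal{Q}_\mathbb{T}[\lambda]$ into irreducibles in the polynomial ring over the field $\mathcal{Q}_\mathbb{T}$, and then feed each irreducible factor into Proposition \ref{proposition: irreducible polynomial}. Since expanding the determinant shows that $f(\lambda)$ is monic of degree $n$ with coefficients that are analytic functions defined on a neighborhood of $\mathbb{T}$, it indeed lies in $\mathcal{Q}_\mathbb{T}[\lambda]$. As $\mathcal{Q}_\mathbb{T}[\lambda]$ is a unique factorization domain, we may write
\[
f(\lambda) = \prod_{j=1}^m g_j(\lambda),
\]
where each $g_j$ is monic and irreducible in $\mathcal{Q}_\mathbb{T}[\lambda]$ (any unit in $\mathcal{Q}_\mathbb{T}$ introduced by the factorization can be absorbed since $f$ is monic). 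Setting $d(j) := \deg g_j$ gives $\sum_{j=1}^m d(j) = n$ by comparing degrees.

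Next I would verify the hypothesis of Proposition \ref{proposition: irreducible polynomial} for each $g_j$. For every $z \in \mathbb{T}$, the matrix $\widehat{U}(z)$ is unitary, so all roots of $f(\lambda; z)$ lie on $\mathbb{T}$. Outside the finite set of $z \in \mathbb{T}$ where some coefficient of some $g_j$ has a pole, each $g_j(\lambda; z)$ divides $f(\lambda; z)$ in $\mathbb{C}[\lambda]$; hence its roots also lie on $\mathbb{T}$. Because the roots of a monic polynomial in $\mathbb{C}[\lambda]$ depend continuously on its coefficients, and because the coefficients of $g_j$ are analytic on a neighborhood of $\mathbb{T}$ (the finite pole set is in fact empty once we realize the factors cannot have poles there — if $g_j$ had a pole at $z_0 \in \mathbb{T}$, the leading behavior would force unbounded roots, contradicting what is seen on nearby $z$), the unit-circle condition extends to all of $\mathbb{T}$. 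Proposition \ref{proposition: irreducible polynomial} then supplies an analytic $\lambda_j \colon \mathbb{T} \to \mathbb{C}$ with
\[
g_j(\lambda; z) = \prod_{\zeta^{d(j)} = z} (\lambda - \lambda_j(\zeta)),
\]
and since each value $\lambda_j(\zeta)$ is a root of $g_j(\lambda; \zeta^{d(j)})$, we have $\lambda_j \colon \mathbb{T} \to \mathbb{T}$. Multiplying the factorizations of the $g_j$ yields the asserted formula for $f(\lambda; z)$.

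The step I expect to require the most care is ruling out poles of the coefficients of the individual $g_j$ on $\mathbb{T}$: a priori the factorization in $\mathcal{Q}_\mathbb{T}[\lambda]$ produces factors whose coefficients are meromorphic, and one must exclude the possibility that two irreducible factors conspire to cancel singularities on $\mathbb{T}$ that individually are present. The clean way to handle this is to note that on a punctured neighborhood of any hypothetical pole $z_0 \in \mathbb{T}$, the $d(j)$ roots of $g_j(\lambda; z)$ all lie on $\mathbb{T}$ and hence are bounded, so by Vieta's formulas the elementary symmetric polynomials of the roots remain bounded as $z \to z_0$; these are precisely the coefficients of $g_j$, so the singularity at $z_0$ is removable. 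With this point addressed the rest of the argument is the direct synthesis described above.
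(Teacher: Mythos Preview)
Your proposal is correct and matches the paper's approach exactly: the paper simply notes that $f(\lambda)$ decomposes into irreducible polynomials in $\mathcal{Q}_\mathbb{T}[\lambda]$ and that each irreducible factor admits the expression of Proposition~\ref{proposition: irreducible polynomial}. Your Vieta argument to rule out poles of the irreducible factors on $\mathbb{T}$ is more explicit than what the paper writes (the paper essentially absorbs this into the first line of the proof of Proposition~\ref{proposition: irreducible polynomial}), but the underlying idea is the same.
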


In our argument before Proposition \ref{proposition: eigenvalue functions},
the factor $\prod_{\zeta \colon \zeta^{d(j)} = z} \left( \lambda - \lambda_j(\zeta) \right)$
is irreducible.
However the equation in Proposition \ref{proposition: eigenvalue functions}
does not imply each factor is irreducible, because the factor may admit further decomposition as in Lemma \ref{lemma: iff reducible}.

\begin{definition}\label{definition: system of eigenvalue functions}
For the quantum walk $U$, the $m$-tuple of pairs
\[\left( (d(1), \lambda_1), (d(2), \lambda_2), \cdots, (d(m), \lambda_m) \right)\]
satisfying the equation in Proposition \ref{proposition: eigenvalue functions} is called 
{\rm a system of eigenvalue functions} of
$\widehat{U}$.
\end{definition}

%Here we do not require irreducibility of 
%$\prod_{\zeta \colon \zeta^{d(j)} = z} \left( \lambda - \lambda_j(\zeta) \right)$.
For the quantum walk $U$, the natural number $m$ and the system of eigenvalue functions of $\widehat{U}$ are not necessarily unique.
It admits the following three types of replacements:
\begin{enumerate}
\item
Permutation on the index $\{1, 2, \cdots, m\}$.
\item
Rotation on the function $\lambda_j$.
More precisely, The system admits the replacement of $\lambda_j(\zeta)$ with
\[\lambda_j \left( \exp \left( \frac{2 \pi i c}{d(j)} \right) \zeta \right),\] 
where $c$ is a natural number.
% such that $d(j)/c \in \mathbb{N}$.
See Proposition \ref{proposition: irreducible polynomial} (\ref{item: rotation}).
\item
Decomposition described in Lemma \ref{lemma: iff reducible}.
More precisely,
in the case that 
\[\lambda_j(\zeta)
=
\lambda_j \left( \exp \left( \frac{2 \pi i c}{d(j)} \right) \zeta \right),
\quad
b := \dfrac{d(j)}{c} \in \mathbb{N},\]
the pair $(d(j), \lambda_j)$ can be replaced with the $b$-tuple of pairs
\[\left(c, \widetilde\lambda\right), \left(c, \widetilde\lambda\right), 
\cdots, \left(c, \widetilde\lambda\right).\]
The new eigenvalue function is given by
\[\widetilde\lambda(\eta) = \lambda_j(\textrm{the\ } b \textrm{-th\ root\ of\ } \eta)\]
\end{enumerate}

If the third procedure can not be applied to the system, the system is said to be {\it indecomposable}.
If two systems of eigenvalue functions are given, by applying the above procedures (1), (2), and (3),
we obtain a common indecomposable system of eigenvalue functions.
If two indecomposable systems of eigenvalue functions are given, by applying the procedures (1), (2) to one system,
we obtain the other system.
This is a conclusion of the uniqueness of the irreducible decomposition of $f(\lambda)$.

\begin{definition}\label{definition: absolute value of winding number}
Let 
\[\left( (d(1), \lambda_1), (d(2), \lambda_2), \cdots, (d(m), \lambda_m) \right)\]
be a system of eigenvalue functions of $\widehat{U}$.
We denote by $w(\lambda_j)$ the winding number of the analytic map $\lambda_j \colon \mathbb{T} \to \mathbb{T}$.
We define a quantity $|w|(U)$ by
the sum
\[\sum_{j = 1}^m |w(\lambda_j)|.\]
\end{definition}
The quantity $|w|(U)$ is uniquely determined by $U$, 
because the sum is preserved under the procedures (1), (2), and (3).

\subsection{Analytic section of eigenvectors}
\label{subsection: analytic section}

The following is 
a structure theorem on the inverse Fourier transform $\widehat{U}$ of an analytic homogeneous quantum walk $U$ on $\mathbb{Z}$
\begin{proposition}\label{proposition: structure of dual}
For every indecomposable system of eigenvalue functions
\[\left( (d(1), \lambda_1), (d(2), \lambda_2), \cdots, (d(m), \lambda_m) \right)\]
of $\widehat{U}$,
there exist 
analytic maps
${\bf v}_1, \cdots, {\bf v}_m \colon \mathbb{T} \to \mathbb{C}^n$
satisfying the following:
\begin{itemize}
\item
for every $z \in \mathbb{T}$, 
\[
\left\{ {\bf v}_j (\zeta) \ 
\left| \ 1 \le j \le m, 
\zeta \in \mathbb{T}, \zeta^{d(j)} = z \right.\right\} \]
forms an orthonormal basis of $\mathbb{C}^n$,
\item
for every $1 \le j \le m$, and for every $\zeta \in \mathbb{T}$,
\[\widehat{U}\left( \zeta^{d(j)} \right) {\bf v}_j(\zeta) = \lambda_j(\zeta) {\bf v}_j(\zeta).\]
\end{itemize}
\end{proposition}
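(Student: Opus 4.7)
The plan is to construct each ${\bf v}_j$ individually as an analytic eigenvector map satisfying $\widehat{U}(\zeta^{d(j)}) {\bf v}_j(\zeta) = \lambda_j(\zeta) {\bf v}_j(\zeta)$ and $\|{\bf v}_j(\zeta)\|=1$, and then to deduce the orthonormal basis property at each $z \in \mathbb{T}$ from distinctness of eigenvalues on a cofinite subset of $\mathbb{T}$ combined with continuity. Fix $j$ and consider the analytic matrix-valued map
\[
\zeta \longmapsto B_j(\zeta) := \widehat{U}(\zeta^{d(j)}) - \lambda_j(\zeta) I_n.
\]
Since $\lambda_j(\zeta)$ is a root of the characteristic polynomial $f(\lambda;\zeta^{d(j)})$ for every $\zeta$, we have $\det B_j(\zeta) \equiv 0$, so each column of $\mathrm{adj}\,B_j(\zeta)$ lies in the $\lambda_j(\zeta)$-eigenspace of $\widehat{U}(\zeta^{d(j)})$ and is analytic in $\zeta$.

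If no other index $j'$ shares the pair $(d(j),\lambda_j)$ with $j$, then $\lambda_j(\zeta)$ has algebraic multiplicity one at generic $\zeta$, hence at least one column of $\mathrm{adj}\,B_j(\zeta)$ is not the zero map. That column is an analytic ${\bf x}_j \colon \mathbb{T}\to\mathbb{C}^n$ lying in $\ker B_j(\zeta)$ on a cofinite subset of $\mathbb{T}$; Lemma \ref{lemma: unit vector} then produces a unit-norm analytic normalization ${\bf v}_j$, and by continuity $\widehat{U}(\zeta^{d(j)}){\bf v}_j(\zeta)=\lambda_j(\zeta){\bf v}_j(\zeta)$ on all of $\mathbb{T}$. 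When a group of indices $j_1, \dots, j_\beta$ share a common pair $(d,\lambda)$, the adjugate of $B_{j_1}(\zeta)$ vanishes identically because the generic algebraic multiplicity of $\lambda(\zeta)$ is $\beta \geq 2$. In that case I would work over the field $\mathcal{Q}_{\mathbb{T}}$: viewed as an $n\times n$ matrix with entries in $\mathcal{Q}_{\mathbb{T}}$, $B_{j_1}(\zeta)$ has rank $n-\beta$, and Gaussian elimination (e.g.\ Cramer's rule applied to a generically invertible $(n-\beta)\times(n-\beta)$ submatrix $C$) yields $\beta$ null-space vectors with entries in $\mathcal{Q}_{\mathbb{T}}$. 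Multiplying by $\det C$ clears the poles and produces $\beta$ analytic maps ${\bf x}_{j_1}, \dots, {\bf x}_{j_\beta} \colon \mathbb{T} \to \mathbb{C}^n$ that lie in $\ker B_{j_1}(\zeta)$ (by the identity theorem applied to $B_{j_1}\cdot(\det C)\cdot{\bf x}$) and are linearly independent on a cofinite subset of $\mathbb{T}$. Lemma \ref{lemma: orthonormal basis} then produces the required orthonormal analytic sections ${\bf v}_{j_1}, \dots, {\bf v}_{j_\beta}$, and they remain in the $\lambda(\zeta)$-eigenspace by continuity from the generic set.

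Finally, to verify that $\{{\bf v}_j(\zeta):j,\ \zeta^{d(j)}=z\}$ is an orthonormal basis at every $z\in\mathbb{T}$: within an equivalence class sharing $(d,\lambda)$, orthonormality at each $\zeta$ is built into the construction via Lemma \ref{lemma: orthonormal basis}, while orthogonality of ${\bf v}_j(\zeta)$ and ${\bf v}_j(\zeta\omega)$ for a nontrivial $d$-th root of unity $\omega$ follows from indecomposability (Lemma \ref{lemma: no multiple root} gives $\lambda(\zeta\omega) \ne \lambda(\zeta)$ on a cofinite set) together with the normality of $\widehat{U}(z)$. Across different equivalence classes, the corresponding irreducible factors of $f(\lambda)$ are coprime in $\mathcal{Q}_{\mathbb{T}}[\lambda]$, so their roots coincide only at finitely many $z$; off this finite set, the vectors come from distinct eigenspaces of the unitary $\widehat{U}(z)$ and are automatically orthogonal. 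Continuity of each ${\bf v}_j$ extends all orthogonality relations to the full circle, and since $\sum_{j=1}^{m} d(j) = n$ the resulting orthonormal family is a basis of $\mathbb{C}^n$.

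I expect the main obstacle to be the multiplicity case $\beta \geq 2$: the naive first-order adjugate of $B_j$ vanishes identically, so one cannot read off an eigenvector from it. Handling this requires a detour through linear algebra over the field $\mathcal{Q}_{\mathbb{T}}$ (or equivalently, higher-order cofactor minors) to produce enough linearly independent analytic sections of the multi-dimensional kernel before Lemma \ref{lemma: orthonormal basis} can be applied to orthonormalize them.
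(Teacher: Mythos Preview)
Your strategy mirrors the paper's: produce analytic eigenvector sections for each irreducible factor, orthonormalize via Lemma~\ref{lemma: orthonormal basis}, and then deduce the orthonormal-basis property from generic distinctness of eigenvalues together with continuity. The substantive difference is how the raw sections are obtained. Where you use $\mathrm{adj}\,B_j$ in the simple case and Gaussian elimination over $\mathcal{Q}_\mathbb{T}$ in the multiple case, the paper writes down, for a block of $p$ identical irreducible factors $g_1=\cdots=g_p$, the single analytic matrix
\[
X^{(1)}(\zeta)\;=\;\prod_{\eta^{d(1)}=\zeta^{d(1)},\ \eta\neq\zeta}\bigl(\lambda_1(\eta)-\widehat{U}(\zeta^{d(1)})\bigr)\;\cdot\;\prod_{j>p} g_j\bigl(\widehat{U}(\zeta^{d(1)})\bigr),
\]
whose image is generically the full $\lambda_1(\zeta)$-eigenspace; it then selects $p$ generically independent columns of $X^{(1)}$ and feeds them to Lemma~\ref{lemma: orthonormal basis}. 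This handles every multiplicity $p\ge 1$ uniformly, with no case split, whereas your route must branch.

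There is one gap. You group indices by literal equality of the pair $(d(j),\lambda_j)$ and assert that if $j$ is alone in its group then $\lambda_j(\zeta)$ has generic algebraic multiplicity one. But an indecomposable system is only determined up to the rotation procedure~(2) of Subsection~\ref{subsection: eigenvalue functions}: it may contain both $(d,\lambda)$ and $(d,\lambda(\omega\,\cdot\,))$ for a nontrivial $d$-th root of unity $\omega$, and these come from the \emph{same} irreducible factor of $f(\lambda)$. In that situation $\lambda_j(\zeta)$ has multiplicity $\ge 2$ even though no other index matches literally, so your adjugate vanishes identically; and even if you detect this and fall back to the higher-multiplicity routine, the two sections you would build for $j$ and $j'$ separately both land (at $\zeta$ and $\omega^{-1}\zeta$ respectively) in the \emph{same} eigenspace of $\widehat{U}(z)$, with nothing in your construction forcing them to be orthogonal. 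The paper avoids this by first observing that, since indecomposable systems are unique up to procedures~(1) and~(2), it suffices to build the ${\bf v}_j$ for one conveniently chosen system---one in which all copies of a given irreducible factor carry the identical $\lambda$---and then to transport via $\tilde{\bf v}_j(\zeta)={\bf v}_j(\omega\zeta)$. Either insert that reduction at the outset, or group indices by the underlying irreducible factor of $f(\lambda)\in\mathcal{Q}_\mathbb{T}[\lambda]$ rather than by the literal pair.
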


%In the above situation, the characteristic polynomial $f(\lambda; z)$ of $\widehat{U}(z)$ is given by
%\begin{eqnarray*}
%f(\lambda; z) =
%\prod_{j = 1}^m \prod_{\zeta \colon \zeta^{d(j)} = z} \left( \lambda - \lambda_j(\zeta) \right).
%\end{eqnarray*}
%It follows that the $m$-tuple of pairs
%\[\left( (d(1), \lambda_1), (d(2), \lambda_2), \cdots, (d(m), \lambda_m) \right)\]
%is a system of eigenvalue functions of $\widehat{U}$.

\begin{proof}
Because the indecomposable system of the eigenvalue functions is essentially unique,
it suffices to construct the required analytic sections of eigenvectors for some indecomposable system.

Let
$f(\lambda) = g_1(\lambda) g_2(\lambda) \cdots g_m(\lambda)$
be an irreducible decomposition of the characteristic polynomial $f(\lambda) \in \mathcal{Q}_\mathbb{T}[\lambda]$.
For $1 \le j \le m$, denote by $d(j)$ the degree of $g_j(\lambda)$.
We may assume 
that $g_1, g_2, \cdots, g_m$ are monic.
By Proposition \ref{proposition: irreducible polynomial},
there exists an analytic function $\lambda_j \colon \mathbb{T} \to \mathbb{T}$ satisfying that 
\[g_j(\lambda; z) = \prod_{\zeta \colon \zeta^{d(j)} = z} (\lambda - \lambda_j(\zeta)). \]
The collection of such $\lambda_j(\zeta)$ is the set of all the roots of $f(\lambda; z)$.
We may further assume that
\[g_1 = \cdots = g_p, g_1 \neq g_{p + 1}, g_1 \neq g_{p + 2}, \cdots, g_1 \neq g_m.\]
and that
$\lambda_1 = \cdots = \lambda_p$.
By Lemma \ref{lemma: iff reducible}, on a coset of finite subset of $\mathbb{T}$,
the eigenvalue $\lambda_1(\zeta)$ of $\widehat{U}(\zeta^{d(1)})$ is different from other eigenvalues
\begin{eqnarray*}
&\lambda_1 \left( \exp \left( \dfrac{2 \pi k i}{d(1)} \right) \zeta \right),&
\quad k = 1, 2, \cdots, d(1) - 1,\\
&\lambda_j (\eta),&
\quad j = p + 1, p + 2, \cdots, m, \quad \eta \textrm{\ is a\ } d(j) \textrm{-th root of\ } \zeta^{d(1)}. 
\end{eqnarray*}

Define a matrix $X^{(1)}(\zeta) \in M_n(\mathbb{C})$ by the product
\[X^{(1)}(\zeta) = 
\prod_{\eta \colon \eta^{d(1)} = \zeta^{d(1)}, \eta \neq \zeta} 
\left( \lambda_1(\eta) - \widehat{U}\left( \zeta^{d(1)} \right) \right)
\cdot \prod_{j = p + 1}^m g_j \left( \widehat{U}\left( \zeta^{d(1)} \right) \right). \]
Note that for every $\zeta \in \mathbb{T}$, $\widehat{U}\left( \zeta^{d(1)} \right)$ is diagonalizable and satisfies
\[
\left(\lambda_1(\zeta) - \widehat{U}\left( \zeta^{d(1)} \right) \right) X^{(1)} (\zeta)
= f \left( \widehat{U} \left( \zeta^{d(1)} \right); \zeta^{d(1)} \right)
= O.
\]
On a coset of a finite subset of $\mathbb{T}$, the image of $X^{(1)}(\zeta)$ is the eigenspace of $\widehat{U}\left( \zeta^{d(1)} \right)$
whose eigenvalue is $\lambda_1(\zeta)$.
On a coset of a finite subset of $\mathbb{T}$, the rank of $X^{(1)}(\zeta)$ is equal to the multiplicity $p$ of the eigenvalue $\lambda_1(\zeta)$.
Let $X^{(1)}_l(\zeta)$ be the $l$-th column of $X^{(1)}(\zeta)$.
Note that for  every choice of  $1 \le l(1) < l(2) < \cdots < l(p) \le n$, 
the map 
\[\zeta \mapsto X^{(1)}_{l(1)}(\zeta) \wedge \cdots \wedge X^{(1)}_{l(p)}(\zeta) \in \wedge^{p} \mathbb{C}^n\]
to the $p$-th exterior product
is analytic.
There exists a collection $l(1) < l(2) < \cdots < l(p)$ of labels satisfying that
\[X^{(1)}_{l(1)}(\zeta) \wedge \cdots \wedge X^{(1)}_{l(p)}(\zeta) \in \wedge^p \mathbb{C}^n\]
is not the constant map $\bf 0$ and the zero set is at most finite.
This means that 
the set of vectors
\[ \left\{X^{(1)}_{l(1)}(\zeta), \cdots, X^{(1)}_{l(p)}(\zeta) \right\} \]
is linearly independent and spans the image of $X^{(1)}(\zeta)$ on a coset of a finite subset of $\mathbb{T}$.

By Lemma \ref{lemma: orthonormal basis},
there exists analytic maps
${\bf v}^{(1)}, {\bf v}^{(2)}, \cdots, {\bf v}^{(p)} \colon \mathbb{T} \to \mathbb{C}^n$ satisfying the following conditions:
\begin{itemize}
\item
for every $\zeta \in \mathbb{T}$, $\left\{ {\bf v}^{(1)}(\zeta), {\bf v}^{(2)}(\zeta), \cdots, {\bf v}^{(p)}(\zeta) \right\}$ forms an orthonormal system,
\item
on a coset of a finite subset of $\mathbb{T}$,  the linear span of 
$\left\{ X^{(1)}_{l(1)}(\zeta), \cdots, X^{(1)}_{l(p)}(\zeta) \right\}$ is identical to the linear span of $\left\{ {\bf v}^{(1)}(\zeta), {\bf v}^{(2)}(\zeta), \cdots, {\bf v}^{(p)}(\zeta) \right\}$.
\end{itemize}
It follows that on a coset of a finite subset of $\mathbb{T}$, 
the following four subspaces of $\mathbb{C}^n$ coincide:
\begin{itemize}
\item
the eigenspace of $\widehat{U}\left( \zeta^{d(1)} \right)$
whose eigenvalue is $\lambda_1(\zeta)$,
\item
the image of $X^{(1)}(\zeta)$,
\item
the linear span of $\left\{X^{(1)}_{l(1)}(\zeta), \cdots, X^{(1)}_{l(p)}(\zeta) \right\}$,
\item
the linear span of
$\left\{ {\bf v}^{(1)}(\zeta), {\bf v}^{(2)}(\zeta), \cdots, {\bf v}^{(p)}(\zeta) \right\}$.
\end{itemize}
The vectors
${\bf v}^{(1)}(\zeta), {\bf v}^{(2)}(\zeta), \cdots, {\bf v}^{(p)}(\zeta)$ are eigenvectors of $\widehat{U}\left( \zeta^{d(1)} \right)$ whose eigenvalue is $\lambda_1(\zeta)$, on the coset of $\mathbb{T}$.
By the continuity of ${\bf v}^{(1)}(\zeta)$, ${\bf v}^{(2)}(\zeta)$, $\cdots$, ${\bf v}^{(p)}(\zeta)$ and $\lambda_1(\zeta)$, it turns out that there exists no exception. 

By Lemma \ref{lemma: iff reducible},
every two eigenvalue functions chosen from
\[\lambda_1(\zeta), \lambda_1(\exp(2 \pi i/d(1)) \zeta), \cdots, 
\lambda_1(\exp(2 \pi i (d(1) -1)/d(1)) \zeta)\]
are not identical.
With finite exceptions, for fixed $z \in \mathbb{T}$, the $(p \times d(1))$ vectors
\[ \left\{ {\bf v}^{(j)}(\zeta)\ |\ 1 \le j \le p, \zeta^{d(1)} = z \right\},\]
form an orthonormal system in $\mathbb{C}^n$.
Again by continuity, the exceptions are removed.

Now we proceed to the next step.
Rearranging the index, we may further assume that
\[g_{p + 1} = \cdots = g_{p + q}, g_{p +1} \neq g_{p + q + 1}, \cdots,\]
and that $\lambda_{p + 1} = \cdots = \lambda_{p + q}$.
Define $X^{(p + 1)}(\zeta)$ by
\begin{eqnarray*}
X^{(p + 1)}(\zeta)
&=& 
g_1 \left( \widehat{U}(\zeta^{d(p + 1)}) \right)
\cdot \prod_{\eta \colon \eta^{d(p + 1)} = \zeta^{d(p + 1)}, \eta \neq \zeta} 
\left( \lambda_{p + 1}(\eta) - \widehat{U}(\zeta^{d(p + 1)}) \right)\\
& & \hspace{60mm}
\cdot \prod_{j = p + q + 1}^m g_j \left( \widehat{U}(\zeta^{d(p + 1)}) \right). 
\end{eqnarray*}
Using column vectors of $X^{(p + 1)}(\zeta)$,
we can construct a section of orthonormal basis
\[{\bf v}^{(p + 1)}(\zeta), {\bf v}^{(p + 2)}(\zeta), \cdots, {\bf v}^{(p + q)}(\zeta)\]
which consists of eigenvectors of $\widehat{U} \left( \zeta^{d(p + 1)} \right)$ whose eigenvalues are
$\lambda_{p + 1}(\zeta)$. 
For every $z \in \mathbb{T}$, $(q \times d(p + 1))$ vectors
\[\left\{ {\bf v}^{(j)}(\zeta)\ |\ p + 1 \le j \le p + q, \zeta^{d(p + 1)} = z \right\},\]
forms an orthonormal system. 
On a coset of a finite subset of $\mathbb{T}$,
the roots of $g_1(\lambda; z)$ are different from those of $g_{p + 1}(\lambda; z)$.
Therefore the members of the system are perpendicular to
\[\left\{ {\bf v}^{(j)}(\zeta)\ |\ 1 \le j \le p, \zeta^{d(1)} = z \right\},\]
on the coset. Again by continuity, it turns out that there exists no exception.

Repeating this procedure, we finish the construction of ${\bf v}^{(j)} (\zeta)$, $1 \le j \le n$.
\end{proof}

\section{Realization by continuous-time QW}
\label{section: DTQW by CTQW}

In this section, we first construct a collection of typical quantum walks, 
which is called {\it model quantum walks}.
These walks are like atoms in the world of discrete-time analytic homogeneous quantum walks $U$ on $\mathbb{Z}$.
Such a walk $U$ is equivalent to a direct sum of model quantum walks.

\subsection{Model quantum walks}\label{subsection: model quantum walk}

We introduce the model quantum walk $U_{d, \lambda}$, which is constructed by a natural number $d$ and
an analytic function $\lambda \colon \mathbb{T} \to \mathbb{T}$.
Let
\[\lambda(\zeta) = \sum_{s = -\infty}^\infty c(s) \zeta^s 
%= \cdots + c(-1) \zeta^{-1} + c(0) + c(1) \zeta + c(2) \zeta^2 + \cdots 
\]
be the Laurent series of $\lambda(\zeta)$. For $k, l \in \{1, 2, \cdots, d\}$, define an analytic
operator
$U_{k, l}$ acting on $\ell_2(\mathbb{Z})$ by
\begin{eqnarray*}
U_{k, l}
= \sum_{s = -\infty}^\infty c(k - l + d s) S_s
%&=& \cdots + c(k - l - d) S_{-1} + c(k - l) + c(k - l + d) S_1 + c(k - l + 2 d) S_2 + \cdots. 
\end{eqnarray*}
Define an analytic operator $U_{d, \lambda}$ acting on $\ell_2(\mathbb{Z}) \otimes \mathbb{C}^d$ by
\[ U_{d, \lambda} = \left( U_{k, l} \right)_{k, l}.\]
Let $\lambda_{k, l} \colon \mathbb{T} \to \mathbb{C}$ be the function
defined by
\begin{eqnarray*}
\lambda_{k, l}(z) &=& \sum_{s = -\infty}^\infty c(k - l + d s) z^s.
\end{eqnarray*}
The inverse Fourier transform $\widehat{U_{k, l}} = \mathcal{F}^{-1} U_{k, l} \mathcal{F}$
is identical to the multiplication operator $M[\lambda_{k, l}]$ by $\lambda_{k, l}$. 

In the case that $d = 1$, $\widehat{U_{1, \lambda}}$ is nothing other than the multiplication operator $M[\lambda]$ by the function $\lambda$.
The unitary $U_{1, \lambda}$ is the operator given by the Fourier transform of $\lambda$.
The operator is expressed by 
\[U_{1, \lambda} = \sum_{s= -\infty}^\infty c(s) S_s.\]

We first prove that the natural number $d$ does not have an important role.
Define a unitary operator $W_d \colon
\ell_2 (\mathbb{Z}) \otimes \mathbb{C}^d \to \ell_2 (\mathbb{Z})$ by
\[W_d (\delta_s \otimes \delta_k) = \delta_{k + d s}, \quad s \in \mathbb{Z}, k \in \{1, 2, \cdots, d\}.\]
We call $W_d$ the {\it rearrangement}.

\begin{lemma}\label{lemma: rearrangement}
Let $\lambda \colon \mathbb{T} \to \mathbb{T}$ be an analytic function and let $d$ be a natural number.
Then we have
$U_{d, \lambda} = W_d^* U_{1, \lambda} W_d.$
\end{lemma}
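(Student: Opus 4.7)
The plan is to verify the identity $W_d U_{d,\lambda} = U_{1,\lambda} W_d$ by evaluating both sides on the standard basis $\{\delta_t \otimes \delta_l \mid t \in \mathbb{Z},\ l \in \{1,\dots,d\}\}$ of $\ell_2(\mathbb{Z}) \otimes \mathbb{C}^d$, then invoking unitarity of $W_d$ to conclude $U_{d,\lambda} = W_d^* U_{1,\lambda} W_d$. Since $W_d$ is defined as a unitary on an orthonormal basis and both $U_{d,\lambda}$ and $U_{1,\lambda}$ are bounded operators (analyticity gives membership in the C$^\infty$-class, hence boundedness), showing the intertwining relation on basis vectors suffices.

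First I would compute the left-hand side. By the definition of $U_{d,\lambda}$ as the matrix $(U_{k,l})_{k,l}$, we have
\[
U_{d,\lambda}(\delta_t \otimes \delta_l) = \sum_{k=1}^d (U_{k,l}\delta_t)\otimes \delta_k = \sum_{k=1}^d \sum_{s \in \mathbb{Z}} c(k-l+ds)\,\delta_{s+t}\otimes \delta_k,
\]
and applying $W_d$ termwise turns each $\delta_{s+t}\otimes\delta_k$ into $\delta_{k + d(s+t)}$. Next I would compute the right-hand side: $W_d(\delta_t\otimes \delta_l) = \delta_{l+dt}$, and since $U_{1,\lambda} = \sum_{r \in \mathbb{Z}} c(r)S_r$, we get
\[
U_{1,\lambda} W_d(\delta_t\otimes\delta_l) = \sum_{r \in \mathbb{Z}} c(r)\,\delta_{r + l + dt}.
\]

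The remaining step is to match the two expressions via a change of summation index. Setting $r = k - l + ds$, the exponent $k + d(s+t) = (k-l+ds) + l + dt = r + l + dt$ matches exactly. The key combinatorial fact is that the map $(k,s) \mapsto r = k - l + ds$ from $\{1,\dots,d\} \times \mathbb{Z}$ to $\mathbb{Z}$ is a bijection (for each fixed $l$), which is just the uniqueness of division with remainder: every integer $r$ has a unique representation $r + l = k + ds$ with $k \in \{1,\dots,d\}$ and $s \in \mathbb{Z}$. This reindexing converts the double sum $\sum_k \sum_s c(k-l+ds)\,\delta_{k+d(s+t)}$ into $\sum_r c(r)\,\delta_{r+l+dt}$, establishing the desired equality.

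There is no real obstacle here beyond bookkeeping. The only minor subtlety worth a line in the written proof is the absolute convergence of the Laurent series $\sum_s |c(s)| < \infty$ (which follows from analyticity of $\lambda$ on a neighborhood of $\mathbb{T}$), justifying the reindexing of the double sum. Once convergence is noted, the calculation is purely formal and the lemma follows.
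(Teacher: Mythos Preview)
Your proposal is correct and follows essentially the same approach as the paper: both verify the intertwining relation $W_d U_{d,\lambda} = U_{1,\lambda} W_d$ on basis vectors $\delta_t \otimes \delta_l$, compute each side explicitly, and match the two expressions via the bijection $(k,s)\mapsto k-l+ds$ coming from division with remainder. Your added remark on absolute convergence of $\sum|c(s)|$ to justify the reindexing is a welcome clarification that the paper leaves implicit.
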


\begin{proof}
Fix arbitrary $t \in \mathbb{Z}$ and $l \in \{1, 2, \cdots, d\}$ for a while.
We hit the vectors $\delta_t \otimes \delta_l$
to the unitary operators $W_d U_{d, \lambda}$ and $U_{1, \lambda} W_d$.
We obtain the following equation:
\begin{eqnarray*}
W_d U_{d, \lambda} (\delta_t \otimes \delta_l)
&=&
W_d \sum_{k = 1}^d \sum_{s = -\infty}^\infty c(k - l + d s) S_s \delta_t \otimes \delta_k
\\ 
&=&
W_d \sum_{k = 1}^d \sum_{s = -\infty}^\infty c(k - l + d s) \delta_{s + t} \otimes \delta_k
\\ 
&=&
\sum_{k = 1}^d \sum_{s = -\infty}^\infty c(k - l + d s) \delta_{k + d(s + t)}.
\end{eqnarray*}
Every integer $\sigma$ is uniquely expressed by $\sigma = k - l  + d s, 
k \in \{1, \cdots, d\}, s \in \mathbb{Z}$.
We get the equation
\begin{eqnarray*}
W_d U_{d, \lambda} (\delta_t \otimes \delta_l)
&=&
\sum_{\sigma = -\infty}^\infty c(\sigma) \delta_{l + \sigma + d t}.
\end{eqnarray*}
We also have
\begin{eqnarray*}
U_{1, \lambda} W_d (\delta_t \otimes \delta_l)
=
\sum_{s = -\infty}^\infty c(s) S_s \delta_{l + d t}
=
\sum_{s = -\infty}^\infty c(s) \delta_{l + s + d t}.
\end{eqnarray*}
They are identical.
\end{proof}

\begin{lemma}\label{lemma: calculation of the model quantum walks}
For analytic maps $\lambda, \lambda_1, \lambda_2 \colon \mathbb{T} \to \mathbb{T}$ and a natural number $d$,
\[U_{d, \lambda}^* = U_{d, \overline{\lambda}}, \quad 
U_{d, \lambda_1} U_{d, \lambda_2} = U_{d, \lambda_1 \lambda_2}.\]
\end{lemma}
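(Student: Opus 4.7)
The plan is to reduce everything to the case $d = 1$ via the rearrangement identity $U_{d, \lambda} = W_d^* U_{1, \lambda} W_d$ established in Lemma \ref{lemma: rearrangement}. Once this reduction is made, the two identities become statements about $U_{1, \mu}$, which is essentially the Fourier transform of the multiplication operator $M[\mu]$ on $L^2(\mathbb{T})$. Since multiplication operators multiply pointwise and take the adjoint by complex conjugation, the identities will follow immediately from the functorial properties of the inverse Fourier transform.

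More precisely, I would first check the case $d = 1$. By the discussion preceding the lemma, the inverse Fourier transform satisfies $\widehat{U_{1, \mu}} = M[\mu]$ for every analytic $\mu \colon \mathbb{T} \to \mathbb{T}$. Since $\mathcal{F}^{-1}$ is unitary, taking adjoints commutes with the inverse Fourier transform, so
\[
\widehat{U_{1, \lambda}^*} = \widehat{U_{1, \lambda}}^* = M[\lambda]^* = M[\overline{\lambda}] = \widehat{U_{1, \overline{\lambda}}},
\]
which yields $U_{1, \lambda}^* = U_{1, \overline{\lambda}}$ by injectivity of the Fourier transform. Similarly,
\[
\widehat{U_{1, \lambda_1} U_{1, \lambda_2}} = M[\lambda_1] M[\lambda_2] = M[\lambda_1 \lambda_2] = \widehat{U_{1, \lambda_1 \lambda_2}},
\]
giving $U_{1, \lambda_1} U_{1, \lambda_2} = U_{1, \lambda_1 \lambda_2}$.

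Next I would transport these identities to arbitrary $d$ using Lemma \ref{lemma: rearrangement}. Since $W_d$ is unitary, we have $W_d W_d^* = \mathrm{id}$, so
\[
U_{d, \lambda}^* = (W_d^* U_{1, \lambda} W_d)^* = W_d^* U_{1, \lambda}^* W_d = W_d^* U_{1, \overline{\lambda}} W_d = U_{d, \overline{\lambda}},
\]
and
\[
U_{d, \lambda_1} U_{d, \lambda_2} = W_d^* U_{1, \lambda_1} W_d W_d^* U_{1, \lambda_2} W_d = W_d^* U_{1, \lambda_1 \lambda_2} W_d = U_{d, \lambda_1 \lambda_2}.
\]

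There is no genuine obstacle here: the only substantive work was already done in Lemma \ref{lemma: rearrangement}, which identified $U_{d, \lambda}$ with a conjugate of the single-component Fourier realization. The one point worth verifying carefully is that $M[\overline{\lambda}]$ really is the adjoint of $M[\lambda]$ on $L^2(\mathbb{T})$ (true for any bounded measurable $\lambda$) and that the product $\lambda_1 \lambda_2$ is again an analytic map into $\mathbb{T}$, so that $U_{d, \lambda_1 \lambda_2}$ is a legitimate model quantum walk, but these are immediate from the definitions.
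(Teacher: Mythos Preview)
Your proof is correct and follows essentially the same approach as the paper: both reduce to the $d=1$ case via the rearrangement identity $U_{d,\lambda} = W_d^* U_{1,\lambda} W_d$ from Lemma~\ref{lemma: rearrangement}, and then use that $U_{1,\mu} = \mathcal{F} M[\mu] \mathcal{F}^{-1}$ to read off the adjoint and product identities from the corresponding properties of multiplication operators. The only cosmetic difference is that you first isolate the $d=1$ case and then transport, whereas the paper carries out both steps in a single chain of equalities.
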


\begin{proof}
Using Lemma \ref{lemma: rearrangement}, we have
\begin{eqnarray*}
U_{d, \lambda}^* 
&=& W_d^* U_{1, \lambda}^* W_d
= W_d^* (\mathcal{F} M[\lambda] \mathcal{F}^{-1})^* W_d\\
&=& W_d^* \mathcal{F} M \left[ \overline{\lambda} \right] \mathcal{F}^{-1} W_d
= W_d^* U_{1, \overline{\lambda}} W_d\\
&=& U_{d, \overline{\lambda}}.
\end{eqnarray*}
We also have
\begin{eqnarray*}
U_{d, \lambda_1} U_{d, \lambda_2}  
&=& W_d^* U_{1, \lambda_1} U_{1, \lambda_2} W_d
= W_d^* \mathcal{F} M[\lambda_1] M[\lambda_2] \mathcal{F}^{-1} W_d\\
&=& W_d^* \mathcal{F} M[\lambda_1 \lambda_2] \mathcal{F}^{-1} W_d
= W_d^* U_{1, \lambda_1 \lambda_2} W_d\\
&=& U_{d, \lambda_1 \lambda_2}.  
\end{eqnarray*}
\end{proof}

\begin{lemma}
Let $\lambda \colon \mathbb{T} \to \mathbb{T}$ be an analytic function and let $d$ be a natural number.
For every $z \in \mathbb{T}$, the operator $U_{d, \lambda}$ is unitary.
\end{lemma}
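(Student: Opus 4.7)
The plan is to reduce the question to the case $d=1$ using the rearrangement lemma, and then to invoke the fact that multiplication by a unimodular function on $L^2(\mathbb{T})$ is unitary.

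First I would recall from Lemma \ref{lemma: rearrangement} that $U_{d,\lambda} = W_d^* U_{1,\lambda} W_d$, where $W_d \colon \ell_2(\mathbb{Z}) \otimes \mathbb{C}^d \to \ell_2(\mathbb{Z})$ is unitary (it is a relabeling of an orthonormal basis). Consequently, $U_{d,\lambda}$ is unitary if and only if $U_{1,\lambda}$ is. This reduces everything to showing $U_{1,\lambda}$ is unitary.

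Next I would use the inverse Fourier transform identification $U_{1,\lambda} = \mathcal{F}\, M[\lambda]\, \mathcal{F}^{-1}$, already noted in the paragraph introducing $U_{d,\lambda}$. Since $\mathcal{F}$ is a unitary isomorphism between $\ell_2(\mathbb{Z})$ and $L^2(\mathbb{T})$, it suffices to verify that $M[\lambda]$ is a unitary operator on $L^2(\mathbb{T})$. But this is immediate from the hypothesis $\lambda \colon \mathbb{T} \to \mathbb{T}$: pointwise $|\lambda(z)| = 1$, so $M[\lambda]^* M[\lambda] = M[\overline{\lambda}\lambda] = M[1] = I$, and analogously $M[\lambda] M[\lambda]^* = I$.

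There is really no main obstacle here; everything is a formal composition of results already in the paper. As a sanity check, one can alternatively argue purely algebraically from Lemma \ref{lemma: calculation of the model quantum walks}: using $\lambda \overline{\lambda} = 1$ on $\mathbb{T}$, we get $U_{d,\lambda}^* U_{d,\lambda} = U_{d,\overline{\lambda}} U_{d,\lambda} = U_{d,\overline{\lambda}\lambda} = U_{d,1}$, and a direct inspection of the Laurent coefficients (namely $c(0) = 1$ and $c(s) = 0$ for $s \neq 0$) shows $U_{d,1}$ is the identity, with the symmetric identity giving $U_{d,\lambda} U_{d,\lambda}^* = I$. Either presentation yields a one-line proof.
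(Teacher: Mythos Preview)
Your proposal is correct. Your ``sanity check'' via Lemma~\ref{lemma: calculation of the model quantum walks} is exactly the paper's proof, and your primary argument (reduce to $d=1$ via $W_d$, then note $M[\lambda]$ is unitary) is simply that same proof with the intermediate lemma unpacked inline.
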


\begin{proof}
By Lemma \ref{lemma: calculation of the model quantum walks}, we have
\begin{eqnarray*}
U_{d, \lambda}^* U_{d, \lambda}
=
U_{d, \overline{\lambda}} U_{d, \lambda} 
=
U_{d, \overline{\lambda} \lambda}
=
U_{d, 1}
=
1. 
\end{eqnarray*}
The operator
$U_{d, \lambda}^* U_{d, \lambda}$ is also the identity operator $1$.
\end{proof}

We calculate the eigenvalue functions of the model quantum walks.

\begin{lemma}\label{lemma: eigenvectors for model}
For every $\zeta \in \mathbb{T}$, the column vector
\[\left( 1, \zeta^{-1}, \zeta^{-2}, \cdots, \zeta^{1 - d} \right)^\mathrm{T}\]
is an eigenvector of $\widehat{U_{d, \lambda}}(\zeta^d)$ whose eigenvalue is $\lambda(\zeta)$.
\end{lemma}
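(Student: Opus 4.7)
The plan is a direct computation verifying the eigenvalue equation coordinate by coordinate. The $k$-th entry of $\widehat{U_{d,\lambda}}(\zeta^d)\,(1,\zeta^{-1},\ldots,\zeta^{1-d})^{\mathrm T}$ is
\[
\sum_{l=1}^{d} \lambda_{k,l}(\zeta^{d})\,\zeta^{\,1-l},
\]
and the goal is to identify it with $\lambda(\zeta)\,\zeta^{\,1-k}$, which is the $k$-th entry of $\lambda(\zeta)$ times the original vector.

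First, I would expand $\lambda_{k,l}(\zeta^{d}) = \sum_{s\in\mathbb{Z}} c(k-l+ds)\,\zeta^{ds}$ using the definition given just before the lemma. Plugging this in and collecting the powers of $\zeta$, the expression becomes
\[
\sum_{l=1}^{d}\sum_{s\in\mathbb{Z}} c(k-l+ds)\,\zeta^{\,ds+1-l}.
\]
The crucial combinatorial observation is that the assignment $(l,s)\mapsto \sigma:=k-l+ds$ is a bijection from $\{1,2,\ldots,d\}\times\mathbb{Z}$ onto $\mathbb{Z}$: given $\sigma$, the equation $l=k-\sigma+ds$ together with $l\in\{1,\ldots,d\}$ uniquely determines $(l,s)$ via division with remainder. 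Under this reindexing, the exponent $ds+1-l$ becomes $\sigma+1-k$, so the sum collapses to
\[
\sum_{\sigma\in\mathbb{Z}} c(\sigma)\,\zeta^{\,\sigma+1-k} \;=\; \zeta^{\,1-k}\sum_{\sigma\in\mathbb{Z}} c(\sigma)\,\zeta^{\,\sigma} \;=\; \lambda(\zeta)\,\zeta^{\,1-k},
\]
which is the desired $k$-th entry.

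The only subtlety worth flagging is convergence: the Laurent series of $\lambda$ converges absolutely on an annular neighborhood of $\mathbb{T}$ because $\lambda$ is analytic there, so the reordering of the double sum into a single absolutely convergent series is legitimate. No step seems to be a genuine obstacle; the argument is essentially a bookkeeping check that the definition of the block entries $U_{k,l}$ was engineered precisely so that the vectors $(1,\zeta^{-1},\ldots,\zeta^{1-d})^{\mathrm T}$ diagonalize $\widehat{U_{d,\lambda}}(\zeta^{d})$ with eigenvalues $\lambda(\zeta)$.
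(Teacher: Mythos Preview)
Your proof is correct and follows essentially the same approach as the paper: a direct coordinate-wise computation, expansion of $\lambda_{k,l}(\zeta^d)$ via the Laurent coefficients, and the reindexing bijection $(l,s)\mapsto\sigma=k-l+ds$ to collapse the double sum to $\lambda(\zeta)\,\zeta^{1-k}$. Your added remark on absolute convergence justifying the reordering is a detail the paper leaves implicit.
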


\begin{proof}
We directly compute.
The $k$-th entry of the vector
\[ \widehat{U_{d, \lambda}}(\zeta^d) 
\cdot 
(1, \zeta^{-1}, \zeta^{-2}, \cdots, \zeta^{1 - d})^\mathrm{T} \]
is
\begin{eqnarray*}
\sum_{l = 1}^d \lambda_{k, l}(\zeta^d) \zeta^{1 - l}
&=&
\sum_{l = 1}^d \sum_{s = -\infty}^\infty c(k - l + d s) \zeta^{1 - l + d s}\\
&=&
\zeta^{1 - k} \sum_{l = 1}^d \sum_{s = -\infty}^\infty c(k - l + d s) \zeta^{k - l + d s}.
\end{eqnarray*}
Every integer $\sigma$ is uniquely expressed by $k - l + d s$, $l \in \{1, \cdots, d\}$, $s \in \mathbb{Z}$.
It follows that
\begin{eqnarray*}
\sum_{l = 1}^d \lambda_{k, l}(\zeta^d) \zeta^{1 - l}
=
\zeta^{1 - k} \sum_{\sigma = -\infty}^\infty c(\sigma) \zeta^\sigma
=
\lambda(\zeta) \zeta^{1 - k}.
\end{eqnarray*}
We obtain the following equation:
\[ \widehat{U_{d, \lambda}}(\zeta^d) \cdot (1, \zeta^{-1}, \zeta^{-2}, \cdots, \zeta^{1 - d})^\mathrm{T} 
=
\lambda(\zeta) \cdot (1, \zeta^{-1}, \zeta^{-2}, \cdots, \zeta^{1 - d})^\mathrm{T}.
\]
\end{proof}

We also note that for every $z \in \mathbb{T}$,
\[ \left\{ \left. \dfrac{1}{\sqrt{d}} 
\left( 1, \zeta^{-1}, \zeta^{-2}, \cdots, \zeta^{1 - d} \right)^\mathrm{T} 
\ \right| \ \zeta^d = z \right\} \]
forms an orthonormal basis of $\mathbb{C}^d$.

\begin{lemma}\label{lemma: characteristic polynomial of model}
The characteristic polynomial of the inverse Fourier transform $\widehat{U_{d, \lambda}}(z)$ of the model quantum walk is
\[ \prod_{\zeta \colon \zeta^d = z} (\lambda - \lambda(\zeta)) \in \mathcal{Q}_\mathbb{T}[\lambda]. \]
\end{lemma}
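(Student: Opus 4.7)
The plan is to read off the characteristic polynomial directly from the explicit eigendata produced in Lemma \ref{lemma: eigenvectors for model} and the remark immediately following it. For each fixed $z \in \mathbb{T}$, the $d$ distinct $d$-th roots $\zeta$ of $z$ give, via Lemma \ref{lemma: eigenvectors for model}, a family of eigenvectors
\[\left\{\left(1, \zeta^{-1}, \zeta^{-2}, \ldots, \zeta^{1 - d}\right)^{\mathrm{T}} \ \middle|\ \zeta^d = z \right\}\]
for $\widehat{U_{d, \lambda}}(z)$, with respective eigenvalues $\lambda(\zeta)$. The orthonormality remark tells me that, after dividing by $\sqrt{d}$, these $d$ vectors form an orthonormal basis of $\mathbb{C}^d$. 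Consequently $\widehat{U_{d, \lambda}}(z)$ is unitarily diagonalizable with eigenvalue multiset (counted with multiplicity) equal to $\{\lambda(\zeta) : \zeta^d = z\}$, so its characteristic polynomial at $z$ equals $\prod_{\zeta \colon \zeta^d = z}(\lambda - \lambda(\zeta))$.

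To upgrade this pointwise identity on $\mathbb{T}$ to an equality in $\mathcal{Q}_\mathbb{T}[\lambda]$, I would verify that both sides are polynomials in $\lambda$ whose coefficients are analytic functions of $z$ on a neighborhood of $\mathbb{T}$, and then invoke the identity theorem coefficient-wise. The left-hand side has this property by Lemma \ref{lemma: Fourier expansions} applied to the entries of $\widehat{U_{d, \lambda}}$. For the right-hand side, the coefficients are the elementary symmetric polynomials in $\{\lambda(\zeta) : \zeta^d = z\}$, which are analytic in $z$ by the same symmetric-function argument already used in the proof of Proposition \ref{proposition: irreducible polynomial}.

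I do not anticipate any substantive obstacle: the content of the statement is essentially a bookkeeping consequence of Lemma \ref{lemma: eigenvectors for model} together with the orthonormality remark. The only minor subtlety is making sure the factorization is interpreted in the polynomial ring $\mathcal{Q}_\mathbb{T}[\lambda]$ rather than as a bare pointwise identity; this is handled by the analyticity argument sketched above.
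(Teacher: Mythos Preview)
Your proposal is correct and follows essentially the same approach as the paper: both deduce the characteristic polynomial from the eigendata supplied by Lemma~\ref{lemma: eigenvectors for model}, using that $\widehat{U_{d,\lambda}}(z)$ is unitary and hence diagonalizable. Your version is simply more careful, explicitly invoking the orthonormality remark to account for multiplicities and spelling out why the pointwise identity on $\mathbb{T}$ lifts to an equality in $\mathcal{Q}_\mathbb{T}[\lambda]$, points the paper leaves implicit.
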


\begin{proof}
The roots of the characteristic polynomial of $\widehat{U_{d, \lambda}}(z)$ 
are eigenvalues of the unitary matrix.
By Lemma \ref{lemma: eigenvectors for model},
the eigenvalues are $\{ \lambda(\zeta) \ |\ \zeta^d = z\}$.
\end{proof}

%Let $w(\lambda) \in \mathbb{T}$ be the winding number of 
%$\lambda \colon \mathbb{T} \to \mathbb{T}$.
%We note that while $z = \zeta^d$ rotates $d$-times on $\mathbb{T}$ in the anticlockwise direction,
%the value of eigenvalue function $\lambda$ rotates $w(\lambda)$-times in the anticlockwise direction.
%We say that the rotation type of $U_{d, \lambda}$ is $(d, w(\lambda))$.

\subsection{Structure theorem}

\begin{proposition}\label{proposition: decomposition of inverse Fourier transform}
Let $\widehat{U}$ be the inverse Fourier transform of the quantum walk $U$. 
For every indecomposable system of eigenvalue functions
\[((d(1), \lambda_1), (d(2), \lambda_2), \cdots, (d(m), \lambda_m)),\]
there exists an analytic map 
$\widehat{V} \colon \mathbb{T} \to M_n(\mathbb{C})$ 
to unitary matrices
satisfying
\[\widehat{U}(z) = 
\widehat{V}(z) \left( \widehat{U_{d(1), \lambda_1}}(z) \oplus \widehat{U_{d(2), \lambda_2}}(z) \oplus \cdots \oplus \widehat{U_{d(m), \lambda_m}}(z) \right) \widehat{V}(z)^*, z \in \mathbb{T}. \]
\end{proposition}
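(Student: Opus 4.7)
The plan is to construct $\widehat{V}(z)$ as the change-of-basis matrix between two parallel analytic orthonormal eigenbases of $\mathbb{C}^n$: on the target side, the analytic eigenvectors of $\widehat{U}$ supplied by Proposition \ref{proposition: structure of dual}; on the source side, the explicit eigenvectors of the model walks exhibited in Lemma \ref{lemma: eigenvectors for model}. Because corresponding pairs of eigenvectors will share the same eigenvalue $\lambda_j(\zeta)$, conjugation by such a change-of-basis matrix automatically transports the direct sum of model walks onto $\widehat{U}$, so the only serious issue will be to make the construction analytic in $z$.

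Concretely, I first invoke Proposition \ref{proposition: structure of dual} with the given indecomposable system to obtain analytic maps ${\bf v}_1, \dots, {\bf v}_m \colon \mathbb{T} \to \mathbb{C}^n$ satisfying $\widehat{U}(\zeta^{d(j)}) {\bf v}_j(\zeta) = \lambda_j(\zeta) {\bf v}_j(\zeta)$ and such that $\{ {\bf v}_j(\zeta) : 1 \le j \le m,\ \zeta^{d(j)} = z \}$ is an orthonormal basis of $\mathbb{C}^n$ for each $z \in \mathbb{T}$. Identifying the ambient space of the direct sum with $\mathbb{C}^{d(1)} \oplus \cdots \oplus \mathbb{C}^{d(m)}$ through the block inclusions $\iota_j \colon \mathbb{C}^{d(j)} \hookrightarrow \mathbb{C}^n$, set
\[
{\bf b}_j(\zeta) = \frac{1}{\sqrt{d(j)}}\, \iota_j\bigl( (1, \zeta^{-1}, \zeta^{-2}, \dots, \zeta^{1-d(j)})^{\mathrm T} \bigr), \quad \zeta \in \mathbb{T}.
\]
By Lemma \ref{lemma: eigenvectors for model} and the note following it, $\{{\bf b}_j(\zeta) : 1 \le j \le m,\ \zeta^{d(j)} = z\}$ is an orthonormal basis of $\mathbb{C}^n$ made of eigenvectors of $\widehat{U_{d(1),\lambda_1}}(z) \oplus \cdots \oplus \widehat{U_{d(m),\lambda_m}}(z)$ with the same eigenvalues $\lambda_j(\zeta)$. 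I then define
\[
\widehat{V}(z) = \sum_{j=1}^m \sum_{\zeta \colon \zeta^{d(j)} = z} {\bf v}_j(\zeta)\, {\bf b}_j(\zeta)^* \in M_n(\mathbb{C}), \quad z \in \mathbb{T}.
\]
For each fixed $z$, $\widehat{V}(z)$ carries the source orthonormal basis bijectively onto the target one and is therefore unitary, while the eigenvalue matching makes the conjugation identity $\widehat{U}(z) = \widehat{V}(z)\bigl(\bigoplus_j \widehat{U_{d(j),\lambda_j}}(z)\bigr)\widehat{V}(z)^*$ immediate.

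The main obstacle is the analyticity of $\widehat{V}$ in $z$. Individually the sections ${\bf v}_j(\zeta)$ and ${\bf b}_j(\zeta)$ are typically multi-valued in $z$, because they depend on the choice of a $d(j)$-th root, but the inner sum in the definition of $\widehat{V}(z)$ symmetrizes over all such roots and is therefore single-valued. To verify analyticity I take the analytic extension of $\zeta \mapsto {\bf v}_j(\zeta){\bf b}_j(\zeta)^*$ to a neighborhood of $\mathbb{T}$ (noting that on $\mathbb{T}$ the entries $\overline{\zeta^{-k}}$ appearing in ${\bf b}_j(\zeta)^*$ coincide with the Laurent monomials $\zeta^k$, which extend analytically off $\mathbb{T}$), expand it as a Laurent series $\sum_{k \in \mathbb{Z}} A^{(j)}_k \zeta^k$ convergent on a neighborhood of $\mathbb{T}$, and observe
\[
\sum_{\zeta \colon \zeta^{d(j)} = z} {\bf v}_j(\zeta){\bf b}_j(\zeta)^* = d(j) \sum_{k \in \mathbb{Z}} A^{(j)}_{d(j) k}\, z^k,
\]
which is manifestly analytic in $z$ on a neighborhood of $\mathbb{T}$. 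Summing over $j$ yields the analytic unitary-valued $\widehat{V}$ required by the proposition.
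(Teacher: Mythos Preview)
Your proof is correct and follows essentially the same approach as the paper: both construct $\widehat{V}(z)$ as the change-of-basis unitary sending the explicit orthonormal eigenbasis $\{{\bf b}_j(\zeta)\}$ of the direct sum of model walks to the analytic eigenbasis $\{{\bf v}_j(\zeta)\}$ of $\widehat{U}$ supplied by Proposition \ref{proposition: structure of dual}, and then verify the intertwining relation on these eigenvectors. Your Laurent-series symmetrization over the $d(j)$-th roots is in fact a more explicit justification of the analyticity of $z \mapsto \widehat{V}(z)$ than the paper's one-line remark that the map is analytic because it carries analytic sections to analytic sections.
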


\begin{proof}
Let
\[((d(1), \lambda_1), (d(2), \lambda_2), \cdots, (d(m), \lambda_m))\]
be an arbitrary system of eigenvalue functions of $\widehat{U}$.
By Proposition \ref{proposition: structure of dual}, 
for every $j \in \{1, 2, \cdots, m\}$
there exist
analytic maps
${\bf v}_j \colon \mathbb{T} \to \mathbb{C}^n$
satisfying that
\begin{itemize}
\item
for every $z \in \mathbb{T}$, 
\[ \left\{ {\bf v}_j (\zeta) \ 
|\ 1 \le j \le m, 
\zeta^{d(j)} = z \right\} \]
forms an orthonormal basis of $\mathbb{C}^n$,
\item
for every $1 \le j \le m$, and for every $\zeta \in \mathbb{T}$,
\[\widehat{U}\left( \zeta^{d(j)} \right) {\bf v}_j(\zeta) = \lambda_j(\zeta) {\bf v}_j(\zeta).\]
\end{itemize}
For $z \in \mathbb{T}$, define an isometric operator 
$V_j(z) \colon \mathbb{C}^{d(j)} \to \mathbb{C}^n$ by the correspondence
\[\dfrac{1}{\sqrt{d(j)}}
\left( 1, \zeta^{-1}, \zeta^{-2}, \cdots, \zeta^{1 - d(j)} \right)^\mathrm{T} 
\mapsto
{\bf v}_j(\zeta)\]
between two orthonormal systems,
where $\zeta$ is a $d(j)$-th root of $z$.
Since the unitary matrices $V_j(z)$ give a correspondence 
between analytic sections to analytic sections,
the map
$z \mapsto V_j(z)$ is analytic.

We can easily check the equation
\begin{eqnarray*}
\widehat{U}(z) V_j(z) \cdot \dfrac{1}{\sqrt{d}}
\left(1, \zeta^{-1}, \zeta^{-2}, \cdots, \zeta^{1 - d} \right) ^\mathrm{T} 
=
\widehat{U}(z) {\bf v}_j(\zeta)
= 
\lambda_j(\zeta) {\bf v}_j(\zeta).
\end{eqnarray*}
By Lemma \ref{lemma: eigenvectors for model},
we also have
\begin{eqnarray*}
& & V_j(z) \widehat{U_{d(j), \lambda_j}}(z) \cdot \dfrac{1}{\sqrt{d}}
\left(1, \zeta^{-1}, \zeta^{-2}, \cdots, \zeta^{1 - d} \right) ^\mathrm{T}\\
&=& 
\lambda_j(\zeta) V_j(z) \cdot \dfrac{1}{\sqrt{d}}
\left(1, \zeta^{-1}, \zeta^{-2}, \cdots, \zeta^{1 - d} \right) ^\mathrm{T}\\
&=& 
\lambda_j(\zeta) {\bf v}_j(\zeta).
\end{eqnarray*}
Thus we obtain $\widehat{U}(z) V_j(z) = V_j(z) \widehat{U_{d(j), \lambda_j}}(z)$.
Let 
\[\widehat{V}(z) \colon \mathbb{C}^{d(1)} \oplus \mathbb{C}^{d(2)} \oplus \cdots \oplus \mathbb{C}^{d(m)} \to \mathbb{C}^n \]
be the direct sum of $V_j, 1 \le j \le m$.
This matrix is isometric and surjective. It satisfies
\[\widehat{U}(z)  \widehat{V}(z) = 
\widehat{V}(z) \left( \widehat{U_{d(1), \lambda_1}}(z) \oplus \widehat{U_{d(2), \lambda_2}}(z) \oplus \cdots \widehat{\oplus U_{d(m), \lambda_m}}(z) \right). \]
\end{proof}

Applying the Fourier transform to Proposition
\ref{proposition: decomposition of inverse Fourier transform}, 
we obtain the following.

\begin{theorem}
[Structure theorem on analytic homogeneous quantum walks on $\mathbb{Z}$]
\label{theorem: structure theorem}
Every $n$-state discrete-time analytic homogeneous quantum walk $U$ on $\mathbb{Z}$
is conjugate to a direct sum of model quantum walks.
More precisely,
for every indecomposable system of eigenvalue functions
\[((d(1), \lambda_1), (d(2), \lambda_2), \cdots, (d(m), \lambda_m))\]
of $\widehat{U}$,
there exists an analytic unitary operator $V$ acting on $\ell_2(\mathbb{Z}) \otimes \mathbb{C}^n$
satisfying
\[U = 
V \left( U_{d(1), \lambda_1} \oplus U_{d(2), \lambda_2} \oplus \cdots \oplus U_{d(m), \lambda_m} \right) V^*. \]
\end{theorem}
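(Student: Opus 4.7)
The plan is to obtain Theorem \ref{theorem: structure theorem} as the direct Fourier-transform image of Proposition \ref{proposition: decomposition of inverse Fourier transform}, which is already in hand. That proposition supplies, for the given indecomposable system of eigenvalue functions, an analytic map $\widehat{V} \colon \mathbb{T} \to M_n(\mathbb{C})$ taking values in the unitary group such that
\[ \widehat{U}(z) = \widehat{V}(z) \left( \widehat{U_{d(1), \lambda_1}}(z) \oplus \cdots \oplus \widehat{U_{d(m), \lambda_m}}(z) \right) \widehat{V}(z)^*, \quad z \in \mathbb{T}. \]
So the only remaining work is to repackage $\widehat{V}$ as an operator $V$ on $\ell_2(\mathbb{Z}) \otimes \mathbb{C}^n$ and to verify that $V$ is the required analytic unitary.

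First I would identify $\widehat{V}$ with the bounded linear operator on $L^2(\mathbb{T}) \otimes \mathbb{C}^n$ whose $(k, l)$-entry is the multiplication operator $M[\widehat{V}(\cdot\,;k,l)]$, and then define
\[ V = (\mathcal{F} \otimes \mathrm{id})\, \widehat{V}\, (\mathcal{F}^{-1} \otimes \mathrm{id}). \]
Because $\widehat{V}(z)$ is pointwise unitary, the multiplication operator it induces is unitary on $L^2(\mathbb{T}) \otimes \mathbb{C}^n$, hence $V$ is a unitary operator on $\ell_2(\mathbb{Z}) \otimes \mathbb{C}^n$. By construction $V$ is homogeneous, since its inverse Fourier transform is a matrix of multiplication operators. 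The analyticity of $V$ then follows from Lemma \ref{lemma: Fourier expansions}: each entry of $\widehat{V}$ is, by Proposition \ref{proposition: structure of dual}, an analytic function on $\mathbb{T}$, and so admits an analytic extension to some neighborhood of $\mathbb{T}$, which is precisely the condition characterizing analyticity of a homogeneous operator.

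Finally I would transport the conjugation identity back through the Fourier transform. The inverse Fourier transform sends operator composition of homogeneous operators on $\ell_2(\mathbb{Z}) \otimes \mathbb{C}^n$ to pointwise matrix multiplication of their dual symbols on $\mathbb{T}$, and applied entrywise to the displayed identity above it yields
\[ U = V \left( U_{d(1), \lambda_1} \oplus U_{d(2), \lambda_2} \oplus \cdots \oplus U_{d(m), \lambda_m} \right) V^*, \]
which is the statement of the theorem. The genuinely hard step, namely producing analytic orthonormal sections of eigenvectors that diagonalize $\widehat{U}(z)$ simultaneously as $z$ varies over $\mathbb{T}$, has already been carried out in Proposition \ref{proposition: structure of dual}; the present theorem is essentially a clean bookkeeping exercise once that is done, and the only caveat is making sure that the analytic, unitary-valued symbol $\widehat{V}$ really does correspond to a homogeneous analytic unitary on $\ell_2(\mathbb{Z})\otimes \mathbb{C}^n$, which is precisely what Lemma \ref{lemma: Fourier expansions} guarantees.
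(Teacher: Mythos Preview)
Your proposal is correct and follows exactly the paper's approach: the paper's proof is the single sentence ``Applying the Fourier transform to Proposition \ref{proposition: decomposition of inverse Fourier transform}, we obtain the following,'' and you have spelled out precisely what that entails (defining $V$ via $\mathcal{F}$, checking unitarity pointwise, and invoking Lemma \ref{lemma: Fourier expansions} for analyticity). The only minor quibble is that the analyticity of $\widehat{V}$ is stated directly in Proposition \ref{proposition: decomposition of inverse Fourier transform} rather than in Proposition \ref{proposition: structure of dual}, but this does not affect the argument.
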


%The Model quantum walk $U_{d, \lambda}$ is an element of $\mathrm{C}_\mathrm{red}^* (\mathbb{Z}) \otimes M_n(\mathbb{C}) = M_n (\mathrm{C}_\mathrm{red}^* (\mathbb{Z}))$.
%The $(k, l)$-entry of $U_{d, \lambda}$ is the Fourier transform 
%$\check{\lambda_{k, l}} \in \mathrm{C}_\mathrm{red}^* (\mathbb{Z})$ of 
%\[\lambda_{k, l}(z) = \sum_{s = -\infty}^\infty c(k - l + d s) z^s.\]
%The operator $\check{\lambda_{k, l}}$ is
%\[\check{\lambda_{k, l}} = \sum_{s = -\infty}^\infty c(k - l + d s) S_s.\]
%Recall that $S_s \colon \ell_2(\mathbb{Z}) \to \ell_2(\mathbb{Z})$ is the shift operator by $s \in \mathbb{Z}$.

%\begin{lemma}\label{lemma: decomposition}
%Let $\lambda \colon \mathbb{T} \to \mathbb{T}$ be an analytic function and let $d$, $c$ be a natural number. The $cd$-state model quantum walk $U_{c d, \lambda(\zeta^c)}$ given by the analytic function $\zeta \mapsto \lambda(\zeta^c)$ is isomorphic to the direct sum of $c$-copies of quantum walks 
%$U_{d, \lambda} \oplus \cdots \oplus U_{d, \lambda}$.
%\end{lemma}
%
%\begin{proof}
%Direct computation.
%\end{proof}

\subsection{ Decomposable and indecomposable quantum walks }

%\begin{definition}
%A model quantum walk $U_{d, \lambda}$ is said to be decomposable, if it admits such a decomposition as in Lemma \ref{lemma: decomposition}. More precisely, in the case that
%there exists a natural number $c \in \{1, 2, \cdots, d-1\}$such that
%\[\lambda(\exp(2 \pi i c / d) \zeta) = \lambda(\zeta), \quad \zeta \in \mathbb{T},\]
%we call $U_{d, \lambda}$ a decomposable model quantum walk.
%Otherwise, we call $U_{d, \lambda}$ an indecomposable model quantum walk.
%\end{definition}

\begin{definition}\label{definition: decomposability}
An $n$-state discrete-time analytic homogeneous quantum walk $U$
is said to be {\rm decomposable}, 
if there exist 
\begin{itemize}
\item
natural numbers $d(1)$ and $d(2)$ whose sum is $n$,
\item
discrete-time analytic homogeneous $d(1)$-state quantum walk $U_1$,
\item
discrete-time analytic homogeneous $d(2)$-state quantum walk $U_2$,
\item
and an analytic unitary $V$ acting on $\ell_2(\mathbb{Z}) \otimes \mathbb{C}^n$
\end{itemize}
satisfying
\[U = V \left( U_1 \oplus U_2 \right) V^*.\]
Otherwise, the quantum walk $U$ is said to be {\rm indecomposable}.
\end{definition}

\begin{lemma}
For every indecomposable analytic homogeneous quantum walk $U$, 
there exists an analytic unitary analytic $V$ acting on $\ell_2(\mathbb{Z}) \otimes \mathbb{C}^d$ 
and model quantum walk $U_{d, \lambda}$ such that
\[U = V U_{d, \lambda} V^*. \]
\end{lemma}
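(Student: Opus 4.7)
The plan is to apply the structure theorem (Theorem \ref{theorem: structure theorem}) directly and then use the indecomposability hypothesis to force the direct sum to have only one summand. Fix an indecomposable system of eigenvalue functions
\[\bigl((d(1), \lambda_1), (d(2), \lambda_2), \cdots, (d(m), \lambda_m)\bigr)\]
of $\widehat{U}$. Theorem \ref{theorem: structure theorem} furnishes an analytic unitary $V_0$ on $\ell_2(\mathbb{Z}) \otimes \mathbb{C}^d$ such that
\[U \;=\; V_0\left( U_{d(1),\lambda_1} \oplus U_{d(2),\lambda_2} \oplus \cdots \oplus U_{d(m),\lambda_m} \right) V_0^*.\]
The goal is to argue that $m = 1$, in which case $d(1) = d$ and $U = V_0 U_{d,\lambda_1} V_0^*$, giving the desired conclusion with $V := V_0$ and $\lambda := \lambda_1$.

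To rule out $m \geq 2$, I would regroup the summands as $U_1 \oplus U_2$ with
\[U_1 := U_{d(1),\lambda_1}, \qquad U_2 := U_{d(2),\lambda_2} \oplus \cdots \oplus U_{d(m),\lambda_m},\]
acting on $\ell_2(\mathbb{Z}) \otimes \mathbb{C}^{d(1)}$ and $\ell_2(\mathbb{Z}) \otimes \mathbb{C}^{d(2)+\cdots+d(m)}$ respectively. Each model walk $U_{d(j),\lambda_j}$ is by its very definition a discrete-time analytic space-homogeneous quantum walk on $\mathbb{Z}$, and a block-diagonal direct sum of analytic homogeneous unitaries inherits both properties (homogeneity is preserved block by block, and the off-diagonal blocks being zero certainly satisfy the exponential decay required in Definition \ref{definition: analyticity}(\ref{item of definition: analytic})). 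Consequently $U_1$ and $U_2$ are legitimate analytic homogeneous quantum walks of positive state-numbers $d(1)$ and $d-d(1)$, and $V_0$ is an admissible analytic intertwiner. This exhibits $U$ as a decomposition in the sense of Definition \ref{definition: decomposability}, contradicting the indecomposability of $U$. Hence $m = 1$.

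The argument is essentially a one-line corollary of Theorem \ref{theorem: structure theorem}, and I do not anticipate a genuine obstacle. The only point that deserves a sentence of justification is that the collapsed summand $U_2$ really belongs to the class of analytic homogeneous quantum walks — a fact that follows immediately from the block-diagonal form and the definitions, but which is what makes the contrapositive of the indecomposability assumption bite. Once $m=1$ is established, reading off $d(1)=d$ from the dimension of the ambient Hilbert space completes the proof.
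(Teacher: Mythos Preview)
Your proof is correct and follows the same approach as the paper: apply Theorem \ref{theorem: structure theorem} to write $U$ as a conjugate of a direct sum of model walks, then invoke indecomposability to conclude $m=1$. The paper's own proof is a terse two-line version of yours, omitting the explicit regrouping into $U_1 \oplus U_2$ that you spelled out.
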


\begin{proof}
The quantum walk $U$ admits a decomposition
\[U = 
V \left( U_{d(1), \lambda_1} \oplus U_{d(2), \lambda_2} \oplus \cdots \oplus U_{d(m), \lambda_m} \right) V^* \]
described in Theorem \ref{theorem: structure theorem}.
Since the quantum walk is indecomposable, $m = 1$.
\end{proof}

\begin{proposition}\label{proposition: indecomposable QW}
A discrete-time analytic homogeneous quantum walk $U$ is indecomposable, if and only if
the characteristic polynomial $f(\lambda; z)$ of the inverse Fourier transform $\widehat{U}(z)$
is
an irreducible polynomial in $\mathcal{Q}_\mathbb{T} [\lambda]$.
\end{proposition}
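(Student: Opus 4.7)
The plan is to prove both directions by reducing to the irreducible decomposition of $f(\lambda)\in\mathcal{Q}_\mathbb{T}[\lambda]$ that underlies the whole development.

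For the direction $(\Leftarrow)$, I would argue by contradiction: assume $f(\lambda)$ is irreducible and that $U$ admits a decomposition $U = V(U_1 \oplus U_2)V^*$ for analytic unitaries $V$ and analytic homogeneous quantum walks $U_i$ acting on $\ell_2(\mathbb{Z})\otimes\mathbb{C}^{d(i)}$ with $d(1)+d(2)=n$. Applying $\mathcal{F}^{-1}\otimes\mathrm{id}$ conjugates the identity fiberwise: for each $z\in\mathbb{T}$ the unitary $\widehat{U}(z)\in M_n(\mathbb{C})$ is conjugate via $\widehat{V}(z)$ to $\widehat{U_1}(z)\oplus\widehat{U_2}(z)$. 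Since similar matrices have the same characteristic polynomial, at the level of $\mathcal{Q}_\mathbb{T}[\lambda]$ we get $f(\lambda) = f_1(\lambda)\,f_2(\lambda)$, where $f_i$ is the characteristic polynomial of $\widehat{U_i}$. Both $f_i$ have positive degrees $d(i)\ge 1$ and are monic (so neither is a unit in $\mathcal{Q}_\mathbb{T}[\lambda]$), contradicting the irreducibility of $f$.

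For the direction $(\Rightarrow)$, I would pick any indecomposable system of eigenvalue functions $\bigl((d(1),\lambda_1),\ldots,(d(m),\lambda_m)\bigr)$ of $\widehat{U}$ (existence supplied by Proposition \ref{proposition: eigenvalue functions} together with the decomposition procedures listed after Definition \ref{definition: system of eigenvalue functions}). Theorem \ref{theorem: structure theorem} provides an analytic unitary $V$ with
\[
U = V\bigl(U_{d(1),\lambda_1}\oplus U_{d(2),\lambda_2}\oplus\cdots\oplus U_{d(m),\lambda_m}\bigr)V^{*}.
\]
If $m\ge 2$, set $U_1=U_{d(1),\lambda_1}$ and $U_2 = U_{d(2),\lambda_2}\oplus\cdots\oplus U_{d(m),\lambda_m}$; both are analytic homogeneous quantum walks on their respective coefficient spaces, and this exhibits a decomposition of $U$, contradicting indecomposability. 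Hence $m=1$, and Lemma \ref{lemma: characteristic polynomial of model} yields
\[
f(\lambda;z) = \prod_{\zeta\colon \zeta^{d(1)}=z}\bigl(\lambda-\lambda_1(\zeta)\bigr).
\]
Indecomposability of the system means procedure (3) above cannot be applied to the pair $(d(1),\lambda_1)$, i.e.\ condition (2) of Lemma \ref{lemma: iff reducible} fails; that lemma then forces the polynomial on the right to be irreducible in $\mathcal{Q}_\mathbb{T}[\lambda]$.

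I do not expect a serious obstacle here: the proof is essentially bookkeeping on top of the structure theorem and Lemma \ref{lemma: iff reducible}. The only point that demands a little care is the clean fiberwise passage to characteristic polynomials in the $(\Leftarrow)$ direction — specifically noting that the similarity relation given by the analytic unitary $\widehat V(z)$ preserves characteristic polynomials pointwise and therefore gives an honest factorization in $\mathcal{Q}_\mathbb{T}[\lambda]$ rather than only a pointwise one. Once that is said, the proposition follows immediately.
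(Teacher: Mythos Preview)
Your proof is correct and follows essentially the same route as the paper: the $(\Leftarrow)$ direction is identical (fiberwise similarity gives $f=f_1f_2$), and for $(\Rightarrow)$ the paper argues the contrapositive (if $f$ is reducible then the irreducible factorization forces $m\ge 2$ in the structure theorem), which is logically the same as your direct argument that indecomposability forces $m=1$ and then Lemma~\ref{lemma: iff reducible} yields irreducibility.
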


\begin{proof}
Let $f(\lambda; z)$ be the characteristic polynomial of the matrix $\widehat{U}(z)$.

Suppose that $U$ is decomposed as in Definition \ref{definition: decomposability}:
\[U = V \left( U_1 \oplus U_2 \right) V^*. \]
Consider the inverse Fourier transforms of unitary operators.
We have
\[\widehat{U}(z) = 
\widehat{V}(z) \left( \widehat{U_1}(z) \oplus \widehat{U_2}(z) \right) \widehat{V}(z)^*\]
Let $f_j(\lambda; z)$ be the characteristic polynomial of the matrix $\widehat{U_j}(z)$, 
for $j = 1, 2$.
By the above decomposition, we have
\[f(\lambda; z) = f_1(\lambda; z) f_2(\lambda; z). \]
It follows that $f(\lambda) \in \mathcal{Q}_\mathbb{T}[\lambda]$ is not irreducible.

Conversely, suppose that $f(\lambda) \in \mathcal{Q}_\mathbb{T}[\lambda]$ is not irreducible.
The decomposition of $f(\lambda)$ into irreducible polynomials
\[f(\lambda) = g_1(\lambda) \cdots g_m(\lambda)\]
corresponds to an indecomposable system of eigenvalue functions 
\[((d(1), \lambda_1), (d(2), \lambda_2), \cdots, (d(m), \lambda_m))\]
of $\widehat{U}$.
Theorem \ref{theorem: structure theorem}
gives a decomposition of $U$ into model quantum walks
\[U = V \left( U_{d(1), \lambda_1} \oplus U_{d(2), \lambda_2} \oplus \cdots \oplus U_{d(m), \lambda_m} \right) V^*.\]
\end{proof}

\begin{corollary}
A model quantum walk $U_{d, \lambda}$ is decomposable, if and only if
the characteristic polynomial satisfies the rotation symmetry in the following sense:
there exists $c \in \{1, \cdots, d - 1\}$ satisfying that
\[\lambda(\exp(2 \pi i c /d)\zeta) = \lambda(\zeta), \quad \zeta \in \mathbb{T}.\]
\end{corollary}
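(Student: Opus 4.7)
The plan is to chain together three previously established results: Proposition \ref{proposition: indecomposable QW}, Lemma \ref{lemma: characteristic polynomial of model}, and Lemma \ref{lemma: iff reducible}. None of these reductions requires genuinely new ideas; the task is really just bookkeeping to line up the hypotheses.

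First I would invoke Proposition \ref{proposition: indecomposable QW} applied to the quantum walk $U_{d,\lambda}$: the walk is indecomposable if and only if the characteristic polynomial $f(\lambda;z)$ of $\widehat{U_{d,\lambda}}(z)$, viewed as an element of $\mathcal{Q}_\mathbb{T}[\lambda]$, is irreducible. Equivalently, $U_{d,\lambda}$ is decomposable if and only if $f(\lambda)\in\mathcal{Q}_\mathbb{T}[\lambda]$ is reducible.

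Next I would substitute the explicit form of $f(\lambda;z)$ provided by Lemma \ref{lemma: characteristic polynomial of model}, namely
\[
f(\lambda;z) \;=\; \prod_{\zeta \colon \zeta^d = z}\bigl(\lambda - \lambda(\zeta)\bigr) \;\in\; \mathcal{Q}_\mathbb{T}[\lambda].
\]
This is exactly the shape of polynomial to which Lemma \ref{lemma: iff reducible} applies, with the same $d$ and the same analytic function $\lambda \colon \mathbb{T}\to\mathbb{T}$.

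Finally I would apply Lemma \ref{lemma: iff reducible}: the polynomial above is reducible in $\mathcal{Q}_\mathbb{T}[\lambda]$ if and only if there is some $c \in \{1,2,\dots,d-1\}$ with $\lambda\!\left(\exp(2\pi i c/d)\zeta\right) = \lambda(\zeta)$ for every $\zeta \in \mathbb{T}$. Combining this equivalence with the first step yields the corollary. There is no real obstacle in this argument; the only thing to be careful about is that the statement of the corollary says ``the characteristic polynomial satisfies the rotation symmetry'' but phrases the symmetry in terms of $\lambda$ itself, which is consistent because the multiset of roots $\{\lambda(\zeta) : \zeta^d = z\}$ is invariant under $\zeta \mapsto \exp(2\pi i c/d)\zeta$ precisely when the function $\lambda$ has that invariance.
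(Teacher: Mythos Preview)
Your proposal is correct and follows the same route as the paper: it combines Proposition \ref{proposition: indecomposable QW}, Lemma \ref{lemma: characteristic polynomial of model}, and Lemma \ref{lemma: iff reducible}. The paper's proof is terser only because the corollary is placed immediately after Proposition \ref{proposition: indecomposable QW} and leaves that step implicit, but the logical content is identical.
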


\begin{proof}
The characteristic polynomial of the inverse Fourier transform of $U_{d, \lambda}$
is 
\[ \prod_{\zeta \colon \zeta^d = z} (\lambda - \lambda(\zeta)) \in \mathcal{Q}_\mathbb{T}[\lambda], \]
by Lemma \ref{lemma: characteristic polynomial of model}.
By Lemma \ref{lemma: iff reducible}, this polynomial is not irreducible, if and only if $\lambda$ satisfies the rotation symmetry in the above sense.
\end{proof}

%\begin{remark}
%Let us go back to the decomposition
%\[U_{c d, \lambda(\zeta^c)} = U_{d, \lambda} \oplus \cdots \oplus U_{d, \lambda}.\]
%The winding number of $\lambda \colon \mathbb{T} \to \mathbb{T}$ is $0$, if and only if that of $\zeta \mapsto \lambda(\zeta^c)$ is $0$.
%\end{remark}

\subsection{Realization by continuous-time quantum walks}

\begin{proposition}\label{proposition: realization by CTQW}
Let $\lambda \colon \mathbb{T} \to \mathbb{T}$ be an analytic map.
If the winding number of $\lambda$ is $0$, then there exists a $1$-state continuous-time analytic homogeneous quantum walk
\[\mathbb{R} \ni t \mapsto U^{(t)}\]
satisfying that $U^{(1)} = U_{1, \lambda}$.
\end{proposition}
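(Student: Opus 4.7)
The plan is to construct the one-parameter family via the Fourier picture, where $U_{1,\lambda}$ corresponds to multiplication by $\lambda$ on $L^2(\mathbb{T})$. Since $\lambda\colon\mathbb{T}\to\mathbb{T}$ is analytic, it extends to an analytic map on an open annulus $\Omega$ around $\mathbb{T}$; by continuity and compactness we may assume $\lambda$ is nowhere zero on $\Omega$. The winding number hypothesis $w(\lambda)=0$ is exactly what guarantees that $\lambda/|\lambda|$ is null-homotopic as a map $\mathbb{T}\to\mathbb{T}$, equivalently that $\lambda\colon\Omega\to\mathbb{C}^{*}$ admits a single-valued analytic logarithm: there exists an analytic function $L\colon\Omega\to\mathbb{C}$ with $\exp(L)=\lambda$ on $\Omega$. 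Because $|\lambda|=1$ on $\mathbb{T}$, we have $\mathrm{Re}\,L=0$ on $\mathbb{T}$, so $L|_{\mathbb{T}}$ is purely imaginary.

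Next I define, for each $t\in\mathbb{R}$, the analytic function $\lambda_t(z)=\exp(tL(z))$ on $\Omega$. Since $tL$ is purely imaginary on $\mathbb{T}$, $\lambda_t$ restricts to an analytic map $\mathbb{T}\to\mathbb{T}$. I then set
\[
U^{(t)} := U_{1,\lambda_t}.
\]
Because $\lambda_s\,\lambda_t=\lambda_{s+t}$ and $\lambda_0=1$, Lemma \ref{lemma: calculation of the model quantum walks} gives $U^{(s)}U^{(t)}=U_{1,\lambda_s\lambda_t}=U^{(s+t)}$ and $U^{(0)}=U_{1,1}=I$, so $t\mapsto U^{(t)}$ is a group homomorphism from $\mathbb{R}$ to the unitary group of $\ell_2(\mathbb{Z})$. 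The normalization $U^{(1)}=U_{1,\lambda}$ is immediate.

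For analyticity of each $U^{(t)}$, I note that $\lambda_t=\exp(tL)$ is analytic on the fixed annulus $\Omega$, so its Fourier coefficients decay geometrically; by Lemma \ref{lemma: Fourier expansions}, $U^{(t)}$ is an analytic homogeneous unitary on $\ell_2(\mathbb{Z})$. For the continuity of $t\mapsto U^{(t)}$ in the weak operator topology, I will establish the stronger operator-norm continuity: since the inverse Fourier transform converts $U^{(t)}$ to the multiplication operator by $\lambda_t$, one has $\|U^{(t)}-U^{(s)}\|=\|\lambda_t-\lambda_s\|_{L^\infty(\mathbb{T})}$, and using $|\exp(tL)-\exp(sL)|=|\exp((t-s)L)-1|$ on $\mathbb{T}$ together with the bound $\|L\|_{\infty,\mathbb{T}}<\infty$, this quantity tends to $0$ as $t\to s$.

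The only substantive step is the very first one, namely producing the analytic logarithm $L$ from the hypothesis $w(\lambda)=0$; everything afterwards is formal manipulation using Lemma \ref{lemma: calculation of the model quantum walks} and Lemma \ref{lemma: Fourier expansions}. I do not anticipate genuine difficulty, since the existence of an analytic logarithm on an annulus of a nowhere-zero analytic function whose boundary restriction has zero winding number is a classical consequence of integrating $\lambda'/\lambda$ along a path in $\Omega$ (or equivalently of the fact that the period of this $1$-form around a generator of $H_1(\Omega)$ equals $2\pi i\,w(\lambda)=0$).
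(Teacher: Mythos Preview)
Your proof is correct and follows essentially the same approach as the paper: both arguments use the vanishing winding number to obtain an analytic logarithm (the paper writes $\lambda=\exp(ih)$ with $h\colon\mathbb{T}\to\mathbb{R}$ analytic, which is your $L=ih$) and then define $U^{(t)}$ as the Fourier transform of multiplication by $\exp(ith)$. You simply spell out in more detail the group law, the analyticity of each $U^{(t)}$, and the norm continuity, all of which the paper leaves implicit.
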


\begin{proof}
If the winding number of $\lambda$ is $0$, then there exists an analytic function $h \colon \mathbb{T} \to \mathbb{R}$ satisfying that
\[\exp(i h(z)) = \lambda(z), \quad z \in \mathbb{T}.\]
The $1$-parameter unitary group 
\[U^{(t)} = \mathcal{F} M[ \exp (i t h)] \mathcal{F}^{-1} \]
satisfies the conditions in the proposition.
\end{proof}

Let us denote by $w(\lambda)$ 
the winding number of $\lambda \colon \mathbb{T} \to \mathbb{T}$.
By Lemma \ref{lemma: calculation of the model quantum walks},
we may factorize the model quantum walk $U_{1, \lambda}$ as follows
\[U_{1, \lambda} = U_{1, z^{w(\lambda)}} U_{1, \zeta^{-w(\lambda)} \lambda(\zeta)} = S_{w(\lambda)} U_{1, \zeta^{-w(\lambda)} \lambda(\zeta)}.\]
Since the winding number of $\zeta \mapsto \zeta^{-w(\lambda)} \lambda(\zeta)$ is $0$,
by Proposition \ref{proposition: realization by CTQW},
the model quantum walk $U_{1, \zeta^{-w(\lambda)} \lambda(\zeta)}$ can be realized by a continuous-time quantum walk.

\begin{theorem}\label{theorem: construction from CTQW}
For every $n$-state discrete-time analytic homogeneous quantum walk $U$, we can express $U$ as
\[U = V \left( \oplus_{j = 1}^m W_{d(j)}^* S_{w(j)} U^{(1)}_j W_{d(j)} \right) V^*\]
by
\begin{itemize}
\item
Preparing several $1$-state continuous-time analytic homogeneous quantum walks $\mathbb{R} \ni t \mapsto U^{(t)}_j$, $1 \le j \le m$,
\item
Restricting the group of time $\mathbb{R}$ to $\mathbb{Z}$,
\item 
Composing with shift operators $S_{w(j)} U^{(1)}_j$,
\item
Rearranging the labels of position $W_{d(j)}^* S_{w(j)} U^{(1)}_j W_{d(j)}$,
\item
Taking direct sum
\[\oplus_{j = 1}^m W_{d(j)}^* S_{w(j)} U^{(1)}_j W_{d(j)},\]
\item
Conjugacy by an analytic unitary $V$ acting on $\ell_2(\mathbb{Z}) \otimes \mathbb{C}^n$,
\end{itemize}
using some natural numbers $m$, $w(1)$, $\cdots$, $w(m)$, and $d(1)$, $\cdots$, $d(m)$.
\end{theorem}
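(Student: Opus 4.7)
The plan is to assemble the theorem from three ingredients already established in the paper: the structure theorem, the rearrangement identity for model walks, and the realization of winding-number-zero model walks by continuous-time walks.

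First I would invoke Theorem \ref{theorem: structure theorem} to reduce the problem to model quantum walks. Given $U$, pick an indecomposable system of eigenvalue functions $((d(1),\lambda_1),\dots,(d(m),\lambda_m))$ of $\widehat{U}$. Then Theorem \ref{theorem: structure theorem} supplies an analytic unitary $V$ on $\ell_2(\mathbb{Z})\otimes\mathbb{C}^n$ with
\[U \;=\; V\bigl(U_{d(1),\lambda_1}\oplus U_{d(2),\lambda_2}\oplus\cdots\oplus U_{d(m),\lambda_m}\bigr) V^*.\]
So it suffices to realize each summand $U_{d(j),\lambda_j}$ in the form $W_{d(j)}^* S_{w(j)} U^{(1)}_j W_{d(j)}$ for a continuous-time analytic homogeneous $1$-state quantum walk $t\mapsto U^{(t)}_j$ and some integer $w(j)$.

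Next I would apply Lemma \ref{lemma: rearrangement} to convert each multi-state model walk into a $1$-state model walk: $U_{d(j),\lambda_j}=W_{d(j)}^*\,U_{1,\lambda_j}\,W_{d(j)}$. Now I would set $w(j):=w(\lambda_j)$, the winding number of $\lambda_j\colon\mathbb{T}\to\mathbb{T}$, and define $\mu_j(\zeta):=\zeta^{-w(j)}\lambda_j(\zeta)$, which is analytic from $\mathbb{T}$ to $\mathbb{T}$ with winding number $0$. By Lemma \ref{lemma: calculation of the model quantum walks},
\[U_{1,\lambda_j} \;=\; U_{1,\,z^{w(j)}}\, U_{1,\mu_j} \;=\; S_{w(j)}\, U_{1,\mu_j},\]
using that $U_{1,z^{w(j)}}$ is the shift $S_{w(j)}$ (read off directly from the definition of $U_{1,\cdot}$, since the Laurent coefficients of $z^{w(j)}$ are $\delta_{s,w(j)}$).

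Finally I would apply Proposition \ref{proposition: realization by CTQW} to $\mu_j$: since $w(\mu_j)=0$, there is a continuous-time analytic homogeneous $1$-state quantum walk $t\mapsto U^{(t)}_j$ with $U^{(1)}_j=U_{1,\mu_j}$. Putting everything together, each summand equals $W_{d(j)}^* S_{w(j)} U^{(1)}_j W_{d(j)}$, and substituting back into the structure-theorem decomposition yields the desired expression.

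I do not expect any serious obstacle: every step is either an invocation of an already-proved result or the routine identity $U_{1,z^{w}}=S_{w}$. The only point that requires a moment's care is checking that the shift factor $S_{w(j)}$ and the continuous-time factor $U^{(1)}_j$ appear in the correct order in the theorem's formula; this is automatic because both $S_{w(j)}$ and $U_{1,\mu_j}$ are $1$-state homogeneous operators, hence their Fourier duals are multiplication operators on $L^2(\mathbb{T})$ and commute, so the ordering $S_{w(j)} U^{(1)}_j$ stated in the theorem agrees with the factorization above.
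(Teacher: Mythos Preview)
Your proposal is correct and follows essentially the same approach as the paper's own proof: invoke the structure theorem to reduce to a direct sum of model walks, apply the rearrangement lemma $U_{d(j),\lambda_j}=W_{d(j)}^*U_{1,\lambda_j}W_{d(j)}$, factor $U_{1,\lambda_j}=S_{w(\lambda_j)}U_{1,\zeta^{-w(\lambda_j)}\lambda_j(\zeta)}$ via Lemma~\ref{lemma: calculation of the model quantum walks}, and then use Proposition~\ref{proposition: realization by CTQW} on the winding-number-zero factor. Your extra remark on the commutation of $S_{w(j)}$ and $U^{(1)}_j$ is a harmless addition the paper does not bother to make.
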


\begin{proof}
By Theorem \ref{theorem: structure theorem},
we can describe $U$ by
\[U = 
V \left( U_{d(1), \lambda_1} \oplus U_{d(2), \lambda_2} \oplus \cdots \oplus U_{d(m), \lambda_m} \right) V^*. \]
By rearranging the labels of positions, we have
\[W_{d(j)} U_{d(j), \lambda_j} W_{d(j)}^* = U_{1, \lambda_j}. \]
By the proceeding remark before the theorem,
$U_{1, \lambda_j}$ is a product of a shift operator and a restriction of a continuous-time analytic homogeneous quantum walk.
\end{proof}

\begin{theorem}\label{theorem: realization by CTQW}
Let $U$ be an $n$-state discrete-time analytic homogeneous quantum walk. 
Let 
\[\left( (d(1), \lambda_1), (d(2), \lambda_2), \cdots, (d(m), \lambda_m) \right)\]
be an arbitrary system of eigenvalue functions of $\widehat{U}$, which is introduced in Definition \ref{definition: system of eigenvalue functions}.
The quantum walk $U$ is a restriction of a continuous-time analytic homogeneous quantum walk, 
if and only if
all the winding numbers of $\lambda_j \colon \mathbb{T} \to \mathbb{T}$ are $0$.
\end{theorem}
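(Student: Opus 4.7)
The plan treats the two directions separately, using Proposition \ref{proposition: realization by CTQW} and Lemma \ref{lemma: rearrangement} for sufficiency, and the analytic eigenvector sections of Proposition \ref{proposition: structure of dual} for necessity. Since the hypothesis on winding numbers is invariant under the equivalence relation on systems of eigenvalue functions (as discussed after Definition \ref{definition: absolute value of winding number}), I would work with an indecomposable system throughout. For sufficiency ($\Leftarrow$), suppose $w(\lambda_j)=0$ for every $j$. By Proposition \ref{proposition: realization by CTQW}, for each $j$ there is a $1$-state continuous-time analytic homogeneous quantum walk $t \mapsto U_j^{(t)}$ with $U_j^{(1)} = U_{1,\lambda_j}$. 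Combining Theorem \ref{theorem: structure theorem} with Lemma \ref{lemma: rearrangement} (which identifies $U_{d(j),\lambda_j}$ with $W_{d(j)}^* U_{1,\lambda_j} W_{d(j)}$), one writes $U = V\bigl(\bigoplus_j W_{d(j)}^* U_j^{(1)} W_{d(j)}\bigr) V^*$, and I would then set
\[
\widetilde U^{(t)} \;:=\; V\Bigl(\bigoplus_{j=1}^m W_{d(j)}^* U_j^{(t)} W_{d(j)}\Bigr) V^*.
\]
Verifying that $t \mapsto \widetilde U^{(t)}$ is a weak-operator-continuous one-parameter group of analytic homogeneous unitaries with $\widetilde U^{(1)} = U$ is then routine, since each such property passes through direct sums and conjugation by the fixed analytic homogeneous unitaries $V, W_{d(j)}$.

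For necessity ($\Rightarrow$), suppose $U = \widetilde U^{(1)}$ for some analytic CTQW $\widetilde U^{(\cdot)}$. My strategy is to produce, for every natural number $N$ and every index $j$, an analytic map $\nu_j \colon \mathbb{T} \to \mathbb{T}$ satisfying $\nu_j^N = \lambda_j$; this will force $N \mid w(\lambda_j)$ for every $N$, hence $w(\lambda_j)=0$. The crucial observation is that $\widetilde U^{(1)}$ and $\widetilde U^{(1/N)}$ commute, so their inverse Fourier transforms $\widehat{\widetilde U^{(1/N)}}(z)$ and $\widehat U(z)$ commute for each $z \in \mathbb{T}$. Taking the analytic eigenvector section $v_j(\zeta)$ of $\widehat U(\zeta^{d(j)})$ with eigenvalue $\lambda_j(\zeta)$ supplied by Proposition \ref{proposition: structure of dual}, I would define
\[
\nu_j(\zeta) \;:=\; \bigl\langle \widehat{\widetilde U^{(1/N)}}(\zeta^{d(j)}) v_j(\zeta),\, v_j(\zeta) \bigr\rangle,
\]
which is analytic in $\zeta$. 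On the cofinite subset of $\mathbb{T}$ on which $\lambda_j(\zeta)$ is a simple eigenvalue of $\widehat U(\zeta^{d(j)})$, the corresponding one-dimensional eigenspace is preserved by the commuting unitary $\widehat{\widetilde U^{(1/N)}}(\zeta^{d(j)})$; hence $v_j(\zeta)$ is a common eigenvector, with $\widehat{\widetilde U^{(1/N)}}(\zeta^{d(j)}) v_j(\zeta) = \nu_j(\zeta) v_j(\zeta)$, yielding $\nu_j(\zeta)^N = \lambda_j(\zeta)$. The identity theorem propagates this identity to all of $\mathbb{T}$, and $|\nu_j|^N = |\lambda_j| = 1$ forces $\nu_j$ to take values in $\mathbb{T}$.

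The main obstacle is justifying that $\lambda_j(\zeta)$ is indeed a simple eigenvalue of $\widehat U(\zeta^{d(j)})$ on a cofinite subset of $\mathbb{T}$. Lemma \ref{lemma: no multiple root} applied to the irreducible factor $g_j(\lambda) \in \mathcal{Q}_\mathbb{T}[\lambda]$ handles simplicity within that single factor off a finite subset of $\mathbb{T}$. One also needs that distinct irreducible factors $g_j, g_{j'}$ share no root except on finitely many $z$; this I would deduce from coprimality in the principal ideal domain $\mathcal{Q}_\mathbb{T}[\lambda]$, since the resultant $\mathrm{Res}(g_j, g_{j'})$ is then a nonzero element of $\mathcal{Q}_\mathbb{T}$ and hence a nonzero meromorphic function with only finitely many zeros on $\mathbb{T}$. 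Combining these, the $n$ eigenvalues of $\widehat U(z)$ are distinct off a finite subset of $\mathbb{T}$, exactly the regularity needed for the common-eigenvector step; the ``coset of a finite subset'' plus identity-theorem extension strategy used throughout Section~2 then delivers the conclusion on all of $\mathbb{T}$.
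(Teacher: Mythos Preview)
Your sufficiency argument ($\Leftarrow$) matches the paper's.

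Your necessity argument ($\Rightarrow$) has a genuine gap. You assert that ``the $n$ eigenvalues of $\widehat U(z)$ are distinct off a finite subset of $\mathbb{T}$,'' but this fails exactly when the characteristic polynomial $f(\lambda) \in \mathcal{Q}_\mathbb{T}[\lambda]$ has a repeated irreducible factor. The paper explicitly allows this (see the proof of Proposition~\ref{proposition: structure of dual}, where one arranges $g_1 = \cdots = g_p$ with $p \ge 1$); in that case $\lambda_j(\zeta)$ has multiplicity $p$ for \emph{every} $\zeta$, the eigenspace is $p$-dimensional, and your section $v_j(\zeta)$ need not be an eigenvector of the commuting unitary $\widehat{\widetilde U^{(1/N)}}(\zeta^{d(j)})$, so the identity $\nu_j^N = \lambda_j$ cannot be concluded. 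Your coprimality/resultant argument only separates \emph{distinct} irreducible factors and says nothing about repeated ones.

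The paper sidesteps multiplicity entirely by a more global route: set $W = \widetilde U^{(1/N)}$, take any system of eigenvalue functions $((d(1),\rho_1),\ldots,(d(M),\rho_M))$ for $\widehat W$, observe that $((d(1),\rho_1^N),\ldots,(d(M),\rho_M^N))$ is then a (possibly decomposable) system for $\widehat U$, and invoke the well-definedness of the invariant $|w|(U) = \sum_j |w(\lambda_j)|$ from Definition~\ref{definition: absolute value of winding number} to get $|w|(U) = N \sum_j |w(\rho_j)| \in N\mathbb{Z}$ for every $N$, hence $|w|(U)=0$. Your approach can be repaired in the repeated-factor case---e.g.\ by taking the determinant of the restriction of $\widehat{\widetilde U^{(1/N)}}(\zeta^{d(j)})$ to the $p$-dimensional $\lambda_j(\zeta)$-eigenspace, which is analytic in $\zeta$ with $N$-th power $\lambda_j(\zeta)^p$, forcing $N \mid p\,w(\lambda_j)$ for all $N$---but as written the simplicity claim is the missing step.
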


\begin{proof}
Suppose that $U$ can be realized by a continuous-time analytic homogeneous quantum walk.
Then for every natural number $N$, there exists an analytic homogeneous quantum walk $W$ satisfying that
$W^N = U$.
Let 
\[\left( (d(1), \rho_1), (d(2), \rho_2), \cdots, (d(M), \rho_M) \right)\]
be a system of eigenvalue functions of $\widehat{W}$ (see Definition \ref{definition: system of eigenvalue functions}).
Then
\[\left( \left(d(1), \rho_1^N\right), \left(d(2), \rho_2^N\right), \cdots, \left(d(M), \rho_M^N\right) \right)\]
is a system of eigenvalue functions of $\widehat{U}$.
(This is not necessarily an indecomposable system).
The quantity $|w|(U)$ given in Definition \ref{definition: absolute value of winding number}
satisfies
\[|w|(U) 
= \sum_{j = 1}^M \left| w \left( \rho_j^N \right) \right| 
= N \sum_{j = 1}^M |w(\rho_j)|.\]
This is an element of $N \mathbb{Z}$.
Since $N$ is arbitrary, we have $|w|(U) = 0$.
By well-definedness of $|w|(U)$, we have
\[\sum_{j = 1}^m|w(\lambda_j)| = 0.\]

Conversely, suppose that the equation
\[U = 
V \left( U_{d(1), \lambda_1} \oplus U_{d(2), \lambda_2} \oplus \cdots 
\oplus U_{d(m), \lambda_m} \right) V^*\]
holds
and that all the winding numbers $w(\lambda_j)$ is $0$.
By Proposition \ref{proposition: realization by CTQW}, 
for every $j$,
the quantum walk
$U_{d(j), \lambda_j} = W_d^* U_{1, \lambda_j} W_d$
can be realized by a continuous-time analytic homogeneous quantum walk.
\end{proof}

\section{Convergence theorem}
\label{section: convergence theorem}

\subsection{Some remarks on locality of initial unit vectors}

In the conjugacy
\[U = 
V \left( U_{d(1), \lambda_1} \oplus U_{d(2), \lambda_2} \oplus \cdots \oplus U_{d(m), \lambda_m} \right) V^* \]
given by Theorem \ref{theorem: structure theorem},
the unitary $V^*$ does not preserve finiteness of the support of the initial unit vector.
However, $V^*$ preserves weaker forms of locality.

For a unit vector $\xi \in \ell_2(\mathbb{Z}) \otimes \mathbb{C}^n$,
we define a probability measure $P[\xi]$ on $\mathbb{Z}$ by
\[ P[\xi](\{s\}) 
= \sum_{k = 1}^n| \langle \delta_s \otimes \delta_k, \xi \rangle |^2, 
\quad s \in \mathbb{Z}. \]
If there exists a positive real number $r$ greater than $1$ satisfying that
two sequences 
\[ \left( r^{s} P[\xi](\{s\}) \right)_{s \in \mathbb{Z}}, \quad
 \left( r^{-s} P[\xi](\{s\}) \right)_{s \in \mathbb{Z}} \]
are elements in $\ell_1(\mathbb{Z})$,
then we call the probability measure $P[\xi](\{s\})$ and the unit vector $\xi$
are {\it of the exponential type}.
In the study of quantum walks, this type of probability measures have attracted attention. See \cite{EndoKonnoYMJ}, for example.
This is equivalent to the condition that the inverse Fourier transform $(\mathcal{F}^{-1} \otimes \mathrm{id}) \xi \in L^2(\mathbb{T}, \mathbb{C}^n)$ is analytic on $\mathbb{T}$. 

To apply differential operators on the Fourier dual,
we consider wider class.
In the case that
for every natural number $d$,
the sequence
\[ \left\{ (1 + s^2)^d P[\xi](\{s\}) \right\}_{s = - \infty}^\infty\]
is bounded,
we call the probability measure $P[\xi](\{s\})$ and the unit vector $\xi$
{\it rapidly decrease}.
This is equivalent to the condition that $(\mathcal{F}^{-1} \otimes \mathrm{id}) \xi$ is  smooth on $\mathbb{T}$. 

%\begin{lemma}\label{lemma: locality of unit vectors}
%Let $r$ be a positive real number less than $1$.
%Assume that
%two sequences 
%\[ \left\{ r^{2s} P[\xi](\{s\}) \right\}_{s = - \infty}^\infty, \quad
% \left\{ r^{-2s} P[\xi](\{s\}) \right\}_{s = - \infty}^\infty \]
%are elements in $\ell_1(\mathbb{Z})$,
%and that
%the gauge action
%\[\mathbb{T} \ni z \mapsto \alpha_z(V) \in {\rm C}_{\rm red}^*(\mathbb{Z}) \otimes M_n(\mathbb{C})\]
%continuously extends to a domain containing
%$\{z \in \mathbb{C} \ |\ r \le |z| \le 1 /r\}$.
%Then the sequences
%\[ \left\{ r^{2s} P[V^* \xi](\{s\}) \right\}_{s = - \infty}^\infty, \quad
% \left\{ r^{-2s} P[V^* \xi](\{s\}) \right\}_{s = - \infty}^\infty \]
%are elements in $\ell_1(\mathbb{Z})$.
%\end{lemma}

If $\xi \in \ell_2(\mathbb{Z}) \otimes \mathbb{C}^n$ is of the exponential type and if the operator $V$ acting on $\ell_2(\mathbb{Z}) \otimes \mathbb{C}^n$ is analytic, then $V^* \xi$ is of the exponential type.
If $\xi \in \ell_2(\mathbb{Z}) \otimes \mathbb{C}^n$ rapidly decreases and if the operator $V$ acting on $\ell_2(\mathbb{Z}) \otimes \mathbb{C}^n$ is analytic, then $V^* \xi$ rapidly decreases.

%\begin{proof}
%For $z \in \mathbb{T}$, let $z^D$ be the unitary operator 
%on $\ell_2(\mathbb{Z}) \otimes \mathbb{C}^n$ 
%defined by 
%\[\delta_s \otimes \delta_k \mapsto z^s \delta_s \otimes \delta_k,
%\quad s \in \mathbb{Z}, k \in \{1, 2, \cdots, n\}.\]
%The inverse operator is $z^{-D}$.
%The gauge action on $V^*$ is implemented by $z^D$, that is,
%\[\alpha_z(V^*) = z^D V^* z^{-D} = (z^D V z^{-D})^{-1}.\]
%By the assumptions in the lemma,
%two maps
%\begin{eqnarray*}
%\mathbb{T} \ni z &\mapsto& (z^D V z^{-D})^{-1}, \\
%\mathbb{T} \ni z &\mapsto& z^D \xi
%\end{eqnarray*}
%can be extended to analytic maps defined on a domain $\Omega$ containing 
%$\{z \in \mathbb{C} \ |\ r \le |z| \le 1 /r\}$.
%It follows that the map
%\begin{eqnarray*}
%\mathbb{T} \ni z &\mapsto& (z^D V z^{-D})^{-1} z^D \xi = z^D V^* \xi
%\end{eqnarray*}
%can also be extended to analytic maps defined on the domain 
%$\Omega$.
%We can also denote by $z^D V^* \xi$ for the vector corresponding to $z \in \Omega$.
%
%Let $\eta(s) \in \mathbb{C}^n$ be the entry of $V^* \xi \in \ell_2(\mathbb{Z}) \otimes \mathbb{C}^n$ corresponding to $s \in \mathbb{Z}$.
%For $z \in \mathbb{T}$,
%
%
%\end{proof}

\subsection{Limit distributions}

Let $U$ be a discrete-time analytic homogeneous quantum walk acting on $\ell_2(\mathbb{Z}) \otimes \mathbb{C}^n$.
Given a unit vector $\xi$ called an initial vector, we obtain a sequence of probability measures on $\mathbb{Z}$
\[P[\xi], P[U \xi], P[U^2 \xi], P[U^3 \xi], \cdots.\]
For every time $t \in \mathbb{Z}$, consider the pushforward
\[ \phi^{(t)}_* (P[U^t \xi]) \in {\rm Prob}(\mathbb{R})\]
with respect to the map $\mathbb{Z} \ni s \mapsto s /t \in \mathbb{R}$.
The goal of this section is to show that the sequence $\{ \phi^{(t)}_* (P[U^t \xi]) \}_{t = 1}^\infty$ weakly converges.

To study the limit distribution, we use the following diagonal self-adjoint operator on $\ell_2 (\mathbb{Z})$:
\[\dfrac{D}{t} \colon \delta_s \mapsto \dfrac{s}{t} \delta_s, \quad s \in \mathbb{Z}.\]
Its inverse Fourier transform is a self-adjoint operator on $L^2 (\mathbb{T})$
\[\mathcal{F}^{-1} \dfrac{D}{t} \mathcal{F} \colon z^s \mapsto s z^s, \quad s \in \mathbb{Z}.\]
When we write $\lambda$ as $\lambda(z)$,
the operator is identical to the differential operator $\dfrac{1}{t} z \dfrac{d}{dz}$.
When we write $\lambda$ as $\lambda(e^{i \theta})$,
the operator is identical with the differential operator $\dfrac{1}{it} \dfrac{d}{d \theta}$.
In the case that the probability measure $P[\xi]$ on $\mathbb{Z}$ rapidly decreases,
the vector $\xi$ is in the domain of $\left( \dfrac{D}{t} \right)^m$.

\begin{lemma}\label{lemma: diagonal operator}
Let $\xi$ be 
a unit vector in $\ell_2 (\mathbb{Z}) \otimes \mathbb{C}^n$ which rapidly decreases.
Then for every natural number $m$,
the $m$-th moment of $\phi^{(t)}_* (P[\xi])$ is finite,
and the moment is given by the following formula:
\[ \left\langle \left( \dfrac{D}{t} \otimes \mathrm{id} \right)^m \xi, \xi \right\rangle. \]
\end{lemma}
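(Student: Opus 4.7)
The plan is to reduce the claim to a direct calculation in the standard basis, using rapid decrease to guarantee absolute convergence. First I would expand the initial vector as
\[
\xi = \sum_{s \in \mathbb{Z}} \sum_{k = 1}^n \xi(s,k)\, \delta_s \otimes \delta_k,
\]
so that by the definition of $P[\xi]$,
\[
P[\xi](\{s\}) = \sum_{k=1}^n |\xi(s,k)|^2.
\]
The rapid decrease hypothesis then says that for every natural number $N$ there is a constant $C_N$ with $P[\xi](\{s\}) \le C_N (1+s^2)^{-N}$. Picking $N$ larger than $m/2 + 1$ shows that $\sum_{s} |s/t|^{m}\, P[\xi](\{s\})$ and $\sum_{s} (s/t)^{2m} P[\xi](\{s\})$ both converge, which simultaneously verifies finiteness of the $m$-th moment of $\phi^{(t)}_*(P[\xi])$ and confirms that $\xi$ lies in the domain of $(D/t \otimes \mathrm{id})^m$.

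Next I would compute the $m$-th moment directly from the push-forward. Since $\phi^{(t)}$ sends $s \in \mathbb{Z}$ to $s/t \in \mathbb{R}$, the change of variables gives
\[
\int_{\mathbb{R}} x^m\, d\phi^{(t)}_*(P[\xi])(x)
= \sum_{s \in \mathbb{Z}} \left(\frac{s}{t}\right)^m P[\xi](\{s\})
= \sum_{s \in \mathbb{Z}} \sum_{k = 1}^n \left(\frac{s}{t}\right)^m |\xi(s,k)|^2.
\]
On the other side, the diagonal action $\frac{D}{t} \colon \delta_s \mapsto (s/t)\delta_s$ gives $(D/t \otimes \mathrm{id})^m \xi = \sum_{s,k} (s/t)^m \xi(s,k)\, \delta_s \otimes \delta_k$ (this is meaningful because of the domain check above), and so
\[
\left\langle \left(\frac{D}{t} \otimes \mathrm{id}\right)^m \xi,\, \xi \right\rangle
= \sum_{s \in \mathbb{Z}} \sum_{k = 1}^n \left(\frac{s}{t}\right)^m |\xi(s,k)|^2.
\]
Comparing the two expressions yields the claimed identity.

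There is essentially no hard step here; the lemma is a bookkeeping consequence of unwinding the definitions of $P[\xi]$, the push-forward, and the diagonal operator $D/t$. The only point requiring mild care is to justify the absolute convergence, i.e.\ the interchange of the sum over $k$ and the sum over $s$, and the statement that $\xi$ lies in the (unbounded) domain of $(D/t \otimes \mathrm{id})^m$; both are immediate from the polynomial-decay bound furnished by rapid decrease.
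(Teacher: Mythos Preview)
Your proof is correct and follows essentially the same approach as the paper: expand $\xi$ in the standard basis, compute both the moment of the push-forward and the inner product as the same double sum $\sum_{s,k}(s/t)^m|\xi(s,k)|^2$, and identify them. You supply slightly more detail on the convergence/domain justification than the paper does, but the argument is otherwise identical.
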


\begin{proof}
We express the vector $\xi$ by
\[\sum_{s \in \mathbb{Z}, k \in \{1, \cdots, n\}} \xi(s, k) \delta_s \otimes \delta_k. \]
We compute the inner product as follows:
\begin{eqnarray*}
\left\langle \left( \dfrac{D}{t} \otimes \mathrm{id} \right)^m \xi, \xi \right\rangle
&=&
\sum_{s \in \mathbb{Z}, k \in \{1, \cdots, n\}} \left( \dfrac{s}{t} \right)^m |\xi(s, k)|^2\\
&=&
\sum_{s \in \mathbb{Z}} \left( \dfrac{s}{t} \right)^m \sum_{k \in \{1, \cdots, n\}} |\xi(s, k)|^2\\
&=&
\sum_{s \in \mathbb{Z}} \left( \dfrac{s}{t} \right)^m P[\xi](\{s\}).
\end{eqnarray*}
The last quantity is the $m$-th moment of $\phi^{(t)}_* (P[\xi])$. 
\end{proof}

\begin{lemma}\label{lemma: convergence theorem for U_1}
Let $\lambda \colon \mathbb{T} \to \mathbb{T}$ be an analytic map.
Let $U$ be the model quantum walk $U_{1, \lambda}$. 
Let $\xi \in \ell_2 (\mathbb{Z})$ be a unit vector which rapidly decays.
Then the sequence $\left\{ \phi^{(t)}_* (P[U^t \xi]) \right\}_{t = 1}^\infty$ weakly converges to a measure whose support is compact.
\end{lemma}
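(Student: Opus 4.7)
The plan is to use the method of moments. By Lemma \ref{lemma: diagonal operator}, the $m$-th moment of $\phi^{(t)}_*(P[U^t\xi])$ equals $\langle (D/t)^m U^t \xi, U^t \xi\rangle_{\ell_2(\mathbb{Z})}$. Conjugating by the inverse Fourier transform, this becomes $\langle L^m(\lambda^t \hat\xi), \lambda^t \hat\xi\rangle_{L^2(\mathbb{T})}$, where $\hat\xi := \mathcal{F}^{-1}\xi$ is smooth on $\mathbb{T}$ (by rapid decay of $\xi$), $\lambda^t$ denotes multiplication by $\lambda^t$, and $L$ denotes the differential operator $\tfrac{1}{it}\tfrac{d}{d\theta}$.

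The key step is to understand how $L$ interacts with multiplication by $\lambda^t$. Because $\lambda(\mathbb{T}) \subset \mathbb{T}$, the logarithmic derivative $h'(\theta) := \tfrac{1}{i\lambda(e^{i\theta})} \cdot \tfrac{d}{d\theta}\lambda(e^{i\theta})$ is a real-valued real-analytic function on $\mathbb{T}$, well-defined even when $\lambda$ has nonzero winding number (so that a global primitive $h$ need not exist). A direct Leibniz calculation yields
\[L(\lambda^t g) = \lambda^t \Bigl( h' + \tfrac{1}{it}\tfrac{d}{d\theta}\Bigr) g\]
for any smooth $g$; iterating this identity gives
\[L^m(\lambda^t \hat\xi) = \lambda^t \Bigl( h' + \tfrac{1}{it}\tfrac{d}{d\theta}\Bigr)^m \hat\xi.\]

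Substituting back and canceling the factors $|\lambda^t|^2 = 1$, the $m$-th moment equals $\int_\mathbb{T} \overline{\hat\xi} \cdot (h' + (it)^{-1}\tfrac{d}{d\theta})^m \hat\xi \, \tfrac{d\theta}{2\pi}$. Expanding this noncommutative binomial, every term other than the $(h')^m$ term carries a positive power of $1/t$ multiplied by a smooth function on compact $\mathbb{T}$, hence vanishes as $t \to \infty$. The limit is $\int_\mathbb{T} h'(\theta)^m \,|\hat\xi(\theta)|^2 \, \tfrac{d\theta}{2\pi}$, which is precisely the $m$-th moment of the pushforward of the probability measure $|\hat\xi|^2 \tfrac{d\theta}{2\pi}$ along $h' \colon \mathbb{T} \to \mathbb{R}$.

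Since $h'$ is continuous on the compact set $\mathbb{T}$, its image is a bounded interval; hence the limit measure has compact support and is uniquely determined by its moments, so convergence of moments upgrades to weak convergence. The main technical obstacle will be justifying the iterated formula for $L^m(\lambda^t g)$ rigorously and controlling the $O(1/t)$ remainder uniformly, since $h'$ and $\tfrac{d}{d\theta}$ do not commute; this requires careful bookkeeping of the noncommutative binomial expansion, but the bounds follow from smoothness of $\hat\xi$ and $h'$.
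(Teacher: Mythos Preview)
Your proof is correct and follows essentially the same approach as the paper: both reduce the $m$-th moment to $\bigl\langle (M[h] + \tfrac{1}{it}\tfrac{d}{d\theta})^m \mathcal{F}^{-1}\xi,\, \mathcal{F}^{-1}\xi\bigr\rangle$ via the Leibniz rule (the paper phrases the computation as conjugating $D/t$ by $U^t$, you phrase it as iterating $L$ through $\lambda^t$, but the resulting operator identity is identical), and then let $t\to\infty$ to obtain the pushforward of $|\hat\xi|^2\,\tfrac{d\theta}{2\pi}$ under $h = \tfrac{1}{i\lambda}\tfrac{d\lambda}{d\theta}$. Your remark about the ``technical obstacle'' is overly cautious: the noncommutative expansion is a fixed finite sum of $t$-independent differential operators applied to the smooth vector $\hat\xi$, each multiplied by a nonnegative power of $1/t$, so the convergence is immediate.
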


\begin{proof}
We prove that the $m$-th moments of measures $\{ \phi^{(t)}_* (P[U^t \xi]) \}_{t = 1}^\infty$ converge, for every $m$.
By Lemma \ref{lemma: diagonal operator}, the moment is given by
\begin{eqnarray*}
\left\langle \left( \dfrac{D}{t} \right)^m U^t \xi, U^t \xi \right\rangle
&=&
\left\langle \left( U^{-t} \dfrac{D}{t} U^t \right)^m \xi, \xi \right\rangle.
\end{eqnarray*}
Recall that $U^t = U_{1, \lambda}^t$ is the Fourier transform $\mathcal{F} M[\lambda^t] \mathcal{F}^{-1}$ of the multiplication operator by $\lambda^t \in C(\mathbb{T})$.
The operator $U^{-t} \dfrac{D}{t} U^t$ is equal to the following
\[U^{-t} \dfrac{D}{t} U^t 
= \mathcal{F} M[\lambda^{-t}] \dfrac{1}{i t} \dfrac{d}{d \theta}M[\lambda^t] \mathcal{F}^{-1}
= \mathcal{F} \left( 
\dfrac{1}{i t} \dfrac{d}{d \theta} + M \left[ \dfrac{1}{i \lambda} \dfrac{d \lambda} {d\theta} \right] 
\right) \mathcal{F}^{-1}.
\]
Note that the function $\dfrac{1}{i \lambda (e^{i \theta})} \dfrac{d \lambda (e^{i \theta})} {d\theta}$ is the derivative of $\arg(\lambda(e^{i \theta}))$. 
We denote the function by $h(\theta)$.
This is an analytic real-valued function on $\mathbb{T}$.
The $m$-th moment of $\phi^{(t)}_* (P[U^t \xi])$ is given by
\begin{eqnarray*}
\left\langle \left( \dfrac{1}{i t} \dfrac{d}{d \theta} + M \left[ h \right] 
 \right)^m \mathcal{F}^{-1} \xi, 
\mathcal{F}^{-1} \xi \right\rangle_{L^2(\mathbb{T})}.
\end{eqnarray*}
Note that $\mathcal{F}^{-1} \xi$ is smooth, 
because its Fourier coefficients rapidly decreases.
As $t$ tends to $\infty$, the sequence of the $m$-th moment converges to
\begin{eqnarray*}
\left\langle 
\left( M \left[ h \right] \right)^m 
\mathcal{F}^{-1} \xi, 
\mathcal{F}^{-1} \xi \right\rangle_{L^2(\mathbb{T})}
= \int_\mathbb{T} h(e^{i \theta})^m 
\left| [\mathcal{F}^{-1} \xi] (e^{i \theta}) \right|^2 \dfrac{d \theta}{2 \pi}.
\end{eqnarray*}

Let $\mu = \mu_{U, \xi}$ be the pushforward of the probability measure 
$\left| [\mathcal{F}^{-1} \xi] (e^{i \theta}) \right|^2 \dfrac{d \theta}{2 \pi}$ on $\mathbb{T}$ 
by the analytic map $h \colon \mathbb{T} \to \mathbb{R}$. 
The above integral can be written as follows
\[\int_{\mathbb{R}} r^m d \mu (r).\]
Thus we have the moment convergence of the sequence $\{ \phi^{(t)}_* (P[U^t \xi]) \}_{t = 1}^\infty$ to the measure $\mu$.

Since the map $h \colon \mathbb{T} \to \mathbb{R}$ is continuous,
the support of $\mu$ is compact.
The moment convergence to $\mu$ means weak convergence.
\end{proof}

\begin{lemma}\label{lemma: convergence theorem for U_d}
Let $d$ be a natural number and 
let $\lambda \colon \mathbb{T} \to \mathbb{T}$ be an analytic map.
Let $U$ be the model quantum walk $U_{d, \lambda}$. 
Let $\xi \in \ell_2 (\mathbb{Z}) \otimes \mathbb{C}^d$ be a rapidly decreasing unit vector.
Then
the sequence $\left\{ \phi^{(t)}_* (P[U^t \xi]) \right\}_{t = 1}^\infty$ weakly converges to a measure whose support is compact.
\end{lemma}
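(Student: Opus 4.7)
The plan is to reduce to Lemma \ref{lemma: convergence theorem for U_1} by conjugating the $d$-state model walk $U_{d,\lambda}$ to the $1$-state walk via the rearrangement unitary $W_d$. First, set $\eta = W_d \xi \in \ell_2(\mathbb{Z})$. Since $W_d(\delta_s \otimes \delta_k) = \delta_{k+ds}$, the coefficient $\eta(k+ds)$ equals $\xi(s,k)$, and therefore
\[ (1 + (k+ds)^2)^m\, |\eta(k+ds)|^2 \le C_{m,d}\,(1+s^2)^m\, P[\xi](\{s\}) \]
for a constant $C_{m,d}$ depending only on $m$ and $d$. Hence $\eta$ is rapidly decreasing since $\xi$ is. Applying Lemma \ref{lemma: convergence theorem for U_1} to $U_{1,\lambda}$ with initial vector $\eta$ yields weak convergence of $\nu^{(t)} := \phi^{(t)}_* P[U_{1,\lambda}^t \eta]$ to a compactly supported measure $\mu^*$ on $\mathbb{R}$.

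Next, I transfer this convergence to $\mu^{(t)} := \phi^{(t)}_* P[U_{d,\lambda}^t \xi]$ through $W_d$. By Lemma \ref{lemma: rearrangement}, $U_{d,\lambda}^t = W_d^* U_{1,\lambda}^t W_d$, so $U_{d,\lambda}^t \xi = W_d^* (U_{1,\lambda}^t \eta)$. Reading off position probabilities,
\[ P[U_{d,\lambda}^t \xi](\{s\}) = \sum_{k=1}^d \bigl|(U_{1,\lambda}^t \eta)(k+ds)\bigr|^2. \]
Let $\tilde\nu^{(t)}$ denote the pushforward of $P[U_{1,\lambda}^t \eta]$ under $j \mapsto j/(dt)$; equivalently, $\tilde\nu^{(t)} = (x \mapsto x/d)_* \nu^{(t)}$. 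Writing each $j \in \mathbb{Z}$ uniquely as $j = k + ds$ with $k \in \{1,\dots,d\}$, the mass $|(U_{1,\lambda}^t\eta)(j)|^2$ sits at $s/t$ in $\mu^{(t)}$ and at $j/(dt) = s/t + k/(dt)$ in $\tilde\nu^{(t)}$, so the displacement between paired atoms is at most $1/t$. Consequently, for every bounded Lipschitz $f \colon \mathbb{R} \to \mathbb{R}$,
\[ \left| \int f\, d\mu^{(t)} - \int f\, d\tilde\nu^{(t)} \right| \le \frac{\mathrm{Lip}(f)}{t}, \]
which tends to $0$ as $t \to \infty$.

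Finally, since $x \mapsto x/d$ is continuous and $\nu^{(t)} \to \mu^*$ weakly, the continuous mapping theorem gives $\tilde\nu^{(t)} \to (x \mapsto x/d)_* \mu^*$ weakly, still with compact support. Combined with the Lipschitz estimate and the Portmanteau characterization of weak convergence of probability measures on $\mathbb{R}$ via bounded Lipschitz test functions, this yields $\mu^{(t)} \to (x \mapsto x/d)_* \mu^*$ weakly, as desired. The main obstacle I anticipate is the bookkeeping in the transfer step: $W_d$ is unitary and preserves total mass, but it permutes positions on $\mathbb{Z}$ in a way that forces an explicit pairing of atoms between $\mu^{(t)}$ and $\tilde\nu^{(t)}$; once this pairing is set up, the $O(1/t)$ displacement is absorbed cleanly by the Lipschitz norm.
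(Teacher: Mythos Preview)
Your proof is correct and follows essentially the same approach as the paper: conjugate $U_{d,\lambda}$ to $U_{1,\lambda}$ via the rearrangement $W_d$, invoke Lemma~\ref{lemma: convergence theorem for U_1}, and then compare the two pushforward measures by exploiting that the map $(ds+k)/t \mapsto s/t$ is approximated by $r \mapsto r/d$ for large $t$. Your treatment is in fact more explicit than the paper's, which does not spell out the rapid decrease of $W_d\xi$ or the Lipschitz/Portmanteau argument; the paper simply asserts the approximation and the resulting weak convergence.
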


\begin{proof}
Recall that  the model quantum walk $U_{d, \lambda}$ is conjugate with the $1$-state model quantum walk $U_{1, \lambda}$ by
\[U_{d, \lambda} = W_d^* U_{1, \lambda} W_d.\]
By Lemma \ref{lemma: convergence theorem for U_1},
the sequence of measures $\{ \phi^{(t)}_* (P[U_{1, \lambda}^t W_d \xi]) \}_{t = 1}^\infty$ weakly converge.
Denote by $\eta^{(t)}$ the unit vector $U_{1, \lambda}^t W_d \xi$.
Compare the probability measure $\phi^{(t)}_* P[W_d^* \eta^{(t)}]$ with $\phi^{(t)}_* P[\eta^{(t)}]$.
The former measure is the pushforward of the latter measure under the map
\[\mathbb{Z}[1/t] \ni (ds + k)/t \mapsto s/t \in \mathbb{Z}[1/t], \quad k \in \{1, \cdots, d\}.\] 
If $t$ is large, the above map is approximated by
\[\mathbb{R} \ni r \mapsto r/d \in \mathbb{R}.\]
Therefore, the sequence $\phi^{(t)}_* P[W_d^* \eta^{(t)}]$ also weakly converges, and the limit is the pushforward of the limit of  $\phi^{(t)}_* P[\eta^{(t)}]$ by the map
$r \mapsto r/d$.
\end{proof}

\begin{theorem}\label{theorem: limit distribution}
For every discrete-time analytic homogeneous quantum walk $U$, and for every rapidly decreasing initial unit vector $\xi$,
the sequence of probability measures $\left\{ \phi^{(t)}_* (P[U^t \xi]) \right\}_{t = 1}^\infty$ on $\mathbb{R}$ weakly converges to a measure whose support is compact.
\end{theorem}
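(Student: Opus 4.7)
The plan is to reduce the statement to Lemma \ref{lemma: convergence theorem for U_d} (the case of a single model quantum walk) by means of the structure theorem. By Theorem \ref{theorem: structure theorem} one writes $U = V W V^{*}$, where $V$ is an analytic unitary on $\ell_{2}(\mathbb{Z}) \otimes \mathbb{C}^{n}$ and $W = U_{d(1),\lambda_{1}} \oplus \cdots \oplus U_{d(m),\lambda_{m}}$ is a direct sum of model quantum walks. Set $\xi' = V^{*}\xi$; the remarks at the end of Subsection 5.1 guarantee that $\xi'$ is rapidly decreasing, and since $P[\xi'_{j}](\{s\}) \le P[\xi'](\{s\})$ the same holds for the components $\xi'_{j} \in \ell_{2}(\mathbb{Z}) \otimes \mathbb{C}^{d(j)}$ relative to the decomposition. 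Because $U^{t}\xi = V W^{t}\xi'$ and $W^{t}\xi' = \bigoplus_{j} U_{d(j),\lambda_{j}}^{t}\xi'_{j}$, the direct-sum structure gives $P[W^{t}\xi'](\{s\}) = \sum_{j} P[U_{d(j),\lambda_{j}}^{t}\xi'_{j}](\{s\})$.

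The main technical step is to control the effect of the conjugation by $V$. For any $f \in C_{c}^{\infty}(\mathbb{R})$ I would show
\begin{equation*}
\int f \, d\bigl(\phi^{(t)}_{*} P[U^{t}\xi]\bigr) - \int f \, d\bigl(\phi^{(t)}_{*} P[W^{t}\xi']\bigr) = O(1/t).
\end{equation*}
Translating into inner products and using $V^{*}V = I$, this difference equals $\langle V^{*}[f(D/t) \otimes I,\, V]\,W^{t}\xi',\, W^{t}\xi'\rangle$, so it is enough to prove that the commutator $[f(D/t) \otimes I,\, V]$ has operator norm $O(1/t)$. Its $((s,k),(r,l))$-entry equals $(f(s/t) - f(r/t))\, V((s,k),(r,l))$, bounded in absolute value by $(\|f'\|_{\infty}/t)\,|s - r|\cdot c\,\rho^{|s - r|}$ for constants $c > 0$ and $\rho \in (0,1)$ provided by the analyticity of $V$ (Definition \ref{definition: analyticity}). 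Schur's test then delivers the desired uniform bound.

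With this estimate in place, Lemma \ref{lemma: convergence theorem for U_d} applied to each nonzero $\xi'_{j}/\|\xi'_{j}\|$ produces a compactly supported probability measure $\mu_{j}$ on $\mathbb{R}$ such that $\phi^{(t)}_{*} P[U_{d(j),\lambda_{j}}^{t}\xi'_{j}/\|\xi'_{j}\|] \to \mu_{j}$ weakly. Summing, $\phi^{(t)}_{*} P[W^{t}\xi'] \to \mu := \sum_{j} \|\xi'_{j}\|^{2}\mu_{j}$ weakly, and $\mu$ is a probability measure of compact support since $\sum_{j}\|\xi'_{j}\|^{2} = \|\xi'\|^{2} = 1$. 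Combined with the commutator estimate this yields $\int f \, d(\phi^{(t)}_{*} P[U^{t}\xi]) \to \int f \, d\mu$ for every $f \in C_{c}^{\infty}(\mathbb{R})$. To upgrade this to full weak convergence I would verify tightness via a uniform bound on second moments: the decomposition $(D/t)U^{t}\xi = V(D/t)W^{t}\xi' + V(B/t)W^{t}\xi'$ with the bounded operator $B := V^{*}[D, V]$ gives $M_{2}(\phi^{(t)}_{*} P[U^{t}\xi]) \le 2 M_{2}(\phi^{(t)}_{*} P[W^{t}\xi']) + 2\|B\|^{2}/t^{2}$, which stays bounded as $t \to \infty$.

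The main obstacle is the commutator estimate: although the multiplier $|s-r|/t$ is unbounded, the exponential decay of the matrix entries of $V$---a direct consequence of the analyticity of $U$ transferred through Theorem \ref{theorem: structure theorem}---tames it via Schur's test. It is precisely here that the requirement in Definition \ref{definition: analyticity} of honest exponential decay, rather than mere rapid decrease, plays an essential role.
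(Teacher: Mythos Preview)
Your argument is correct and reaches the same conclusion, but the route differs from the paper's in two places. First, the paper works entirely with moments: it writes the $m$-th moment of $\phi^{(t)}_{*}P[U^{t}\xi]$ as $\bigl\langle (W^{-t}V^{*}(D/t\otimes\mathrm{id})VW^{t})^{m}V^{*}\xi,\,V^{*}\xi\bigr\rangle$ and then observes that on the Fourier side the commutator $[D/t\otimes\mathrm{id},V]$ becomes the multiplication operator by $\tfrac{1}{t}\bigl(z\,d\widehat{V}_{k,l}/dz\bigr)_{k,l}$, whose norm is $O(1/t)$ by the Leibniz rule. Hence each moment is asymptotic to the corresponding moment of $\phi^{(t)}_{*}P[W^{t}V^{*}\xi]$, and since moment convergence to a compactly supported limit already implies weak convergence, no separate tightness argument is needed. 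You instead stay on the physical side, bounding $[f(D/t)\otimes\mathrm{id},V]$ via Schur's test on the matrix entries and then adding a tightness step to pass from $C_{c}^{\infty}$ test functions to full weak convergence. Second, both arguments ultimately rest on the boundedness of $[D,V]$, but the paper extracts this from the differentiability of $\widehat{V}$ rather than from exponential off-diagonal decay; consequently your closing remark that ``honest exponential decay, rather than mere rapid decrease, plays an essential role'' at this step is a bit misleading---the commutator estimate would go through equally well for $V$ in the $C^{\infty}$-class, and it is the structure theorem (Theorem~\ref{theorem: structure theorem}) that genuinely requires analyticity. The trade-off: the paper's Fourier-side computation is shorter and sidesteps tightness, while your Schur-test approach is more self-contained and does not require passing to the dual picture.
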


\begin{proof}
By Theorem \ref{theorem: structure theorem}, the quantum walk $U$ is presented by an analytic unitary operator $V$ and model quantum walks as follows:
\[
U = 
V \left( U_{d(1), \lambda_1} \oplus U_{d(2), \lambda_2} \oplus \cdots \oplus U_{d(m), \lambda_m} \right) V^*. \]
Note that the vector $V^* \xi$ rapidly decreases.
Denote by $W^{t}$ the unitary
\[
\left( U_{d(1), \lambda_1} \oplus U_{d(2), \lambda_2} \oplus \cdots \oplus U_{d(m), \lambda_m} \right)^t.
\]
By Lemma \ref{lemma: convergence theorem for U_d}, the sequence of measures 
\[
\phi^{(t)}_* P[W^t V^* \xi ]
\]
weakly converges.

The $m$-th moment of the probability measure $\phi^{(t)}_* (P[U^t \xi])$ is written by
\begin{eqnarray*}
\left\langle \left( \dfrac{D}{t} \otimes \mathrm{id} \right)^m U^t \xi, U^t \xi \right\rangle
&=&
\left\langle \left( W^{-t} V^* \left( \dfrac{D}{t} \otimes \mathrm{id}\right) V W^t \right)^m V^* \xi, V^* \xi \right\rangle.
\end{eqnarray*}
We consider the commutator
\[\left( \dfrac{D}{t} \otimes \mathrm{id}\right) V - V \left( \dfrac{D}{t} \otimes \mathrm{id}\right). \]
Its inverse Fourier transform is
\[\left( \dfrac{z}{t} \dfrac{d}{dz} \otimes \mathrm{id}\right) \widehat{V} 
- \widehat{V} \left( \dfrac{z}{t} \dfrac{d}{dz} \otimes \mathrm{id}\right). \]
Note that every entry $\widehat{V}_{k, l}(z)$ of $\widehat{V}$ gives a multiplication operator by an analytic function on $\mathbb{T}$.
By the Leibniz rule of differential, the above operator is simply given by the multiplication operator by
\[\dfrac{1}{t} \left( z \dfrac{d \widehat{V}_{k, l}(z)}{dz} \right)_{k, l}.\]
As $t$ becomes large, the operator norm converges to $0$.
It follows that the $m$-th moment
\begin{eqnarray*}
\left\langle \left( W^{-t} V^* \left( \dfrac{D}{t} \otimes \mathrm{id}\right) V W^t \right)^m V^* \xi, V^* \xi \right\rangle.
\end{eqnarray*}
is asymptotically equal to  
\begin{eqnarray*}
\left\langle \left( W^{-t} \left( \dfrac{D}{t} \otimes \mathrm{id}\right) W^t \right)^m V^* \xi, V^* \xi \right\rangle
=
\left\langle \left( \dfrac{D}{t} \otimes \mathrm{id} \right)^m W^t V^* \xi, W^t V^* \xi \right\rangle
\end{eqnarray*}
This is the $m$-th moment of $\phi^{(t)}_* P[W^t V^* \xi ]$.

Therefore the sequence of measures $\{ \phi^{(t)}_* (P[U^t \xi]) \}_{t = 1}^\infty$ converges in moments and the limit distribution has compact support.
\end{proof}

\begin{remark}
\label{remark: our advantage}
In the following point, our argument improves the known study of space-homogeneous quantum walks on $\mathbb{Z}$.
\begin{itemize}
\item
We clarified the definition of quantum walks on $\mathbb{Z}$ 
in Section \ref{section: analyticity}.
\item
As proceeding studies have said, the eigenvalue functions $\lambda_1, \lambda_2$ are smooth.
However, this is not a trivial claim. The claim is proved in Proposition \ref{proposition: eigenvalue functions}.
In fact, the eigenvalue functions are analytic.
\item
Our argument covers the case that the eigenvalue function of the inverse Fourier transform is not single-valued.
Subsection \ref{subsection: modified Hadamard walk} gives an example.
In the authors opinion, it sounds natural to say that the unitary operator $U$ in Subsection \ref{subsection: modified Hadamard walk} is a quantum walk, because in the actual experiment the walk is substantially identical to a (part of) usual quantum walk.
The paper \cite{GJS} implicitly concentrate on the case that the eigenvalue functions are single-valued.
\item
Our argument encompasses the case that the characteristic polynomial has multiple roots.
The paper \cite{GJS} does not successfully explain why we only have to consider the case that there exists no multiple root.
\end{itemize}

\end{remark}

\section{Classification of analytic unitary operators in ${\rm C}^\ast_{\rm red}(\mathbb{Z}) \otimes M_n(\mathbb{C})$ }

Let us recall that the reduced group C$^*$-algebra ${\rm C}^*_{\rm red}(\mathbb{Z})$
of $\mathbb{Z}$ is the operator norm closure of 
$\sum_{s \in \mathbb{Z}} \mathbb{C} S_s \subset \mathbb{B}(\ell_2 (\mathbb{Z}))$.
For a complex number $z$ with modulus $1$, the linear map defined by
\[S_s \mapsto z^s S_s, \quad x \in \mathbb{Z}\]
extends to an automorphism $\alpha_z$ on C$^\ast_{\rm red}(\mathbb Z)$.
For every operator $X$ in C$^\ast_{\rm red}(\mathbb Z)$, the map
\[X \mapsto \alpha_z(X)\]
is continuous with respect to operator norm topology.
The action 
\[\alpha \colon \mathbb{T} \to {\rm Aut} ({\rm C}^\ast_{\rm red}(\mathbb Z))\]
is called the gauge action.
Note that $\alpha$ naturally extends to ${\rm C}^\ast_{\rm red}(\mathbb Z) \otimes M_n(\mathbb{C})$.

For a homogeneous operator $X$ on $\ell_2(\mathbb{Z}) \otimes \mathbb{C}^n$,
the operator $X$ is analytic,
if and only if $X$ is an element of ${\rm C}^\ast_{\rm red}(\mathbb Z) \otimes M_n(\mathbb{C})$ and 
there exist 
\begin{enumerate}
\item
a domain $\Omega \subseteq \mathbb{C}$ containing the unit circle $\mathbb{T}$,
\item
and an analytic map 
$\Omega \to {\rm C}^\ast_{\rm red}(\mathbb Z) \otimes M_n(\mathbb{C})$
which extends the map 
$\mathbb{T} \ni z \mapsto \alpha_z(X)$
given by the gauge action.
\end{enumerate}

Let $\mathcal{U}(n)$ be the set of all the analytic unitary operators in ${\rm C}^\ast_{\rm red}(\mathbb{Z}) \otimes M_n(\mathbb{C})$. 
We introduce an equivalence relation $\sim$ on $\mathcal{U}(n)$ by conjugacy.
More precisely, 
two elements $U_1$ and $U_2$ are said to be equivalent, 
if there exists $V \in \mathcal{U}(n)$ such that
$V U_1 V^* = U_2$.
Our argument provide a classification result on $\mathcal{U}(n)$ up to conjugacy.

Let $\mathcal{E}(n)$ be the collections of all the indecomposable system of eigenvalue functions
\[((d(1), \lambda_1), (d(2), \lambda_2), \cdots, (d(m), \lambda_m))\]
introduced in Subsection \ref{subsection: eigenvalue functions}.
Here, 
\begin{itemize}
\item
$m$ is a natural number, 
\item
$d(1)$, $\cdots$, $d(m)$ are natural numbers whose sum is $n$, 
\item
and $\lambda_1$, $\cdots$, $\lambda_m$ 
are analytic maps from $\mathbb{T}$ to $\mathbb{T}$ 
such that for any $j$, $\lambda_j$ does not satisfy the conditions on reducibility 
explained in Lemma \ref{lemma: iff reducible}.
\end{itemize}
We introduce an equivalence relation $\sim$ on $\mathcal{E}(n)$ by the procedures
(1), (2) explained in Subsection \ref{subsection: eigenvalue functions}.

\begin{theorem}
The map 
\begin{eqnarray*}
& & ((d(1), \lambda_1), (d(2), \lambda_2), \cdots, (d(m), \lambda_m))\\
&\mapsto&
U_{d(1), \lambda_1} \oplus U_{d(2), \lambda_2} \oplus \cdots \oplus U_{d(m), \lambda_m}
\end{eqnarray*}
from $\mathcal{U}(n)$ to $\mathcal{E}(n)$ 
induces a bijective correspondence between
$\mathcal{E}(n) / \sim$ and $\mathcal{U}(n) / \sim$. 
\end{theorem}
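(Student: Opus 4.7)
The plan is to check three items: the map descends to equivalence classes, it is surjective onto $\mathcal{U}(n)/\sim$, and it is injective. The map in the statement sends a tuple $((d(j),\lambda_j))_{j=1}^m \in \mathcal{E}(n)$ to the direct sum $\bigoplus_j U_{d(j),\lambda_j}\in \mathcal{U}(n)$; I would first note that this output really lies in $\mathcal{U}(n)$ because each $U_{d(j),\lambda_j}$ is analytic (each entry of its inverse Fourier transform is analytic by construction), and the group of analytic unitaries is closed under direct sum.

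For well-definedness on equivalence classes, I would verify procedures (1) and (2) separately. Procedure (1), permutation of the index, is implemented by conjugation by a block permutation matrix in $M_n(\mathbb C)\subset C^\ast_{\rm red}(\mathbb Z)\otimes M_n(\mathbb C)$, which is trivially an element of $\mathcal{U}(n)$. For procedure (2), where $\lambda_j$ is replaced by $\zeta\mapsto \lambda_j(\exp(2\pi i c/d(j))\zeta)$, I would observe that this replacement does not change the irreducible factor $\prod_{\zeta^{d(j)}=z}(\lambda-\lambda_j(\zeta))\in \mathcal{Q}_\mathbb{T}[\lambda]$ of the characteristic polynomial of $\widehat{U_{d(j),\lambda_j}}(z)$. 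Consequently both $((d(j),\lambda_j))_j$ and its image under procedure (2) are indecomposable systems of eigenvalue functions for the same unitary $\bigoplus_j U_{d(j),\lambda_j}$; applying Theorem \ref{theorem: structure theorem} twice produces analytic unitaries conjugating both direct sums to the same element of $\mathcal{U}(n)$, hence to each other.

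Surjectivity is just Theorem \ref{theorem: structure theorem}: every $U\in\mathcal{U}(n)$ admits an indecomposable system of eigenvalue functions (Proposition \ref{proposition: eigenvalue functions} combined with the discussion following Definition \ref{definition: system of eigenvalue functions}), and the theorem gives an analytic unitary conjugating $U$ to the corresponding direct sum of model quantum walks. For injectivity, suppose two indecomposable systems $E=((d(j),\lambda_j))_j$ and $E'=((d'(k),\lambda'_k))_k$ give unitarily equivalent direct sums via some $V\in\mathcal{U}(n)$. Passing to inverse Fourier transforms, $\widehat V(z)$ is analytic in $z$ and conjugates $\bigoplus_j \widehat{U_{d(j),\lambda_j}}(z)$ to $\bigoplus_k \widehat{U_{d'(k),\lambda'_k}}(z)$ for every $z\in\mathbb{T}$. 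Characteristic polynomials are preserved under pointwise conjugation, so by Lemma \ref{lemma: characteristic polynomial of model} the two products
\[
\prod_{j=1}^m \prod_{\zeta^{d(j)}=z}(\lambda-\lambda_j(\zeta))
\quad\text{and}\quad
\prod_{k=1}^{m'} \prod_{\zeta^{d'(k)}=z}(\lambda-\lambda'_k(\zeta))
\]
coincide as elements of $\mathcal{Q}_\mathbb{T}[\lambda]$. Each factor is irreducible because the systems are indecomposable (Lemma \ref{lemma: iff reducible}), so the uniqueness of factorization in the principal ideal domain $\mathcal{Q}_\mathbb{T}[\lambda]$, combined with Proposition \ref{proposition: irreducible polynomial} (\ref{item: rotation}), forces $E$ and $E'$ to agree up to a permutation (procedure (1)) of indices and a rotation (procedure (2)) within each factor.

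The main obstacle I anticipate is making sure the conjugating operators used in the well-definedness step for procedure (2) genuinely live in $\mathcal{U}(n)$, i.e.\ are analytic unitaries in $C^\ast_{\rm red}(\mathbb{Z})\otimes M_n(\mathbb{C})$ rather than merely pointwise-unitary maps $\mathbb{T}\to M_n(\mathbb{C})$. This is resolved by invoking Theorem \ref{theorem: structure theorem} in a black-box fashion, since that theorem is already proved by constructing $V$ from the analytic sections of eigenvectors supplied by Proposition \ref{proposition: structure of dual}; routine bookkeeping between the forward and inverse Fourier pictures finishes the argument.
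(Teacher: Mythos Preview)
Your proof is correct and follows the same overall architecture as the paper: well-definedness, surjectivity via the structure theorem, and injectivity via unique factorization of the characteristic polynomial in $\mathcal{Q}_\mathbb{T}[\lambda]$ together with Lemma~\ref{lemma: iff reducible} and Proposition~\ref{proposition: irreducible polynomial}(\ref{item: rotation}).

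The one place where your route differs from the paper is the well-definedness check for procedure~(2). You argue indirectly: the rotated system is still an indecomposable system of eigenvalue functions for the same direct sum, so Theorem~\ref{theorem: structure theorem} manufactures the conjugating analytic unitary. The paper instead observes that conjugacy of $U_{d,\lambda}$ and $U_{d,\widetilde\lambda}$ (with $\widetilde\lambda(\zeta)=\lambda(e^{2\pi i/d}\zeta)$) can be read off directly from the definition of the model walk: if $\lambda(\zeta)=\sum_s c(s)\zeta^s$ then $\widetilde\lambda$ has coefficients $c(s)e^{2\pi i s/d}$, and a short computation shows $U_{d,\widetilde\lambda}=D\,U_{d,\lambda}\,D^*$ for the constant diagonal unitary $D=\mathrm{diag}(e^{2\pi i/d},e^{4\pi i/d},\dots,1)\in M_d(\mathbb{C})$. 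This direct argument is more elementary and avoids invoking the structure theorem as a black box; your approach, while heavier, has the virtue of making explicit that the required conjugator is analytic without any additional calculation.
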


\begin{proof}
For a natural number $d$ and an analytic map 
$\lambda \colon \mathbb{T} \to \mathbb{T}$,
consider the rotation $\widetilde{\lambda}$ by
\[\widetilde{\lambda}(\zeta) = \lambda(\exp(2 \pi i / d) \zeta).\]
We can easily show that
$U_{d, \widetilde{\lambda}}$ is conjugate to $U_{d, \lambda}$,
by the definition of model quantum walks introduced 
in Subsection \ref{subsection: model quantum walk}.
It follows that the map from 
$\mathcal{E}(n) / \sim$ to $\mathcal{U}(n) / \sim$ is well-defined. 

The map is surjective, by Theorem \ref{theorem: structure theorem}.

We next prove that the map is injective.
For the unitary 
\[
U =
U_{d(1), \lambda_1} \oplus U_{d(2), \lambda_2} \oplus \cdots \oplus U_{d(m), \lambda_m},
\]
consider the inverse Fourier transform $\widehat{U}(z)$.
The characteristic function is given by
\[f(\lambda; z) 
= \prod_{j = 1}^m \prod_{\zeta \colon \zeta^{d(j)} = z} (\lambda - \lambda_j(\zeta)).\]
By Lemma \ref{lemma: iff reducible}, each factor $\prod_{\zeta \colon \zeta^{d(j)} = z} (\lambda - \lambda_j(\zeta))$
is an irreducible polynomial in $\mathcal{Q}_\mathbb{T}[\lambda]$.
By uniqueness of the irreducible decomposition,
the system of eigenvalue functions
\[((d(1), \lambda_1), (d(2), \lambda_2), \cdots, (d(m), \lambda_m))\]
is uniquely determined up to permutation on $\{1, \cdots, m\}$ and rotation on $\lambda_j(\zeta)$ by $d(j)$-th root of $1$.
It follows that the map from 
$\mathcal{E}(n) / \sim$ to $\mathcal{U}(n) / \sim$ is injective.
\end{proof}

\section{An algebra to which a quantum walk belongs}\label{section: algebraic definition}

\begin{theorem}
Every $n$-state discrete-time analytic homogeneous quantum walk $U$ on $\mathbb{Z}$ is a solution of some algebraic equation of degree $n$ whose coefficients are analytic elements of ${\rm C}^\ast_{\rm red}(\mathbb{Z}) \otimes \mathbb{C} \subset \mathbb{B}(\ell_2(\mathbb{Z} \otimes \mathbb{C}^n)$.
\end{theorem}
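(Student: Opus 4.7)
The plan is to apply the classical Cayley--Hamilton theorem fiberwise on the Fourier side and then transport the identity back through the inverse Fourier transform. All the analytic bookkeeping needed is already in place from Section~4 of the paper.

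First I would revisit the characteristic polynomial
\[f(\lambda;z) = \det\!\left( \left( \lambda \delta_{k,l} - \widehat{U}(z;k,l) \right)_{k,l} \right) = \lambda^n + c_{n-1}(z)\lambda^{n-1} + \cdots + c_1(z)\lambda + c_0(z).\]
Because $U$ is analytic and homogeneous, Lemma \ref{lemma: Fourier expansions} gives a common domain $\Omega \supset \mathbb{T}$ on which every entry $\widehat{U}(z;k,l)$ extends analytically. Each coefficient $c_k(z)$ is a fixed polynomial in these entries, hence is itself analytic on $\Omega$. Via Lemma \ref{lemma: Fourier expansions} once more, each $c_k$ is the inverse Fourier transform of an analytic element $C_k \in {\rm C}^\ast_{\rm red}(\mathbb{Z})$, and in particular $C_k \otimes I_n$ is an analytic homogeneous element of ${\rm C}^\ast_{\rm red}(\mathbb{Z}) \otimes M_n(\mathbb{C})$.

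Next I would apply Cayley--Hamilton at each fiber: for every $z \in \mathbb{T}$ the matrix $\widehat{U}(z) \in M_n(\mathbb{C})$ satisfies $f(\widehat{U}(z);z) = 0$. Reading this identity in $C(\mathbb{T}) \otimes M_n(\mathbb{C})$ yields
\[\widehat{U}^n + M[c_{n-1}] \widehat{U}^{n-1} + \cdots + M[c_1]\widehat{U} + M[c_0] = 0,\]
where each $M[c_k]$ is the scalar multiplication operator on $L^2(\mathbb{T}) \otimes \mathbb{C}^n$. Applying the Fourier transform ${\mathcal F} \otimes \mathrm{id}$ turns each $M[c_k]$ into $C_k \otimes I_n$, and the identity becomes
\[U^n + (C_{n-1} \otimes I_n) U^{n-1} + \cdots + (C_1 \otimes I_n) U + (C_0 \otimes I_n) = 0\]
in ${\rm C}^\ast_{\rm red}(\mathbb{Z}) \otimes M_n(\mathbb{C})$. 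This is precisely an algebraic equation of degree $n$ for $U$ whose coefficients sit in the analytic part of ${\rm C}^\ast_{\rm red}(\mathbb{Z}) \otimes \mathbb{C}$, as required.

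There is no real obstacle here; the only point that needs mild care is verifying that the coefficients of the characteristic polynomial lift to \emph{analytic} elements of ${\rm C}^\ast_{\rm red}(\mathbb{Z})$ rather than merely continuous ones, but this is immediate from the fact that polynomials in analytic functions on a common neighborhood $\Omega$ of $\mathbb{T}$ are again analytic on $\Omega$, together with the characterization of analytic homogeneous operators via Fourier coefficients given in Lemma \ref{lemma: Fourier expansions}. Everything else is standard Cayley--Hamilton and Fourier inversion.
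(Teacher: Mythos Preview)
Your proof is correct and follows essentially the same route as the paper: take the characteristic polynomial of $\widehat{U}(z)$, invoke Cayley--Hamilton fiberwise to get $f(\widehat{U}(z);z)=O$, note the coefficients are analytic on $\mathbb{T}$, and transport the identity back via Fourier. The paper's own argument is a three-line version of exactly this; you have simply spelled out the analyticity of the coefficients (via Lemma~\ref{lemma: Fourier expansions}) and the Fourier inversion step in more detail than the paper bothers to.
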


\begin{proof}
The characteristic polynomial $f(\lambda; z)$ of $\widehat{U}(z)$ satisfies
\[f(\widehat{U}(z); z) = O.\]
The left hand side is a polynomial of $\widehat{U}(z)$ whose coefficients are analytic functions on $\mathbb{T}$.
Applying the inverse Fourier transform to the equation, we obtain the theorem.
\end{proof}

\begin{remark}
Let us recall the following basic insight emphasized by Mikio Sato:
\begin{itemize}
\item
Equation $\leftrightarrow$ Algebra (or Module)
\item
Solution $\leftrightarrow$ Homomorphism
\end{itemize}
For our concrete quantum walk $U$, let $f(\lambda) \in \mathcal{Q}_\mathbb{T}[\lambda]$ be the characteristic polynomial of the inverse Fourier transform $\widehat{U}(z)$.
We have the following:
\begin{itemize}
\item
Algebraic equation $f= 0$ defines a quotient algebra $\mathcal{Q}_\mathbb{T}[\lambda]/(f)$,
where $(f)$ stands for the ideal generated by $f$.
\item
The solution $U$ of the algebraic equation $f = 0$ defines a homomorphism defined on $\mathcal{Q}_\mathbb{T}[\lambda] / (f)$ such that $\lambda + (f)$ maps to $U$.
\end{itemize}
There seems to be alternative framework of quantum walks in which we can treat quantum walks more algebraically. 
Once such kind of framework is established, a concrete form $U$ of quantum walk will be regarded as an image of a homomorphism from some algebra.
\end{remark}

\section{Examples}
\label{section: examples}

As in the previous sections, for an integer $s \in \mathbb{Z}$, $S_s$ stands for the shift operator $\delta_t \to \delta_{s + t}$ on $\ell_2(\mathbb{Z})$.

\subsection{Some $2$-state quantum walk}

Let $a$, $b$ be complex numbers satisfying $|a|^2 + |b|^2 = 1, ab \neq 0$.
We express $a$ and $b$ as follows
\[a = r e^{i \alpha}\,\quad b = \sqrt{1 - r^2} e^{i \beta},\]
where $\alpha$, $\beta$, and $0 < r < 1$ are real numbers.
Let us consider the unitary operator
\[
U = 
\left(
\begin{array}{cc}
\overline{a} S_{-1} & - b S_{-1} \\
\overline{b} S_1 & a S_1 
\end{array}
\right), \quad z \in \mathbb{T}
\]
acting on $\ell_2(\mathbb{Z}) \otimes \mathbb{C}^2$.
We regard $\ell_2(\mathbb{Z}) \otimes \mathbb{C}^2$ as the set of column vectors of length $2$ whose entries are $\ell_2$ functions on $\mathbb{Z}$.
The weak limit theorem for this walk has been already shown in \cite{KonnoJMSJ}.

Let us determine whether this walk is a restriction of a continuous-time quantum walk
and whether it is indecomposable.
For simplicity, we assume that $\alpha = 0$.
The characteristic function of the inverse Fourier transform $\widehat{U}(z)$ is
\[f(\lambda; z) = \lambda^2 - r \left( z + z^{-1} \right) \lambda + 1.\]
We express $z$ by $e^{i \theta}$.
The roots are
\begin{eqnarray*}
\lambda_1(e^{i \theta}) &=& r \cos \theta + i \sqrt{1 - r^2 \cos^2 \theta},\\
\lambda_2(e^{i \theta}) &=& r \cos \theta - i \sqrt{1 - r^2 \cos^2 \theta}.
\end{eqnarray*}
They are single-valued functions.
It follows that
\[((1, \lambda_1), (1, \lambda_2))\]
is an indecomposable system of eigenvalue functions of $\widehat{U}$ introduced 
in Subsection \ref{subsection: eigenvalue functions}.
The characteristic function is expressed as follows:
\[f(\lambda; z) = (\lambda - \lambda_1(z)) (\lambda - \lambda_2(z)).\]
By Proposition \ref{proposition: indecomposable QW}, the walk is decomposable.
The winding numbers of the eigenvalue functions are $0$.
By Theorem \ref{theorem: realization by CTQW}, 
the walk is a restriction of a continuous-time analytic homogeneous quantum walk.

We now proceed to calculate more concrete description of the continuous-time quantum walk.
For the inverse Fourier transform of $U$, we have
\begin{eqnarray*}
 & & 
\widehat{U}(e^{i \theta}) - \lambda_2(e^{i \theta})
\\
 &=& 
\left(
\begin{array}{cc}
i \sqrt{1 - r^2 \cos^2 \theta} - i r \sin \theta & - \sqrt{1 - r^2} e^{- i(\theta - \beta)} \\
\sqrt{1 - r^2} e^{i(\theta - \beta)} & i \sqrt{1 - r^2 \cos^2 \theta} + i r \sin \theta
\end{array}
\right)
\\
 & & 
\widehat{U}(e^{i \theta}) - \lambda_1(e^{i \theta})
\\
 &=& 
\left(
\begin{array}{cc}
- i \sqrt{1 - r^2 \cos^2 \theta} - i r \sin \theta & - \sqrt{1 - r^2} e^{- i(\theta - \beta)} \\
\sqrt{1 - r^2} e^{i(\theta - \beta)} & - i \sqrt{1 - r^2 \cos^2 \theta} + i r \sin \theta
\end{array}
\right)
\end{eqnarray*}
The first column vectors of these matrices 
\begin{eqnarray*}
{\bf x}_1(e^{i \theta}) &=& 
\left(
\begin{array}{c}
i \sqrt{1 - r^2 \cos^2 \theta} - i r \sin \theta \\
\sqrt{1 - r^2} e^{i(\theta - \beta)}
\end{array}
\right)
\\
{\bf x}_2(e^{i \theta}) &=& 
\left(
\begin{array}{cc}
- i \sqrt{1 - r^2 \cos^2 \theta} - i r \sin \theta\\
\sqrt{1 - r^2} e^{i(\theta - \beta)}
\end{array}
\right)
\end{eqnarray*}
are eigenvectors of eigenvalues $\lambda_1(e^{i \theta})$, $\lambda_2(e^{i \theta})$, respectively. 
These normalizations
${\bf v}_1(e^{i \theta})$, ${\bf v}_2(e^{i \theta})$ form an orthonormal basis.
%\begin{eqnarray*}
%{\bf v}_1(e^{i \theta}) &=& 
%\frac{1}
%{\sqrt{2 - r^2 - r^2 \cos 2(t -\alpha) + 2 \sin \theta \sqrt{1 - r^2 \cos^2(t-\alpha)}} }
%\left(
%\begin{array}{c}
%i \sqrt{1 - r^2 \cos^2 \theta} + i r \sin \theta \\
%\sqrt{1 - r^2} e^{i(\theta + \beta)}
%\end{array}
%\right)
%\\
%{\bf v}_2(e^{i \theta}) &=& 
%\frac{1}
%{\sqrt{2 - r^2 - r^2 \cos 2(t -\alpha) - 2 \sin \theta \sqrt{1 - r^2 \cos^2(t-\alpha)}} }
%\left(
%\begin{array}{cc}
%- i \sqrt{1 - r^2 \cos^2 \theta} + i r \sin \theta\\
%\sqrt{1 - r^2} e^{i(\theta + \beta)}
%\end{array}
%\right)
%\end{eqnarray*}

Define a unitary matrix $\widehat{V}(e^{i \theta})$ by
\[\widehat{V}(e^{i \theta}) = \left( {\bf v}_1(e^{i \theta})\quad {\bf v}_2(e^{i \theta}) \right).\]
The inverse Fourier transform of the walk is diagonalized as follows:
\begin{eqnarray*}
\widehat{U}(e^{i \theta})
=
\widehat{V}(e^{i \theta})
\left(
\begin{array}{cc}
\lambda_1(e^{i \theta}) & 0 \\
0 & \lambda_2(e^{i \theta})
\end{array}
\right)
\widehat{V}(e^{i \theta})^*.
\end{eqnarray*}
There exists a real-valued function 
$h \colon \mathbb{T} \to \mathbb{R}$ satisfying
\[
\exp(i h(e^{i \theta})) = \lambda_1(e^{i \theta}), \quad
\exp(- i h(e^{i \theta})) = \overline{\lambda_1(e^{i \theta})} 
= \lambda_2(e^{i \theta}).
\]
The walk can be obtained by restricting a continuous-time analytic homogeneous quantum walk $t \mapsto U^{(t)}$ whose inverse Fourier transform is
\begin{eqnarray*}
\widehat{U^{(t)}}(e^{i \theta})
=
\widehat{V}(e^{i \theta})
\left(
\begin{array}{cc}
\exp(i t h(e^{i \theta})) & 0 \\
0 & \exp(- i t h(e^{i \theta}))
\end{array}
\right)
\widehat{V}(e^{i \theta})^*.
\end{eqnarray*}

%The unitary operator $V$ on $\ell_2(\mathbb{Z}) \otimes \mathbb{C}^2$ 
%gives an identification between Hadamard walk and the quantum walk given by 
%${\mathcal F}^{-1}{\lambda_1} \oplus {\mathcal F}^{-1}{\lambda_2} 
%\in 
%\mathrm{C}_\mathrm{red}^*(\mathbb{Z}) \oplus
%\mathrm{C}_\mathrm{red}^*(\mathbb{Z})$.
%The operator $V$ disturbs the locality of the initial unit vectors.
%However, the Fourier series of the entries exponentially decay.
%This means that a kind of locality is preserved.
%
%Let examine the Laurent series of the analytic maps $z \mapsto \widehat{V}(z)$.
%On the annulus
%\[
%\left\{ z \in \mathbb{C} \ 
%\left| \ \dfrac{1 - \sqrt{1 - r^4}}{r^2} < |z| < \dfrac{1 + \sqrt{1 - r^4}}{r^2}
%\right.\right\},\]
%the analytic map $z \mapsto \widehat{V}(z)$ is defined as a single-valued map.
%Note that 
%\[\left(\dfrac{1 + \sqrt{1 - r^4}}{r^2}\right)^{-1} = \dfrac{1 - \sqrt{1 - r^4}}{r^2}.\]
%It follows that the Fourier series of the entries of $\widehat{V}(e^{i \theta})$ decays faster than the two-sided sequence 
%\[\mathbb{Z} \ni n \mapsto \left( \dfrac{1 - \sqrt{1 - r^4}}{r^2} + \epsilon \right)^{|n|},\] 
%for every positive number $\epsilon$.
%
%If the initial unit vector of Hadamard walk is localized on some point in $\mathbb{Z}$,
%then the Fourier series of the corresponding unit vector of the walk 
%${\mathcal F}^{-1}{\lambda_1} \oplus {\mathcal F}^{-1}{\lambda_2} 
%\in 
%\mathrm{C}_\mathrm{red}^*(\mathbb{Z}) \oplus
%\mathrm{C}_\mathrm{red}^*(\mathbb{Z})$ decays faster than
%\[\mathbb{Z} \ni n \mapsto \left( \dfrac{1 - \sqrt{1 - r^4}}{r^2} + \epsilon \right)^{|n|}.\] 
%

\subsection{Another type of $2$-state quantum walk}
\label{subsection: modified Hadamard walk}
In this subsection, we consider a quantum walk defined by
\[U
=
\left(
\begin{array}{cc}
r S_1 & - b S_1 \\
\overline{b} & r
\end{array}
\right), \quad r \in \mathbb{R}, b \in \mathbb{C}, r^2 + |b|^2 = 1.\]
acting on $\ell_2(\mathbb{Z}) \otimes \mathbb{C}^2$.
The weak limit theorem for this walk can be easily deduced from that of the last subsection, but to observe the eigenvalue function, we proceed.

The characteristic function of the inverse Fourier transform $\widehat{U}(z)$ is
\[f(\lambda; z) = \lambda^2 - r \left( z + 1 \right) \lambda + z.\]
Define an analytic function $\lambda_1(\zeta) \colon \mathbb{T} \to \mathbb{T}$ by
\[\lambda_1(e^{i \theta}) 
= r e^{i \theta} \cos \theta + e^{i \theta} \sqrt{1 - r^2 \cos^2 \theta}\]
The characteristic function is expressed as follows:
\[f(\lambda; z) = \prod_{\zeta \colon \zeta^2 = z} (\lambda - \lambda_1(\zeta)).\]
The calculation is straightforward.
Since $\lambda_1(- \zeta) \neq \lambda_1(\zeta)$, the polynomial $f(\lambda) \in \mathcal{Q}_\mathbb{T}[\lambda]$ is irreducible, by Lemma \ref{lemma: iff reducible}.
For the quantum walk $U$,
$((2, \lambda_1))$ is an indecomposable system of eigenvalue functions defined in Subsection \ref{subsection: eigenvalue functions}.
By Proposition \ref{proposition: indecomposable QW}, the quantum walk is indecomposable.
The winding number of the eigenvalue function $\lambda_1$ is $1$.
By Theorem \ref{theorem: realization by CTQW}, 
the quantum walk is not a restriction of a continuous-time analytic homogeneous quantum walk.

\subsection{$3$-state Grover walk}
\label{subsection: $3$-state Grover walk}

The $3$-state Grover walk is given by 
\[
U = 
\frac{1}{3}
\left(
\begin{array}{ccc}
- S_{-1} & 2 S_{-1} & 2 S_{-1}\\
2 & -1 & 2\\
2 S_1 & 2 S_1 & -S_1
\end{array}
\right).
\]
The weak limit theorem for this walk was given in \cite{InuiKonnoSegawa}.
The characteristic function of the inverse Fourier transform $\widehat{U}(z)$ is
\begin{eqnarray*}
f(\lambda; z) 
&=& \lambda^3 + \dfrac{1}{3} (z + 1 + z^{-1}) \lambda^2 
- \dfrac{1}{3} (z + 1 + z^{-1}) \lambda - 1\\
f(\lambda; e^{i \theta})
&=& 
\left( \lambda^2 + 2 \dfrac{2 + \cos \theta}{3} \lambda + 1 \right) (\lambda - 1).
\end{eqnarray*}
The eigenvalues of $\widehat{U} \left( e^{i \theta} \right)$ are
\[\quad \lambda_1 = - \dfrac{2 + \cos \theta}{3} 
\pm i \sqrt{1 - \left( \dfrac{2 + \cos \theta}{3} \right)^2}, \quad \lambda_2 = 1.\]
Let us observe the first eigenvalues $\lambda_1$.
The eigenvalues do not define single-valued analytic functions.
The eigenvalues $\lambda_1$ of $\widehat{U} \left( e^{2 i \theta} \right)$ are
\[\quad \lambda_1 = - \dfrac{2 + \cos 2 \theta}{3} 
\pm \dfrac{2i}{3} \sin \theta \sqrt{3 - \sin \theta ^2}.\]
If we add $\pi$ to $\theta$, then we get the other root.
We choose
\begin{eqnarray*}
- \dfrac{2 + \cos 2 \theta}{3} + \dfrac{2}{3} i \sin \theta \sqrt{3 - \sin^2 \theta},
\end{eqnarray*}
as the definition of the eigenvalue function $\lambda_1(e^{i \theta})$. 
The eigenvalues of $\widehat{U} \left( z \right)$ are
\[\{\lambda_1(\zeta) \ |\ \zeta^2 = z\} \cup \{1\}.\]
The pair $((2, \lambda_1), (1, 1))$ is an indecomposable system of eigenvalues.
The characteristic function is expressed as follows:
\[f(\lambda; z) = (\lambda - 1) \cdot \prod_{\zeta \colon \zeta^2 = z} (\lambda - \lambda_1(\zeta)).\]

By Proposition \ref{proposition: indecomposable QW}, 
the $3$-state Grover walk is decomposable.
The winding numbers of the eigenvalue functions are both $0$.
By Theorem \ref{theorem: realization by CTQW}, 
the walk is a restriction of a continuous-time analytic homogeneous quantum walk.

\bibliographystyle{amsalpha}
\bibliography{structureofQW.bib}

\end{document}